\documentclass[11pt,a4paper]{amsart}

\usepackage{subfigure}
\usepackage[utf8]{inputenc}
\usepackage[usenames,dvipsnames]{xcolor}
\usepackage{amsmath, amssymb, amsthm}
\usepackage{graphicx}
\usepackage{enumerate}
\usepackage{url}
\usepackage[colorlinks=true,linkcolor=Blue,citecolor=Green]{hyperref}
\usepackage[capitalise]{cleveref}
\crefname{enumi}{}{}
\usepackage[margin=5pt]{caption}
\usepackage[norelsize,boxed,linesnumbered]{algorithm2e}
\SetAlCapSkip{5pt}
\SetAlCapFnt{\small}
\SetAlCapSty{textnormal}
\SetAlCapNameFnt{\small}

\crefname{algocfline}{Algorithm}{Algorithms}

\newcommand{\R}{\mathbb R}

\newcommand{\N}{\mathbb N}

\newcommand{\Ep}{\mathfrak{E}}

\renewcommand{\emptyset}{\varnothing}

\newtheorem{theorem}{Theorem}[section]
\newtheorem{lemma}[theorem]{Lemma}
\newtheorem{proposition}[theorem]{Proposition}

\newtheorem{corollary}[theorem]{Corollary}
\newtheorem{conjecture}[theorem]{Conjecture}
\theoremstyle{definition}
\newtheorem{definition}[theorem]{Definition}

\DeclareMathOperator{\TT}{\mathcal{T}} 
\DeclareMathOperator{\h}{\mathcal{H}} 
\DeclareMathOperator{\lex}{\mathcal{H}} 
 
\DeclareMathOperator{\lexo}{\mathcal{L}}

\DeclareMathOperator{\EP}{\mathrm{EdgePairs}} 

\newcommand{\scalprod}[2]{\left\langle #1,#2 \right\rangle}

\DeclareMathOperator{\E}{\mathcal{E}}

\begin{document}

\title[Bottleneck Partial-Matching Voronoi Diagrams]{Bottleneck Partial-Matching Voronoi Diagrams and Applications}

\author{Matthias Henze}
\address{Institut f\"ur Informatik, Freie Universit\"at Berlin, Takustrasse 9, 14195 Berlin, Germany}
\email{matthias.henze@fu-berlin.de}

\author{Rafel Jaume}
\address{Institut f\"ur Informatik, Freie Universit\"at Berlin, Takustrasse 9, 14195 Berlin, Germany}
\email{jaume@inf.fu-berlin.de}

\thanks{Research by the first author is supported by the ESF EUROCORES program EuroGIGA-VORONOI, (DFG): RO 2338/5-1, and by the second author by ``Obra Social la Caixa'' and DAAD}

\begin{abstract}
Given two point sets in the plane, we study the minimization of the bottleneck distance between a point set $B$ and an equally-sized subset of a point set $A$ under translations.
We relate this problem to a Voronoi-type diagram and derive polynomial bounds for its complexity that are optimal in the size of~$A$.
We devise efficient algorithms for the construction of such a diagram and its lexicographic variant, which generalize to higher dimensions. 
We use the diagram to find an optimal bottleneck matching under translations, to compute a connecting path of minimum bottleneck cost between two positions of~$B$, and to determine the maximum bottleneck cost in a convex polygon.
\end{abstract}

\maketitle

% # # # # # # # # # # # # # # # # # # # # # # # # # # # # # # #
\section{Introduction}
% # # # # # # # # # # # # # # # # # # # # # # # # # # # # # # #

Applications often demand algorithms to find an occurrence of a point pattern in a given cloud of points. 
Using a suitable cost function, it is common to define a similarity measure between the pattern and the point cloud as the minimum cost among the images of the pattern under a set of allowed transformations. 
One of the most studied similarity measures between finite point sets~$A$ and~$B$ in $\R^d$ is the \emph{directed Hausdorff distance}, which is the maximum of the (Euclidean) distances from each point in~$B$ to its nearest neighbor in~$A$.
For some applications in robotics and pattern recognition, it is required that each point of the smaller set is matched to a distinct point in the bigger one. 
The resulting distance is called the \emph{bottleneck distance} and was introduced for equally-sized sets in~\cite{Alt} as 
\[\Delta(B,A)=\underset{\sigma:B \hookrightarrow A} \min \; \underset{b \in B} \max \; \Vert b-\sigma(b) \Vert ,\]
where $\|\cdot\|$ denotes the Euclidean norm and the minimum is taken over all injections from~$B$ into~$A$.
In contrast to the directed Hausdorff distance, the bottleneck distance has the advantage of being symmetric for equally-sized sets.
On the other hand, it is harder to compute, since the points cannot be regarded independently. 
Note that there might be several matchings that minimize the bottleneck distance, even when all the distances between points are distinct.
When this is to be avoided, considering the matching that lexicographically minimizes the distances between matched points helps to break some ties; cf.~\cite{Burkard,Croce,Sokkalingam}.

In this paper, we are interested in a dynamic version of the bottleneck distance.
More precisely, we want to efficiently compute, among all translated copies of $B$ with respect to~$A$, one attaining the minimum bottleneck distance; that is, $\min_{t \in \R^d}\Delta (B+t,A)$.
This problem will be called \emph{bottleneck partial-matching under translations}.
It was introduced for equally-sized point sets in the plane by Alt, Mehlhorn, Wagener \& Welzl~\cite{Alt}, who gave an algorithm running in $O(n^6 \log n )$ time for point sets of size $n$. 
Their bound was later improved to $O(n^5 \log^2 n)$ by Efrat, Itai \& Katz~\cite{Efrat}.

To the best of our knowledge, bottleneck matching under translations has not been studied with the focus on algorithms whose complexity is sensitive to the size of the smaller set.
In order to do so, we associate Voronoi-type diagrams to the problem, which we call \emph{bottleneck diagrams} and \emph{lex-bottleneck diagrams}, respectively.
This follows an idea of Rote~\cite{Rote2010} who partitioned the space of translations according to the (partial) matching that minimizes the least-squares distance between translated copies of~$B$ and~$A$ (cf.~\cite{Henze,ESA,ESA_journal} for follow-up studies).
Our bottleneck diagrams partition~$\R^d$ into polyhedral cells that correspond to locally-optimal (lexicographic-)bottleneck matchings. 

Our motivation to investigate these diagrams does not restrict to solving the bottleneck partial-matching problem under translations only.
We moreover aim to provide a structure that may be either used for preprocessing or may be adjusted towards other problems that are based on the computation of the bottleneck distance in various translated positions of the point sets.
The applications at the end of the paper exemplify this utility of the bottleneck diagrams.

A non-archival abstract containing parts of our studies appeared in~\cite{ISAAC}.

% results and applications
\subsubsection*{Our Results}

In \cref{sect:diagrams}, we formally introduce the Voronoi-type diagrams before investigating their basic properties and combinatorial complexity.
It turns out that there exists a \mbox{lex-}bottleneck diagram (and, hence, a bottleneck diagram) of complexity $O(n^2k^6)$ for any given planar point sets $A,B\subset\R^2$ with $k=|B|\leq|A|=n$ (see \cref{thm:bot}), and that this bound cannot be improved with respect to the size of~$A$.
For point sets $A,B\subset\R^d$ of higher dimensions we obtain in \cref{cor:firstbound} that there is a lex-bottleneck diagram of complexity $O(n^{2d}k^{2d})$.
Based on this complexity result, we devise algorithms in~\cref{sect:construction} that construct these polyhedral subdivisions of~$\R^d$ and at the same time compute a (lexicographic-)bottleneck matching for each of the cells of the subdivision, which is necessary to solve the bottleneck matching problem under translations.
This is achieved with an overhead of $O(k^2)$ for the bottleneck diagrams, and $O(k^4)$ for the lexicographic variant (see \cref{thm:constr,thm:constrHD}).
The matching problem under translations can then be solved for the bottleneck case in time $O(n^2k^8)$, and for the lexicographic variant in time $O(n^2k^{10})$, if the point sets are planar (see \cref{thm:optmatch}).
In higher dimensions the time bounds are $O(n^{2d}k^{2d+2})$ and $O(n^{2d}k^{2d+4})$, respectively.
Finally, \cref{thm:safpath,thm:coverrad} show how we can use the bottleneck diagrams to efficiently compute a path of minimum bottleneck cost between two positions of a pattern in the plane, or how to determine what we call the cover radius of a polygon.

% Comparison to known methods
\subsubsection*{Comparison to previous work}

Although neither Alt et al.~\cite{Alt} nor Efrat et al.~\cite{Efrat} consider the bottleneck matching problem for different-sized point sets, their methods can be adapted to this situation without major difficulties.
In \ref{appendix}, we elaborate on such an analysis of their algorithms and derive the time bounds $O(n^3 k^3 \log n)$ and $O(n^2 k^3 \log^2 n)$, respectively, where $k$ is the size of the smaller set and~$n$ the size of the bigger one.
This shows that the adapted algorithm by Efrat~et~al. outperforms our procedure, that runs in $O(n^2k^8)$, already for fairly small values of $k$, in particular for any $k = \Omega(\log^{2/5} n)$.
Still, the use of bottleneck diagrams is conceptually different from previous methods, and has the advantage of being applicable to solve the translative matching problem in any dimension and, moreover, with respect to the lexicographic bottleneck cost.
No exact algorithms were known for higher dimensions previously, however, there exist approximation algorithms for the bottleneck matching problem (see~\cite{Efrat}).

% Organization of the paper
\subsubsection*{Organization of the paper}

In the next section, we introduce (lexicographic) bottleneck matchings between two finite point sets and investigate corresponding minimization diagrams and their properties.
After these basics, we define our main objects of study, bottleneck partial-matching Voronoi diagrams, and analyze their combinatorial complexity in \cref{sect:diagrams}, before addressing construction algorithms for these structures in \cref{sect:construction}.
Finally, in \cref{sect:applications}, we apply the bottleneck diagrams to solve the bottleneck partial-matching problem and related questions.

% # # # # # # # # # # # # # # # # # # # # # # # # # # # # # # #
\section{Bottleneck and Lexicographic Bottleneck Matchings}
\label{sect:botmatchings}
% # # # # # # # # # # # # # # # # # # # # # # # # # # # # # # #

In this section, we introduce bottleneck matchings and discuss the minimization diagram corresponding to the bottleneck partial-matching problem under translation.
The issues we encounter explain our approach to the definition of the bottleneck diagrams in \cref{sect:diagrams}.

Throughout the paper, we assume that we are given two point sets $A,B \subset \R^d$ with $k=|B| \leq |A|=n$ and that~$B$ is allowed to be translated.
We use the term \emph{edge} for a pair of points $(a,b) \in A \times B$ and denote it by $ab$ for short.
The \emph{length} of the edge $ab$ is defined as the Euclidean distance $\|b-a\|$.
In this context, we identify every injection of $B$ into $A$ with the \emph{matching}, i.e., the set of edges, it induces.
The cost of such a matching varies according to a parameter representing the position of the point set~$B$.

\begin{figure}[t]
\centering
\hfill
\subfigure[Two point sets.]{\includegraphics[scale=.8,page=5]{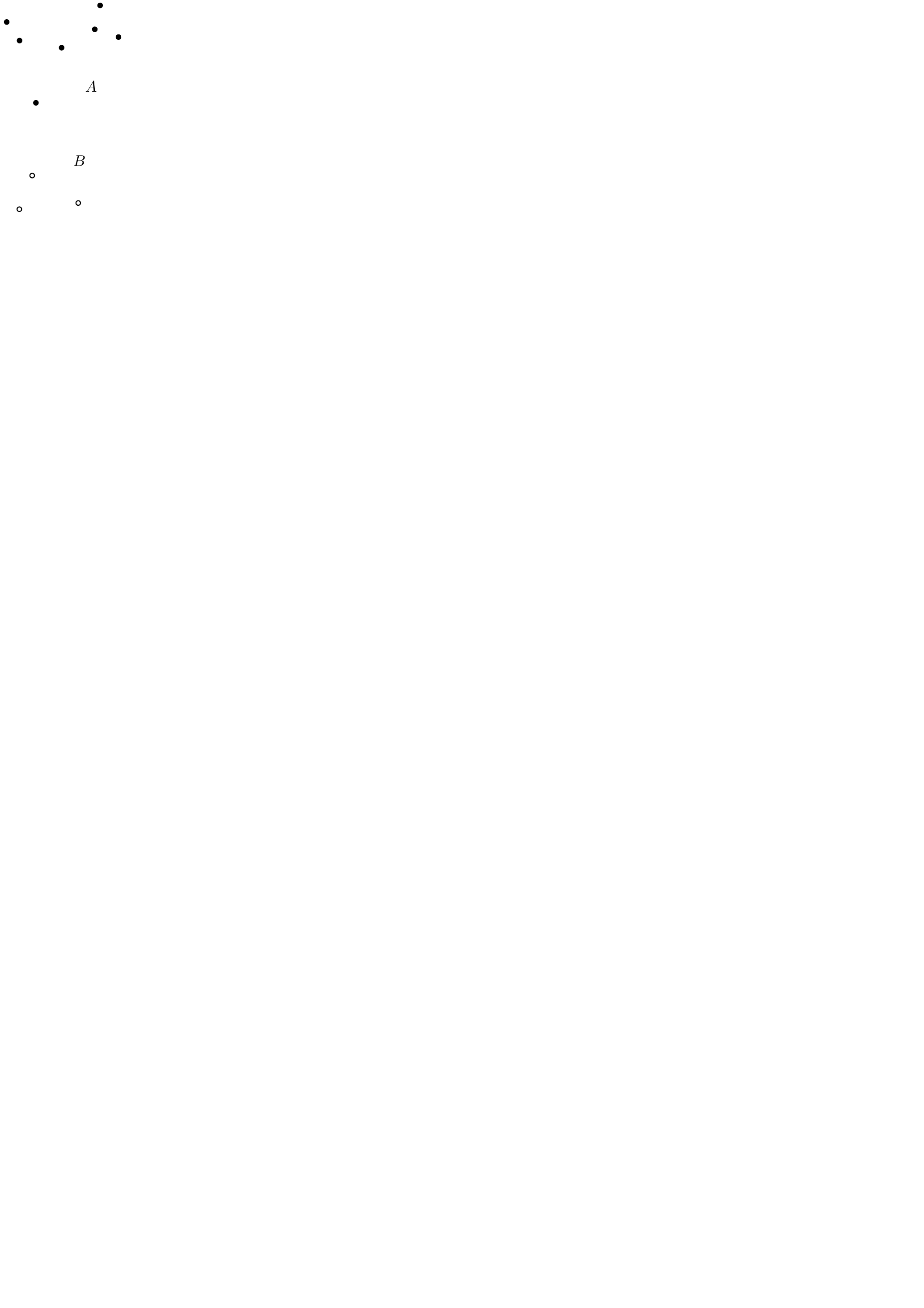} \label{fig:pointset}}
\hfill
\subfigure[The minimization diagram of $\E$.]{\includegraphics[scale=.8,page=2]{botdiagram}\label{fig:botdiagram}}
\hfill
\subfigure[A coarsening of a bottleneck diagram of $A$ and $B$.]{\includegraphics[scale=.8,page=3]{botdiagram}\label{fig:lexdiagram}}
\hfill\,
\caption{The minimization diagram of $\E$ and a coarsening of a bottleneck diagram for a pair of point sets.}
\end{figure}

\begin{definition}\label{def:bot}
Let $A, B \subset \R^d$ be finite point sets with $|B| \leq |A|$.
A \emph{bottleneck matching} for~$A$ and $B$ is a matching that minimizes 
\[ f(\sigma) = \underset{b \in B} \max \; \| b-\sigma(b) \|  \text{ among all the matchings } \sigma:B \hookrightarrow A. \]
The \emph{bottleneck cost} of a matching $\sigma:B \hookrightarrow A$ is the function $f_\sigma:\R^d\to\R_{\geq0}$ defined as
\[ f_\sigma(t)=\underset{b \in B} \max \; \| b+t-\sigma(b) \|^2, \textrm{ for all } t \in \R^d. \]
The \emph{bottleneck value function} $\E:\R^d\to\R_{\geq0}$ is defined by
\[ \E(t)= \underset{\sigma: B \hookrightarrow A} \min \; f_\sigma(t), \textrm{ for all } t \in \R^d. \]
\end{definition}

Note that a bottleneck matching is defined in terms of the Euclidean distance while the functions $f_\sigma$ and $\E$ depend on the square of this value.
This squaring is harmless and will be convenient later on. 

By definition, the function~$\E$ is piecewise quadratic and induces a partition of~$\R^d$ into polyhedral regions which is usually called the \emph{minimization diagram} of~$\E$.
\cref{fig:pointset,fig:botdiagram} show a pair of planar point sets and the corresponding minimization diagram whose regions are colored according to the edges attaining the bottleneck value. 
Note that some regions are not convex and some are even disconnected. 
More precisely, the red and the blue regions consist of two connected components. 
The possible non-convexity of these regions is more concisely illustrated in \cref{Fig:nonconvex}.
The pictured disks certify that the drawn edges are the ones attaining the bottleneck value for the three aligned positions of the small point set. 

\begin{figure}[t]
\centering
 \includegraphics[page=7]{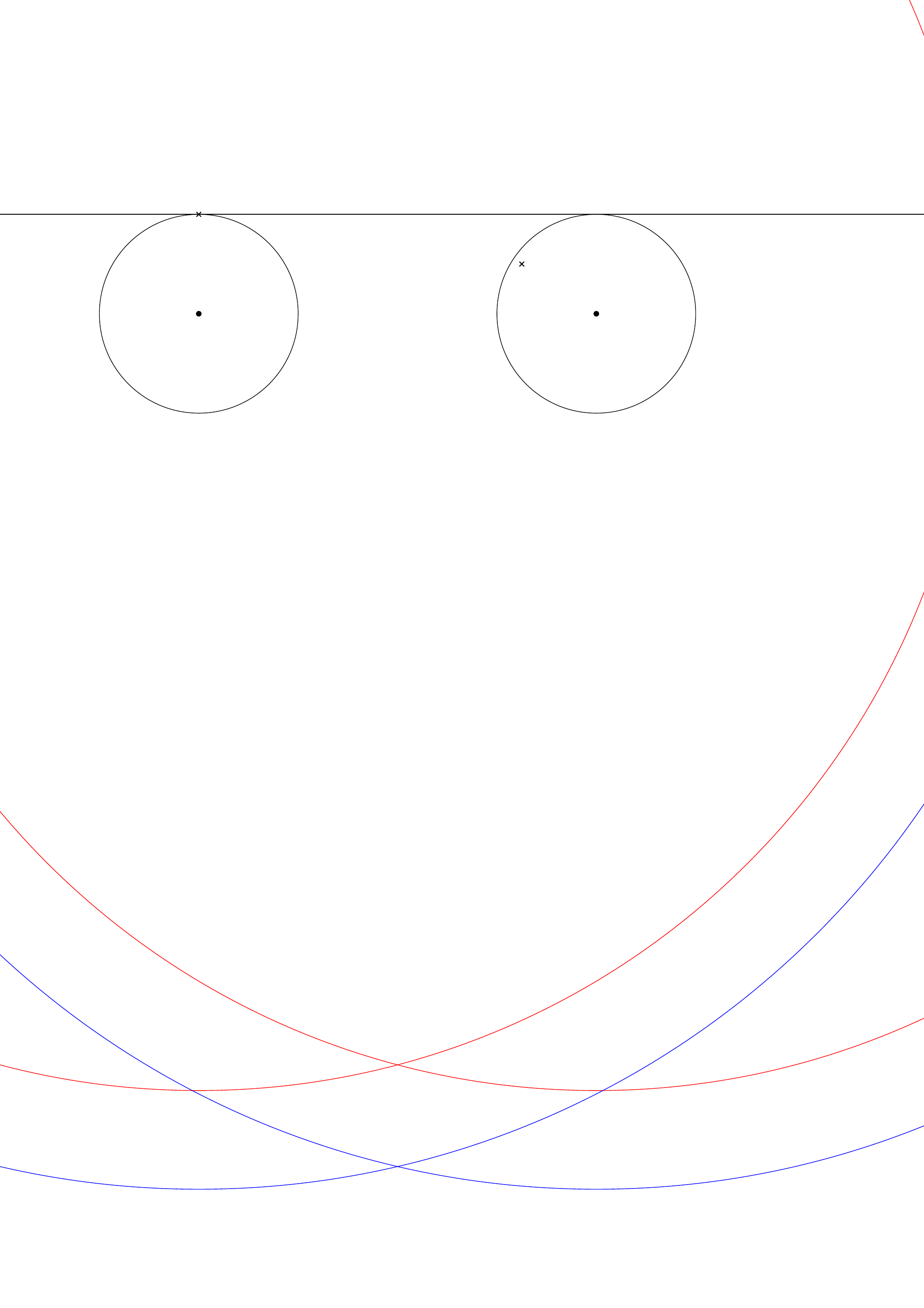}
 \caption{Three positions of a pair of point sets and a bottleneck matching for each of them showing that the minimization diagram of the corresponding $\E$ function has a non-convex region.}
 \label{Fig:nonconvex}
\end{figure}

Moreover, the leftmost and the rightmost matchings in \cref{Fig:nonconvex} are the only optimal matchings for the respective positions of the small point set.
On the other hand, they have the same longest edge, i.e., the quadratic functions in the corresponding regions of the minimization diagram coincide.
Note in addition that, disregarding the longest edge, these two matchings have disjoint sets of edges.
Conversely, it can happen that a matching is the unique bottleneck matching for two open sets of positions contained in different regions of the minimization diagram of~$\E$.
An instance of this situation is illustrated in~\cref{fig:bottleneckEpslon}; remember that different longest edges in the matching correspond to different regions.

\begin{figure}[ht]
\centering
\includegraphics[page=14]{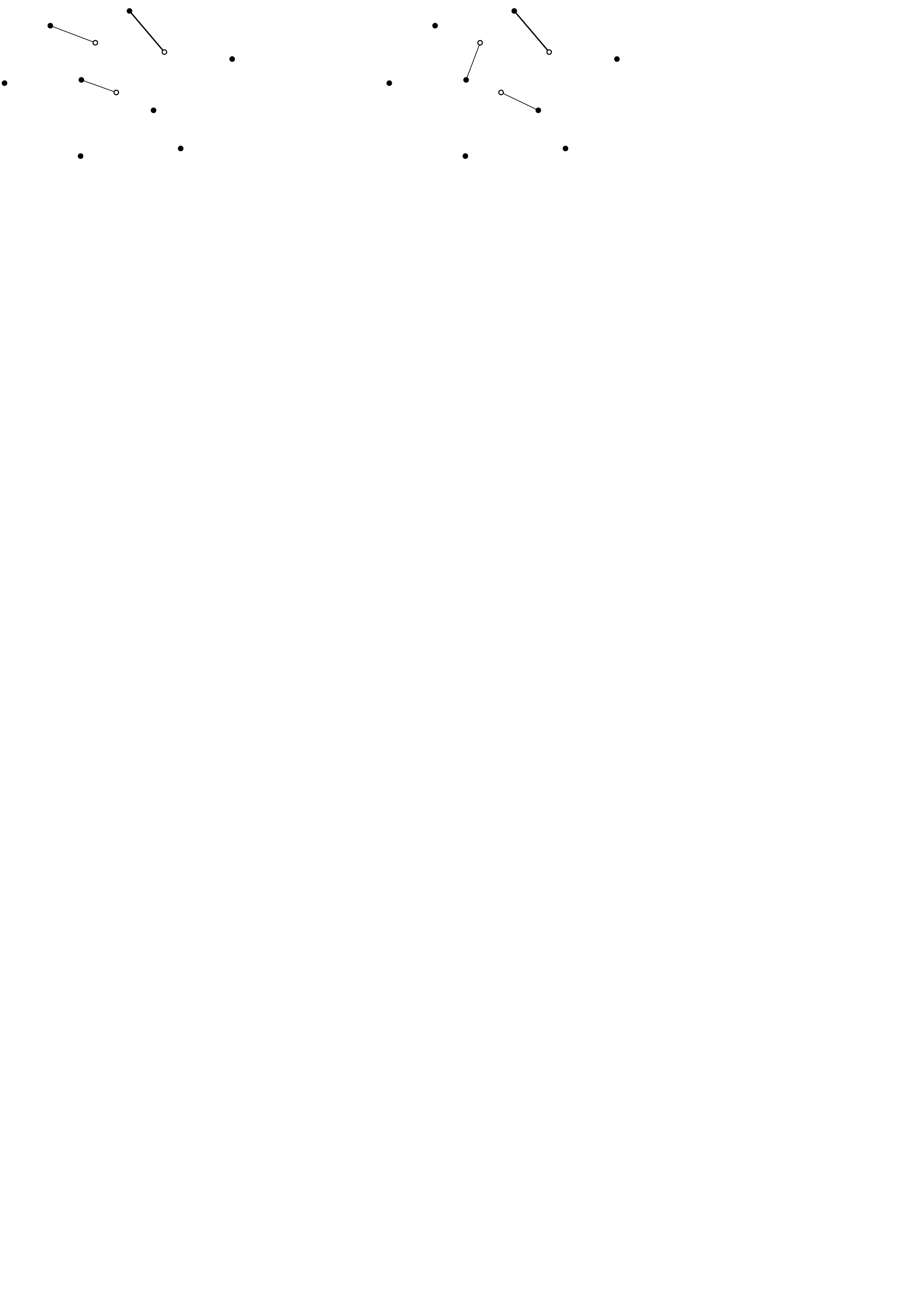}
\caption{A bottleneck matching with different longest edges in different positions of $B$.}
\label{fig:bottleneckEpslon}
\end{figure}

% For some applications, it is enough to find a minimum of the function~$\E$.
% For a fixed parameter $t \in \R^d$, a bottleneck matching for $B+t$ and $A$ and, hence, the value of $\E(t)$, can be computed applying the algorithms described in \cref{subsect:notbipgraphs}. 
% A global minimum of $\E$ can be found for equally-sized point sets of size~$n$ in $O(n^5 \log^2 n)$ time applying the algorithm in \cite{Efrat}. 
% However, we are interested in finding, for every fixed position of the point set $B$, a matching minimizing the bottleneck cost. 
% To this end, we would like to define a Voronoi-type diagram that allows us to retrieve such a matching for each of its regions. 
% Although the minimization diagram of $\E$ would be a natural candidate to approach this problem, the previous observations imply that it is in general neither a coarsening nor a refinement of such a diagram.
% This issue is resolved in \cref{sect:diagrams}. \todo{I don't understand why this paragraph is here. I thought we had already motivated and explained the structure of the paper...}

Another observation is that, even for a fixed position of the point sets, there are in general many bottleneck matchings. 
\cref{fig:bottleneck2} shows an example that can be easily generalized to show that a set of $k$ points and a set of $n \geq k$ points can have $(n-1)!/(n-k)!$ different bottleneck matchings. 
Indeed, in the depicted situation only the edge between the points~$b_0$ and $a_0$ is fixed for a matching to be bottleneck, the remaining points can be matched arbitrarily.

\begin{figure}[ht]
\centering
\includegraphics[scale=.8,page=2]{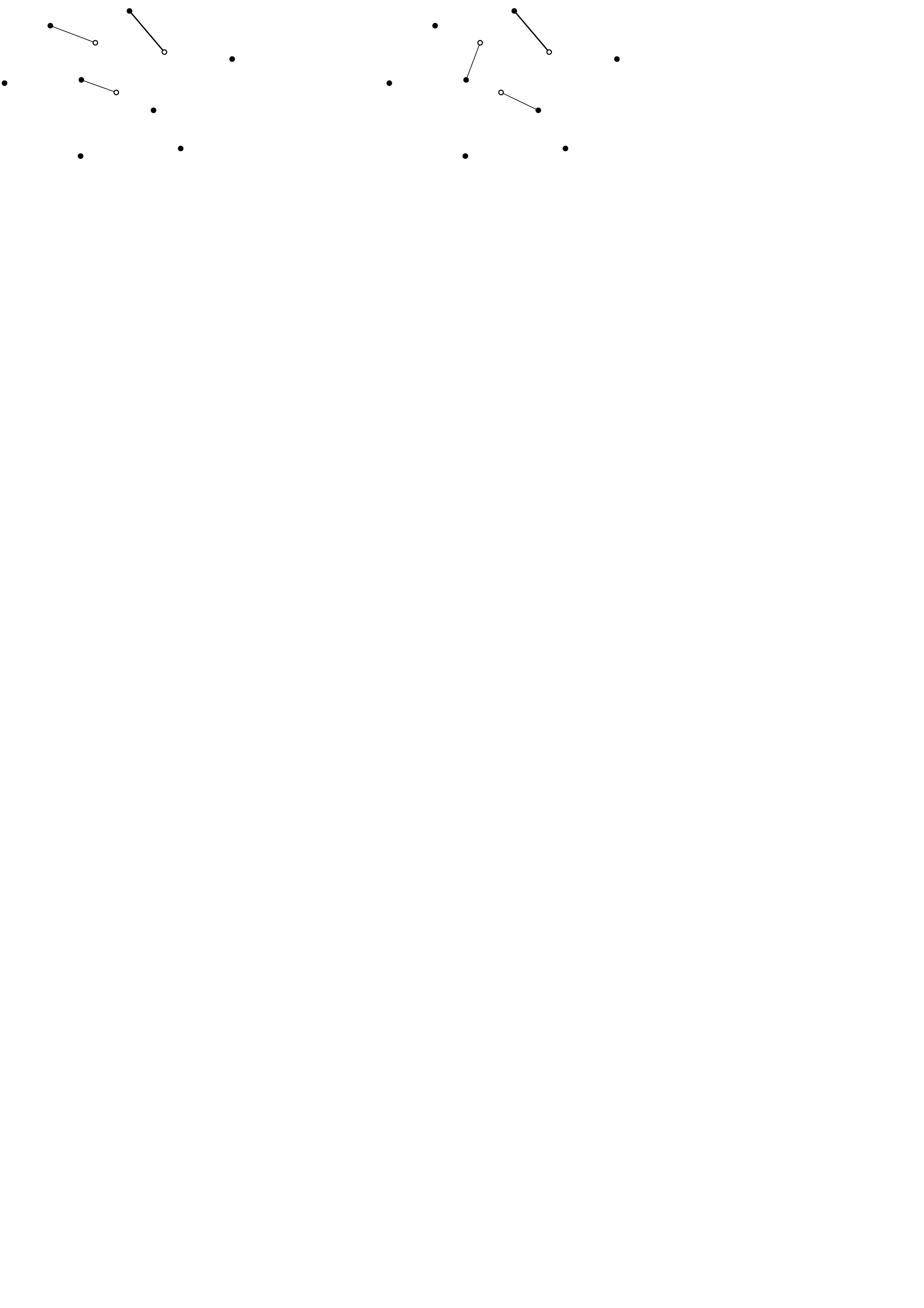}
\caption{A pair of point sets with many bottleneck matchings.}
\label{fig:bottleneck2}
\end{figure}

Moreover, the bottleneck matchings for the point sets in \cref{fig:bottleneck2} all have the same longest edge, and they remain bottleneck matchings if one of the point sets is translated in an open neighborhood of its current position.
However, there may be different edges that are the longest edge of bottleneck matchings everywhere in a neighborhood of a fixed position, as illustrated in \cref{fig:bottleneck3}.
Just like in the previous figures, the bold edges attain the bottleneck value.

A common way to break ties between bottleneck matchings for a given position in order to be more sensitive to the geometry of the point sets is to consider a lexicographic version of bottleneck matchings.
That is, among the matchings whose longest edge is as short as possible, consider those whose second longest edge is as short as possible, and so on. 
For the precise definition, we recall that the \emph{lexicographic order} on $\R^k$ is the total order induced by the relation $(x_1,\dots,x_k) \prec (y_1,\dots,y_k)$ if and only if there exists an $m \in [k]$ such that $x_i=y_i$ for all $i<m$, and $x_m < y_m$.
We write $x \preceq y$ if $ x \prec y$ or $x=y$. 

\begin{figure}
\centering
\includegraphics[scale=.8,page=16]{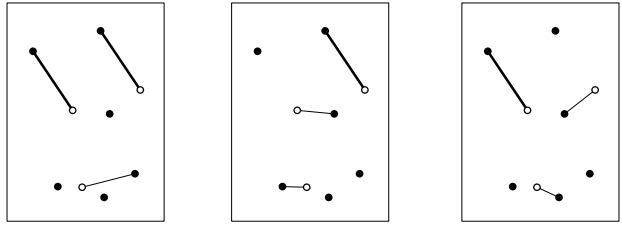}
\caption{Bottleneck matchings with different longest edges.}
\label{fig:bottleneck3}
\end{figure}

\begin{definition}
Let $A,B \subset \R^d$ be two finite point sets with $k=|B| \leq |A|$.
The \emph{lex-bottleneck cost} of a matching $\sigma:B \hookrightarrow A$ is the function $g_\sigma: \R^d \to \R^k$ where the $i$-th coordinate of $ g_\sigma(t)$ corresponds to the length of the $i$-th longest edge of $\sigma$ for $B+t$ and $A$. 
A \emph{lex-bottleneck matching} for $A$ and $B$ is a matching~$\pi$ such that $g_\pi(0) \preceq g_\sigma(0)$ for every other matching $\sigma$. 
\end{definition}

Note that a lex-bottleneck matching is a bottleneck matching as well. 
Although this definition certainly breaks some ties, we see in the next section that it does not guarantee uniqueness (not even in the complement of a nowhere-dense set).

% # # # # # # # # # # # # # # # # # # # # # # # # # # # # # # #
\section{Bottleneck Partial-Matching Voronoi Diagrams}
\label{sect:diagrams}
% # # # # # # # # # # # # # # # # # # # # # # # # # # # # # # #

In this part, we define Voronoi-type diagrams associated with the (lexicographic) bottleneck partial-matching problem under translation.
We discuss basic properties of these structures and derive reasonable bounds on their combinatorial complexity.

\subsection{Definitions and Basic Properties}

As we have seen in the previous section, we face some difficulties when we define a Voronoi-type structure for the bottleneck partial-matching problem.
One of them is the existence of open sets of translations for which neither the bottleneck matching nor the longest edge are uniquely determined. 
The problem of the non-uniqueness of the longest edge can be solved by requiring the point set to be in an ad hoc general position.
The non-uniqueness of the matching may be attacked by considering the lexicographic variant, but even in general position it might fail even in open balls. 
Nevertheless, it is of interest to study the original version as well in order to solve problems like the ones in~\cref{sect:applications} or to explore the minimization diagram of~$\E$.

As expected, the Voronoi-type diagrams we are going to study can be required to be given in form of a polyhedral complex.
% However, in addition to the non-uniqueness issues, we will also show that the region in which a matching is optimal may be non-convex.
% Furthermore, for the lexicographic variant, these regions may be neither open nor closed sets.
This facilitates traversing the partition or optimizing in a region, operations that are often required in related problems. 
%In addition, this demand allows for a convenient presentation of our upper bounds and embraces the pathological examples.
In what follows, full-dimensional faces of a polyhedral complex are called \emph{cells}.

\begin{definition}
Let $A,B \subset \R^d$ be finite point sets with $|B| \leq |A|$.
A \emph{bottleneck partial-matching Voronoi diagram} (or \emph{bottleneck diagram}, for short) for $A$ and~$B$ is a polyhedral complex $\TT$ covering $\R^d$ and such that for every cell $C$ of $\TT$ there is at least one matching $\pi_C:B \hookrightarrow A$ such that $f_{\pi_C}(t) \leq f_\sigma(t)$ for all $t \in C$ and all matchings $\sigma:B \hookrightarrow A$. 
A \emph{bottleneck labeling} of a bottleneck diagram is a function mapping each cell to one such matching.
\end{definition}

A coarsening of a bottleneck diagram of the point set in~\cref{fig:pointset} is displayed in \cref{fig:lexdiagram}, where cells with the same label have the same color.
Note that for $B=\{b\}$ the Voronoi diagram of $A-b$ is a bottleneck diagram. 
A diagram for the lexicographic version of the bottleneck cost is defined analogously.

\begin{definition}
Let $A,B \subset \R^d$ be two finite point sets with $|B| \leq |A|$.
A \emph{lex-bottleneck partial-matching Voronoi diagram} (or \emph{lex-bottleneck diagram}, for short) for $A$ and $B$ is a polyhedral complex $\TT$ covering $\R^d$ and such that for every face~$c$ of $\TT$ there is at least one matching $\pi_c:B \hookrightarrow A$ such that $g_{\pi_c}(t) \preceq g_\sigma(t)$ for all~$t$ interior to $c$ and all matchings $\sigma:B \hookrightarrow A$. 
A \emph{lex-bottleneck labeling} of a lex-bottleneck diagram is a function mapping each face to one such matching. 
\end{definition}

\begin{figure}[ht]
\centering
\includegraphics[page=6]{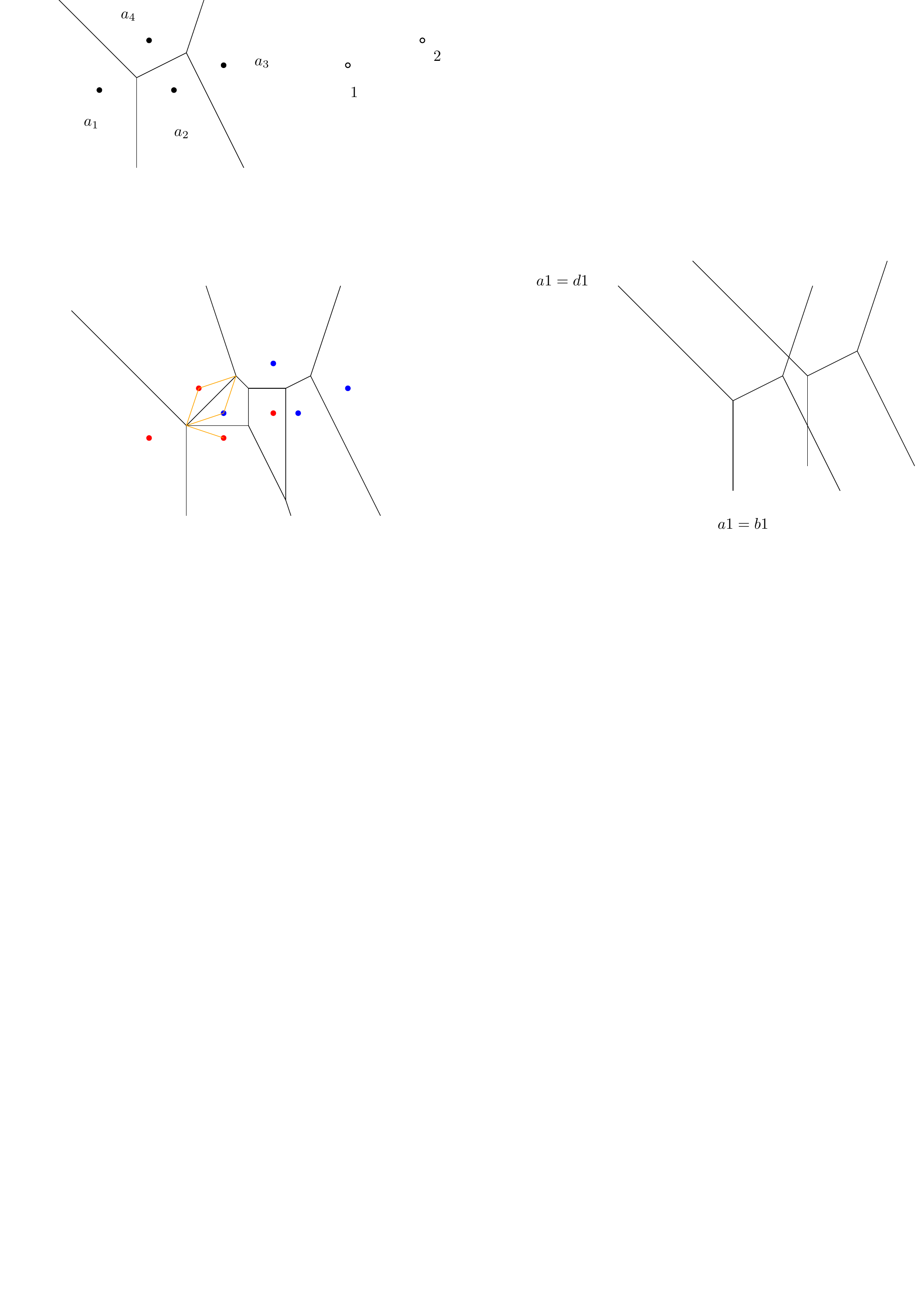}
\caption{A pair of point sets and a lex-bottleneck labeling of the neighborhood of the represented position.}
\label{fig:lexdeg}
\end{figure}

Note that a bottleneck matching in a lower-dimensional face of a bottleneck diagram is given by the labeling of a cell containing it, i.e., the label for a cell is valid everywhere in the cell.
This is not the case for a lex-bottleneck diagram, which is the reason why a label for each face is required. 
\cref{fig:lexdeg} shows an example.
On the left of the figure, the point sets $A$ and $B$ are displayed, for which the blue and the red matchings are both lex-bottleneck matchings. 
On the right, a small neighborhood in a lex-bottleneck diagram around the point $t \in \R^2$ corresponding to the depicted position of the point sets is represented. 
If the point set $B$ (the white dots) is moved infinitesimally to the right, only the blue matching is a lex-bottleneck matching, whereas if it is moved infinitesimally to the left, only the red matching remains lex-bottleneck. 
This forces the cyan and orange regions to be labeled with the blue and red matchings, respectively. 
However, if $B$ is vertically translated by an infinitesimal amount, the longest blue edge and the longest red edge have the same length. 
In addition, the second longest red edge is longer than the second longest blue edge if the perturbation is upwards, while it is shorter if the perturbation is downwards, forcing the blue and red regions to be labeled accordingly. 
For the depicted position (corresponding to the purple point), both matchings are equally good, since the respective shortest edges are equally long.
%%Nonetheless, perturbing infinitesimally the point of $A$ matched by the shortest edges, we can make the lex-bottleneck matching in the purple position to be unique: the red or the blue one. 

Since any lex-bottleneck diagram is a bottleneck diagram, we prove some properties for the first, more restrictive type. 
However, later on we devise algorithms that compute a bottleneck labeling more efficiently than a lex-bottleneck one. 
For our applications in~\cref{sect:applications} the first type of labeling is enough. 

\begin{definition}
Given $x,y,v,z \in \R^d$, let
\begin{align*}
h(x,y,v,z)&=\left\{t\in\R^d:\|y+t-x\|=\|z+t-v\|\right\}\\
&=\left\{ t \in \R^d:  2\scalprod{t}{y-x-(z-v)}= \|z-v\|^2 - \|y-x\|^2 \right\}.
\end{align*}
Given two finite point set $A, B \subset \R^d$, let $\lex(A,B)$ be the arrangement of the hyperplanes $h(a,b,a',b')$, called \emph{bisectors}, for all pairs $a,a' \in A$ and $b,b' \in B$ such that $b-a \ne b'-a'$.
\end{definition}

\begin{proposition}\label{epolyhedral}
The arrangement $\lex(A,B)$ is a lex-bottleneck diagram.
\end{proposition}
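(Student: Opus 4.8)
The plan is to check the two requirements in the definition of a lex-bottleneck diagram in turn. That $\lex(A,B)$ is a polyhedral complex covering $\R^d$ is immediate, since any finite arrangement of hyperplanes is such a complex. The content therefore lies in producing, for each (closed) face $c$ of the arrangement, a matching $\pi_c$ with $g_{\pi_c}(t)\preceq g_\sigma(t)$ for every $t$ in the relative interior of $c$ and every matching $\sigma$. I would simply fix one point $t_0\in\operatorname{relint}(c)$, let $\pi_c$ be a lex-bottleneck matching for the position $B+t_0$ (one exists because there are only finitely many matchings), and argue that this choice does not depend on $t_0$ within $\operatorname{relint}(c)$.

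The crucial point is that the ordering of the edges by length is locally constant on the relative interior of each face. For two edges $ab$ and $a'b'$, the difference $\|b+t-a\|^2-\|b'+t-a'\|^2$ is an affine function of $t$ (the quadratic terms cancel); it is identically zero when $b-a=b'-a'$, and otherwise vanishes exactly on the bisector $h(a,b,a',b')$, one of the hyperplanes of $\lex(A,B)$. Since $c$ is a face of that arrangement, $\operatorname{relint}(c)$ lies either inside this hyperplane or entirely in one of its two open halfspaces, so the sign of the affine function --- hence, by monotonicity of $x\mapsto x^2$ on $\R_{\geq 0}$, the comparison between the lengths $\|b+t-a\|$ and $\|b'+t-a'\|$ --- is the same for all $t\in\operatorname{relint}(c)$. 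Ranging over all pairs of edges, the weak order on $A\times B$ obtained by sorting the edges by length is one and the same for every $t\in\operatorname{relint}(c)$.

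From this I would deduce that the lexicographic comparison of any two matchings is constant on $\operatorname{relint}(c)$. Group the edges into the length classes $C_1\succ C_2\succ\dots\succ C_m$ of this fixed weak order, so that for every $t\in\operatorname{relint}(c)$ all edges of $C_j$ share a common length $\ell_j(t)$ and $\ell_1(t)>\ell_2(t)>\dots>\ell_m(t)$. For a matching $\sigma$, the sorted vector $g_\sigma(t)$ consists of $n_j(\sigma):=|\sigma\cap C_j|$ copies of $\ell_j(t)$ in decreasing order of $j$, and a short induction on $j$ shows that $g_\sigma(t)\preceq g_\tau(t)$ if and only if $(n_1(\sigma),\dots,n_m(\sigma))\preceq(n_1(\tau),\dots,n_m(\tau))$ lexicographically in $\N^m$ --- a condition that does not involve $t$. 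Hence the set of lex-minimal matchings is the same for all $t\in\operatorname{relint}(c)$, and choosing any of its elements as $\pi_c$ completes the argument.

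I expect the only genuinely delicate step to be the bookkeeping of the last paragraph: turning the invariant order on individual edges into an invariant lexicographic order on matchings while correctly handling ties between equally long edges. The geometric input --- affineness of the squared-length differences and the position of an arrangement face relative to each hyperplane --- is routine.
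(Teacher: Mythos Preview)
Your proposal is correct and follows essentially the same line as the paper's proof: both pick a lex-bottleneck matching at an interior point of a face, observe that the squared-length difference of any two edges is affine and vanishes on a bisector of the arrangement, and conclude that the relative ordering of edge lengths---hence the set of lex-optimal matchings---is constant on the relative interior of each face. Your treatment of the ``bookkeeping'' step via length classes and the count vectors $(n_1(\sigma),\dots,n_m(\sigma))$ is more explicit than the paper's, which simply asserts that the lex-bottleneck property depends only on the relative lengths.
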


\begin{proof}
%Every quadratic piece of $\E(t)$ is given by $f_\sigma(t)$ for at least one matching $\sigma$. 
%Even more, such a piece must be induced by the length of at least one edge of $\sigma$. 
The squared length of an edge matching $b$ to $a=\sigma(b)$ is given by
\[ \| b+t-a \|^2= \|t\|^2 + \|b-a\|^2 + 2\scalprod{t}{b-a}. \]
For a pair of edges $ab,a'b' \in A \times B$, the locus of points $t \in \R^d$ for which $\| b+t-a \|^2=\| b'+t-a' \|^2$ is exactly $h(a,b,a',b')$. 
If $b-a \neq b'-a'$, then this set is a hyperplane, otherwise the whole~$\R^d$. 
Let $c$ be a face in $\lex(A,B)$, and let $\pi$ be a lex-bottleneck matching for a point $t_0$ in the relative interior of~$c$. 
A lex-bottleneck matching for the translation $t\in\R^d$ only depends on the relative length of the possible edges in $A\times (B+t)$.
Hence, the matching $\pi$ remains lex-bottleneck for $t$ as well as long as no edge becomes strictly shorter than another edge that was strictly longer for $t_0$. 
By definition of $\lex(A,B)$, this cannot happen in the relative interior of~$c$.
Since an arrangement of hyperplanes is a polyhedral complex, $\lex(A,B)$ is a lex-bottleneck diagram for $A$ and~$B$. 
\end{proof}

Observe that there may be open sets for which two different matchings are lex-bottleneck matchings, as shown in \cref{fig:lexbotcycle}. 
This is because every edge from the red matching can be paired with an edge of the blue matching having the same length for any position of the matching. 
As long as the blue match and the red match of every point in $B$ are its two closest points (as in the three positions represented in the figure), both are lex-bottleneck matchings. 
Nonetheless, we see now that the \emph{matched sets} $\sigma(B)$ of such matchings $\sigma:B\hookrightarrow A$ coincide.

\begin{figure}
\centering
\includegraphics[page=4]{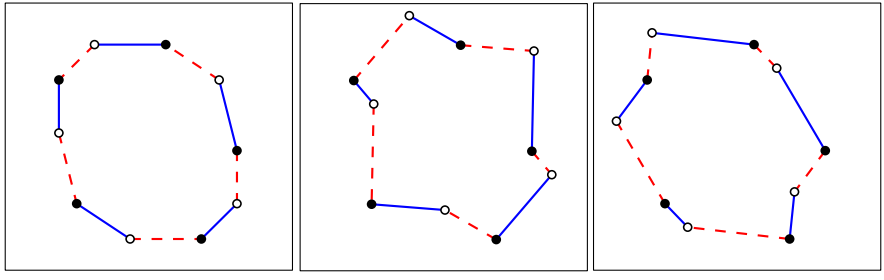}
\caption{Two lex-bottleneck matchings, represented for three positions of $B$.}
\label{fig:lexbotcycle}
\end{figure}

\begin{proposition}
Let $A,B \subset \R^d$ be finite point sets with $|B| \leq |A|$.
If two matchings are lex-bottleneck matchings in an open set $U \subset \R^d$, then they have the same matched set and they have the same lex-bottleneck cost for any $t \in \R^d$. 
In particular, in the interior of a cell of $\lex(A,B)$ there is a unique subset $A' \subseteq A$ that is the matched set of all lex-bottleneck matchings for that cell.
\end{proposition}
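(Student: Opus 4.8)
The plan is to prove two assertions: first, that two matchings which are simultaneously lex-bottleneck on an open set $U$ must have the same matched set $\sigma(B)\subseteq A$; and second, that they then have the same lex-bottleneck cost everywhere on $\R^d$. The second claim will actually follow quickly once we understand the structure of such pairs, so the real work is the first claim.

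First I would fix an open set $U$ on which $\pi$ and $\sigma$ are both lex-bottleneck matchings, and consider the symmetric difference $\pi\triangle\sigma$ (viewing matchings as edge sets). Since both are injections from $B$ into $A$, this symmetric difference decomposes into alternating paths and alternating cycles in the bipartite graph on $A\sqcup B$. An alternating \emph{path} has endpoints that are either unmatched vertices of $A$ or vertices where the degrees differ, and the presence of such a path would mean $\pi$ and $\sigma$ use different vertices of $A$. So the heart of the argument is to rule out alternating paths: I want to show that $\pi\triangle\sigma$ consists only of alternating cycles, which forces $\pi(B)=\sigma(B)$. The key tool is that on the open set $U$, by the argument in the proof of \cref{epolyhedral}, for every edge $e$ of $\pi$ and every edge $e'$ of $\sigma$ the relative order of $\|b+t-a\|$ is locally constant; in particular the multiset of edge lengths of $\pi$ and of $\sigma$ must be \emph{equal} as multisets for all $t\in U$ (otherwise a strict lexicographic comparison would break the tie somewhere in $U$, contradicting that both are lex-bottleneck on all of $U$). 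More is true: sorting both edge sets in decreasing order of length, the $i$-th edges must have the same length on $U$, hence the same length on all of $\R^d$ since $h(a,b,a',b')$ is either a hyperplane (a nowhere-dense set, which can't contain the open set $U$) or all of $\R^d$.

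The crucial step is then: if there were an alternating path $P$ in $\pi\triangle\sigma$, I would exhibit a contradiction with lex-optimality by a local perturbation / exchange argument. Concretely, consider an alternating path $P=b_0\,a_0\,b_1\,a_1\cdots$; performing the exchange along $P$ turns $\pi$ into another valid matching, and the length-multiset equality forces the $\pi$-edges and $\sigma$-edges along $P$ to pair up with equal lengths for all $t$, which via the identity $\|b+t-a\|^2=\|t\|^2+\|b-a\|^2+2\langle t,b-a\rangle$ forces $b-a=b'-a'$ for the paired edges (equal quadratic functions of $t$ with the same leading term means equal linear parts and equal constants, i.e. the difference vectors coincide). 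Tracing the equalities $b_i-a_i=b_{i+1}-a_i$ etc.\ down the path collapses it: it would force two distinct points of $B$ (or of $A$) to coincide, contradicting that $A,B$ are point sets. The same collapsing argument applied to a cycle shows that a cycle in $\pi\triangle\sigma$ can only consist of edges where blue and red difference vectors are \emph{pairwise} equal (the situation of \cref{fig:lexbotcycle}), and in that case the two matchings visit exactly the same vertices of $A$ along the cycle. Combining, $\pi(B)=\sigma(B)$, and since the $i$-th longest edges have equal length globally, $g_\pi\equiv g_\sigma$ on all of $\R^d$. The ``in particular'' statement follows: any two lex-bottleneck matchings valid in the interior of a single cell of $\lex(A,B)$ are lex-bottleneck on that open cell, hence share a matched set, so the matched set is a well-defined subset $A'\subseteq A$ attached to the cell.

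The main obstacle I anticipate is making the ``collapsing'' step fully rigorous for cycles versus paths: I must carefully argue why along an \emph{open} alternating path the equal-length pairing is forced to be the ``obvious'' consecutive one (rather than some other perfect matching between the path's $\pi$-edges and $\sigma$-edges), which is where one uses that the sorted length sequences agree \emph{termwise} together with the fact that on an open set strict inequalities among the finitely many edge-length functions are preserved — so if any two of these quadratics differ, their order is fixed throughout $U$, pinning down which edges can possibly share a length. Once that bookkeeping is set up, turning ``equal length for all $t$'' into ``equal difference vector'' is the routine computation with the displayed quadratic, and the geometric collapse is immediate; so the proof is mostly combinatorial careful-casework rather than hard analysis.
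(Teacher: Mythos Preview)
Your overall approach matches the paper's: decompose the symmetric difference $\pi\oplus\sigma$ into alternating paths and cycles, use an exchange along a component to force the multiset of $\pi$-edge lengths to equal the multiset of $\sigma$-edge lengths on that component, convert ``equal length on an open set'' into ``equal difference vector $b-a$'', and then rule out paths so that only cycles remain and $\pi(B)=\sigma(B)$.

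The gap is in your final step. You propose to show that the equal-length pairing between the $\pi$-edges and the $\sigma$-edges along a path is the ``obvious consecutive'' one, and you flag this as the main obstacle to be handled by casework. That is the wrong target: nothing forces the bijection witnessing multiset equality to be the consecutive one, so the equalities $b_i-a_i=b_{i+1}-a_i$ you want to trace simply need not hold, and the promised casework will not close. The paper avoids this entirely by a summation/telescoping argument that is insensitive to the pairing: along a path $a_1\,b_1\,a_2\,b_2\cdots b_{\ell-1}\,a_\ell$ one has
\[
\sum_{i=1}^{\ell-1}(b_i-a_i)\;-\;\sum_{i=1}^{\ell-1}(b_i-a_{i+1})\;=\;a_\ell-a_1,
\]
and equality of the two multisets of difference vectors makes the left side vanish, forcing $a_1=a_\ell$, which is impossible on a genuine path. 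No identification of the pairing and no casework are needed. (Incidentally, your path template $P=b_0\,a_0\,b_1\,a_1\cdots$ is off: since both matchings are injections defined on all of $B$, every $b\in B$ has degree $0$ or $2$ in $\pi\oplus\sigma$, so the endpoints of any path component lie in $A$.) Once paths are excluded this way, your derivation of $g_\pi\equiv g_\sigma$ on $\R^d$ and the ``in particular'' clause go through exactly as you wrote.
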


\begin{proof}
As argued in the proof of~\cref{epolyhedral}, the set of lex-bottleneck matchings is the same for all the translations interior to a cell of $\lex(A,B)$. 
Let $C$ be a cell of $\lex(A,B)$, $t_0$ be a point interior to $C$, and $\pi$ and $\sigma$ be two different lex-bottleneck matchings for $B+t_0$ and~$A$. 
We regard here the matchings $\pi$ and $\sigma$ as sets of edges on the vertex set $A \cup B$. 
Consider then the symmetric difference of $\pi$ and $\sigma$, i.e., the set of edges that belong to $\pi$ or to $\sigma$ but not to both.
The graph with this set of edges is a collection of (even length) vertex-disjoint paths and cycles, whose edges from $\pi$ alternate with edges from $\sigma$, since the edges of a matching are pairwise-disjoint (see~\cref{subsect:notbipgraphs} for details). 

In addition, since we assume that both matchings are lex-bottleneck matchings, we have that there is a one-to-one correspondence between edges of every path or cycle belonging to $\pi$ and the edges from the same path or cycle belonging to $\sigma$, such that the corresponding edges have the same length.
Indeed, assume for a contradiction that there is no such correspondence for some path or cycle $\gamma$. 
Then, the restriction of one of the matchings, say $\pi$, to $\gamma$ would be better than the restriction of the other (in the lexicographic sense).
The result of replacing in $\sigma$ the edges of $\sigma \cap \gamma$ with the edges of $\pi \cap \gamma$ is a matching and it is lexicographically better than $\sigma$, which contradicts its assumed optimality. 

It follows from the proof of~\cref{epolyhedral}, that the edges $ab$ and $a'b'$ have the same length over an open set if and only if $b-a=b'-a'$. 
Thus, there is no path in the symmetric difference, because following the edges in a path we would arrive to the starting vertex, since every edge is ``cancelled'' by the corresponding edge of the other matching. 
Therefore, the symmetric difference is made exclusively of cycles, and hence the matched sets of $\pi$ and $\sigma$ coincide. 
\end{proof}

\subsection{Complexity of the Bottleneck Diagrams}

The construction in the proof of \cref{epolyhedral} leads immediately to a first bound on the complexity of a lex-bottleneck diagram.

\begin{corollary}\label{cor:firstbound}
For any pair of point sets $A,B \subset \R^d$ with $k=|B|\leq|A|=n$, there is a lex-bottleneck diagram of combinatorial complexity $O(n^{2d}k^{2d})$.
\end{corollary}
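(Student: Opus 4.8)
The plan is to bound the number of hyperplanes in the arrangement $\lex(A,B)$ and then invoke the standard bound on the complexity of a hyperplane arrangement in $\R^d$. Recall from \cref{epolyhedral} that $\lex(A,B)$ is a lex-bottleneck diagram, so it suffices to estimate its combinatorial complexity.

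First I would count the bisectors. A bisector $h(a,b,a',b')$ is determined by a choice of two edges $ab, a'b' \in A \times B$. Since $|A \times B| = nk$, the number of unordered pairs of such edges is $\binom{nk}{2} = O(n^2k^2)$, and each pair contributes at most one hyperplane (none, when $b-a = b'-a'$). Hence $\lex(A,B)$ is an arrangement of $m = O(n^2k^2)$ hyperplanes in $\R^d$.

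Next I would recall the classical fact that an arrangement of $m$ hyperplanes in $\R^d$ has combinatorial complexity (total number of faces of all dimensions) $O(m^d)$; this is the well-known bound obtained, e.g., via the zone theorem or a direct counting argument, and it holds regardless of the configuration of the hyperplanes. Substituting $m = O(n^2k^2)$ gives complexity $O\bigl((n^2k^2)^d\bigr) = O(n^{2d}k^{2d})$. Since $\lex(A,B)$ realizes this bound and is a lex-bottleneck diagram by \cref{epolyhedral}, the claim follows.

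There is no real obstacle here: the only mild point worth mentioning explicitly is that $d$ is treated as a constant, so the implied constant in the $O$-notation may depend on $d$, and that the bound $O(m^d)$ for the complexity of a hyperplane arrangement is a standard result that we cite rather than reprove. The argument is essentially a one-line substitution once \cref{epolyhedral} is in hand, which is why it is stated as a corollary.
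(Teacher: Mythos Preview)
Your proposal is correct and follows essentially the same route as the paper: invoke \cref{epolyhedral} to identify $\lex(A,B)$ as a lex-bottleneck diagram, bound the number of bisectors by $O(n^2k^2)$, and apply the standard $O(m^d)$ complexity bound for hyperplane arrangements. The only cosmetic difference is that the paper writes the hyperplane count as $\binom{n}{2}\binom{k}{2}$ whereas you use $\binom{nk}{2}$; your bound is actually the safer one (it covers bisectors with $a=a'$ or $b=b'$ as well), and both are $O(n^2k^2)$.
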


\begin{proof}
By \cref{epolyhedral} the arrangement $\lex(A,B)$ is a lex-bottleneck diagram for $A$ and~$B$.
It is well-known (cf.~\cite{Edelsbrunner}) that the complexity of an arrangement of~$m$ hyperplanes in $\R^d$ is $O(m^d)$. 
The arrangement $\lex(A,B)$ consists of $ \binom{n}{2}\binom{k}{2}=O(n^2k^2)$ hyperplanes in $\R^d$, and hence the claimed bound follows.
\end{proof}

Our aim is now to show that there are many hyperplanes of $\h(A,B)$ that we can safely ignore.
This allows for an improvement of the above bound for planar point sets.
We need to introduce some necessary notation and collect some auxiliary results before we can present our argument.

First of all, we state some simple results on hyperplane arrangements, the first of which can be found, e.g., in~\cite{Edelsbrunner}, and will be used implicitly later on.

\begin{proposition}
For a finite point set $P \subset \R^d$, consider the linear functions \[h_p(x)=\|p\|^2 -2 \scalprod{x}{p}, \text{ for } p \in P.\]
For a fixed $k \in [|P|]$, the $k$-level of the arrangement of hyperplanes associated to these functions projects onto the order-$k$ Voronoi diagram of $P$. \end{proposition}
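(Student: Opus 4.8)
The plan is to derive everything from the single algebraic identity
\[ \|x-p\|^2 = \|x\|^2 - 2\scalprod{x}{p} + \|p\|^2 = \|x\|^2 + h_p(x), \qquad x\in\R^d,\ p\in P. \]
First I would observe that the summand $\|x\|^2$ is independent of $p$, so that for every fixed $x\in\R^d$ the points of $P$ sorted by increasing Euclidean distance to $x$ are sorted in exactly the same order as the values $h_p(x)$, $p\in P$. In particular, $p\in P$ is the $k$-th nearest point of $P$ to $x$ if and only if $h_p(x)$ is the $k$-th smallest among $\{h_q(x):q\in P\}$, i.e.\ if and only if the point $(x,h_p(x))\in\R^{d+1}$ lies on the $k$-level of the arrangement of the hyperplanes $H_q=\{(x,h_q(x)):x\in\R^d\}$, $q\in P$ (using the convention under which the lower envelope is the $1$-level, that is, a point of $\R^{d+1}$ has level $\ell$ when exactly $\ell$ of the $H_q$ lie weakly below it; this is why $k$ ranges over $[|P|]$).

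Next I would note that the $k$-level is the graph of the piecewise-linear function $x\mapsto\varphi_k(x)$, the $k$-th smallest value of $\{h_q(x)\}_{q\in P}$, hence a section of the vertical projection $\rho\colon\R^{d+1}\to\R^d$ that forgets the last coordinate; in particular $\rho$ is injective on the $k$-level and maps it onto all of $\R^d$. A relatively open $d$-dimensional face of the $k$-level is a maximal connected piece of a single hyperplane $H_p$ on which the set $\{q\in P:h_q(x)<h_p(x)\}$ is a fixed $(k-1)$-subset, and by the first step its $\rho$-image is precisely a maximal connected region of $\R^d$ on which $p$ remains the $k$-th nearest point of $P$, i.e.\ a full-dimensional cell of the order-$k$ Voronoi diagram of $P$. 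Lower-dimensional faces of the $k$-level lie on intersections of several of the $H_q$, hence over points $x$ at which two distances to sites of $P$ coincide, and they project onto the corresponding lower-dimensional faces of that diagram. Thus $\rho$ restricts to a face-to-face bijection between the $k$-level of the arrangement and the order-$k$ Voronoi diagram of $P$, which is the assertion.

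The argument is essentially a change of variables, and the only delicate points are bookkeeping ones: fixing the level convention so that the $k$-level (rather than the $(k-1)$-level) matches the order-$k$ diagram, and handling degenerate positions $x$ at which several of the $h_q$ take a common value, where one passes to closures so that both the $k$-level and the order-$k$ Voronoi diagram are well-defined polyhedral complexes and the face correspondence extends continuously. I do not expect a genuine obstacle here; the whole content is the observation that $h_p(x)$ and $\|x-p\|^2$ differ by a quantity not depending on $p$.
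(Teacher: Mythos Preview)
The paper does not prove this proposition; it is stated as a known fact with a reference to Edelsbrunner's book, so there is no ``paper's own proof'' to compare against. Your argument via the identity $\|x-p\|^2=\|x\|^2+h_p(x)$ is exactly the standard one and is correct in spirit.

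There is one definitional slip worth flagging. You write that the $\rho$-image of a $d$-face of the $k$-level is ``a maximal connected region on which $p$ remains the $k$-th nearest point of $P$, i.e.\ a full-dimensional cell of the order-$k$ Voronoi diagram of $P$.'' That ``i.e.'' is not right under the usual definition: an order-$k$ Voronoi cell is the locus where a fixed $k$-element subset $S\subseteq P$ is the set of $k$ nearest sites, not where a fixed $p$ is the $k$-th nearest. A $d$-face of the $k$-level fixes both the $k$-th nearest site $p$ and the $(k-1)$-set of strictly closer sites, so its projection is a sub-cell of the order-$k$ cell for $S=T\cup\{p\}$; within one order-$k$ cell the identity of the farthest among $S$ can change, and the $k$-level subdivides it accordingly. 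Thus the projection of the $k$-level is a \emph{refinement} of the order-$k$ Voronoi diagram rather than a face-to-face bijection with it. This does not affect the substance of the proposition as used in the paper (the order-$k$ diagram is recoverable from the $k$-level projection, and its edges lie among the projected edges), but your final sentence overstates the correspondence.
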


\begin{proposition}\label{prop:VorArran}
Let $S \subset \R^2$ be a finite point set and let $\mathcal{Z}$ be the set of planes
\[ z_p=\left\{\left(x,y,\|p\|^2 -2 \scalprod{(x,y)}{p}\right) : (x,y) \in \R^2 \right\} \subset \R^3, \text{ for } p \in S. \]
Every three planes in $\mathcal{Z}$ intersect in at most one point. 
Equivalently, the locus of points equidistant from three different points of $S$ is either a point or empty.
\end{proposition}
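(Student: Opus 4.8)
The plan is to prove the two equivalent statements in \cref{prop:VorArran} by a direct calculation, working with whichever of the two formulations is more transparent. I would begin with the geometric formulation: let $p_1, p_2, p_3 \in S$ be three distinct points and consider the set $X$ of points $(x,y) \in \R^2$ equidistant from all three, i.e.\ $\|(x,y)-p_i\| = \|(x,y)-p_j\|$ for all $i,j$. Each pairwise condition $\|(x,y)-p_i\|^2 = \|(x,y)-p_j\|^2$ is, after expanding and cancelling the $\|(x,y)\|^2$ term, a linear equation in $(x,y)$; geometrically it is the perpendicular bisector of the segment $p_ip_j$, a line (or empty, but since $p_i \neq p_j$ it is a genuine line). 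So $X$ is the intersection of the three perpendicular bisectors of the three pairs. The key observation is that any two of these three bisector equations already imply the third (the relation ``$\mathrm{dist}$ to $p_1$ equals $\mathrm{dist}$ to $p_2$'' and ``equals $\mathrm{dist}$ to $p_3$'' give ``$p_2$ equals $p_3$'' for free), so $X$ is the intersection of just two lines in the plane. Two lines in $\R^2$ intersect in either a single point (when they are not parallel), the empty set (when they are parallel and distinct), or a whole line (when they coincide) — but the last case is impossible here, since if the bisectors of $p_1p_2$ and $p_1p_3$ coincided then $p_2 = p_3$, contradicting distinctness. Hence $X$ is a point or empty.

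Next I would translate this into the statement about the planes in $\Z$. For each $p \in S$ write $z_p$ for the graph of $h_p(x,y) = \|p\|^2 - 2\scalprod{(x,y)}{p}$ over $\R^2$. The intersection $z_{p_i} \cap z_{p_j}$ (as subsets of $\R^3$) projects to the set of $(x,y)$ with $h_{p_i}(x,y) = h_{p_j}(x,y)$, which is exactly the perpendicular-bisector equation above after adding $\|(x,y)\|^2$ to both sides. So $z_{p_1} \cap z_{p_2} \cap z_{p_3}$ projects bijectively onto $X$: the projection $\R^3 \to \R^2$ restricted to any single plane $z_p$ is a bijection, so the triple intersection in $\R^3$ is a point exactly when $X$ is a point, and empty exactly when $X$ is empty. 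This gives the ``at most one point'' claim for the three planes, establishing the equivalence of the two formulations stated in the proposition.

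I do not expect a genuine obstacle here; the proposition is essentially the elementary fact that three distinct sites in the plane have at most one circumcenter, dressed up in the lifting-to-paraboloid language used throughout the section. The only thing to be a little careful about is the degenerate possibility that the three perpendicular bisectors share a common line — i.e.\ making explicit that collinearity of $p_1, p_2, p_3$ yields parallel (not coincident) bisectors and hence an \emph{empty} equidistant set, while non-collinearity yields a single circumcenter — so that we correctly account for both outcomes ``a point or empty'' and rule out the ``whole line'' case. I would also remark briefly that this is precisely why the arrangement $\Z$ behaves like a generic arrangement of planes in $\R^3$ with respect to triple intersections, which is the property that the subsequent complexity argument for planar point sets will exploit.
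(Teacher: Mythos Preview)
Your proposal is correct and follows essentially the same route as the paper: the paper's proof simply observes that the vertical heights of the planes $z_p$ over a point $q$ encode the distances from $q$ to the sites, invokes the classical fact that three distinct points have either a unique circumcenter or none (when collinear), and concludes. You spell out that classical fact via perpendicular bisectors and make the lifting correspondence explicit, but the underlying argument is the same.
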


\begin{proof}
For every point $q \in \R^2$, the vertical order of the planes $z_p$ over $q$ is the same as the distances from~$q$ to the corresponding $p\in S$. 
The locus of points equidistant from three points in the plane is the center of the circle through them, or the empty set if they are collinear.
Hence, three planes can coincide in at most one point.
\end{proof}

The main ingredient to improve the complexity bound in \cref{cor:firstbound} is the following lemma.

\begin{lemma}[Ben-Avraham et al.~\cite{ESA}]
\label{ClarkShor}
Let $A \subset \R^2$ be a set of $n$ points. 
There are $O(nk)$ bisectors that support all edges of order-$j$ Voronoi diagrams of $A$ for all $j \leq k$.
\end{lemma}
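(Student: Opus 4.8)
The plan is to prove this bound by the Clarkson--Shor random-sampling technique, which is why we attach this name to the lemma.

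First I would translate the statement into a counting problem about pairs of points. Recall the local structure of higher-order Voronoi diagrams in the plane: an edge of the order-$j$ Voronoi diagram of $A$ separates two cells whose sets of $j$ nearest neighbours differ in exactly one element, say $a$ is replaced by $a'$, so the edge lies on the perpendicular bisector of $a$ and $a'$. At a point $t$ in the relative interior of such an edge, the sites $a$ and $a'$ are tied for the $j$-th nearest and exactly $j-1$ points of $A$ are strictly closer to $t$; equivalently, the open disk centred at $t$ with $a$ and $a'$ on its boundary contains exactly $j-1$ points of $A$ in its interior. Call an unordered pair $\{a,a'\}\subseteq A$ \emph{$k$-good} if there is a point $t$ equidistant from $a$ and $a'$ for which the open disk centred at $t$ through $a$ and $a'$ contains at most $k-1$ further points of $A$. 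By the observation above, every bisector supporting an edge of some order-$j$ Voronoi diagram with $j\le k$ is the perpendicular bisector of at least one $k$-good pair; mapping such a bisector to one such pair is injective (a pair determines its bisector), so it suffices to prove that there are only $O(nk)$ $k$-good pairs.

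Next I would draw a random sample $R\subseteq A$ by keeping each point of $A$ independently with probability $p=1/k$. Fix a $k$-good pair $\{a,a'\}$ together with a witness point $t$, and let $D$ be the corresponding open disk, containing a set $P$ of at most $k-1$ points of $A$. If $a,a'\in R$ and $R\cap P=\emptyset$, then $D$ is a disk through $a$ and $a'$ whose interior avoids $R$, hence $\{a,a'\}$ is a Delaunay edge of $R$; the probability of this event is at least $p^{2}(1-p)^{k-1}\ge p^{2}/e$, using $(1-1/k)^{k-1}\ge 1/e$. Since a Delaunay triangulation of any planar point set of size $r$ has at most $3r$ edges, the expected number of Delaunay edges of $R$ is at most $3$ times the expected size of $R$, that is, at most $3n/k$. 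On the other hand, by linearity of expectation this number is at least (the number of $k$-good pairs) times $p^{2}/e=1/(ek^{2})$. Comparing the two estimates gives that the number of $k$-good pairs is at most $3e\,nk=O(nk)$, which together with the reduction above proves the lemma.

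The one delicate point is the reduction in the second paragraph: one must argue carefully that an edge of an order-$j$ Voronoi diagram is genuinely witnessed by a relative-interior point whose associated circle encloses at most $j-1$ sites, which requires either the standard combinatorial description of the faces of order-$j$ Voronoi diagrams or an infinitesimal-perturbation argument to dispose of degeneracies such as collinear triples or several cocircular sites; these degeneracies, and sites lying exactly on a witness circle, only affect the constant hidden in the $O$-notation. The sampling computation in the third paragraph is then a routine instance of the Clarkson--Shor framework.
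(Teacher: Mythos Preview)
Your argument is correct and is exactly the Clarkson--Shor sampling argument that the lemma's label alludes to. Note, however, that the paper does not actually supply its own proof of this lemma: it is quoted as a result of Ben-Avraham et al.\ and used as a black box, so there is no in-paper proof to compare against. Your write-up recovers the intended argument (reduce to counting $k$-shallow pairs, sample each site with probability $1/k$, compare to the linear number of Delaunay edges in the sample), and your handling of the injectivity step and the degeneracy caveats is adequate for establishing the $O(nk)$ bound.
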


Finally, we fix some technicalities for lex-bottleneck matchings.

\begin{definition}\label{def:minimal_geom}
Let $A,B \subset \R^d$ be two finite point sets with $k=|B| \le |A|$.
Given $b \in B$, a set of edges $M_b\subseteq A\times B$ is called a \emph{$b$-minimal set} if $|M_b|=k$ and no edge $ab \in (A\times B) \setminus M_b$ is strictly shorter than any edge in $M_b$. 
The \emph{candidate set} of $b\in B$ at a position $t\in\R^d$ is the set $E_b(t)\subseteq A\times(B+t)$ with $|E_b(t)| \geq k$ and such that every edge $(a,b+t) \notin E_b(t)$ is strictly longer than every edge in~$E_b(t)$.
Furthermore, we write $Z(t)=\cup_{b\in B} E_b(t)$.
\end{definition}

\begin{lemma}\label{lem:minimal_geom}
Let $A,B \subset \R^d$ be two finite point sets with $k=|B| \le |A|$.
\begin{enumerate}[(i)]
\item If a subset $M \subseteq A\times B$ contains a $b$-minimal set for each $b \in B$, then~$M$ contains a lex-bottleneck matching for $A$ and $B$.\label{lem:minimal1_geom}
\item Every lex-bottleneck matching for $A$ and $B$ is contained in the union~$Z=Z(0)$ of candidate sets.\label{lem:minimal2_geom}
\end{enumerate}
\end{lemma}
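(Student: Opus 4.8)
The plan is to prove both parts by an exchange argument on the edges of a lex-bottleneck matching, sorting edges by length and replacing, one at a time, any edge that lies outside the prescribed set $M$ (resp.\ $Z$) by a shorter or equally-long edge inside it, while maintaining that the current edge set is still a matching.

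For part \eqref{lem:minimal1_geom}, let $\pi$ be a lex-bottleneck matching for $A$ and $B$, and suppose $\pi\not\subseteq M$. Pick the longest edge $ab\in\pi\setminus M$ (breaking ties arbitrarily). Consider the $b$-minimal set $M_b\subseteq M$ guaranteed by the hypothesis. Since $|M_b|=k=|B|$ and $\pi$ has $k$ edges, and since the at most $k-1$ edges of $\pi\setminus\{ab\}$ together with $ab$ can block at most $k$ of the edges of $M_b$ from being "free", a counting argument shows there is an edge $a'b'\in M_b$ such that $\pi'=(\pi\setminus\{ab\})\cup\{a'b'\}$ is again a matching --- more carefully, $M_b$ contains an edge $a'b'$ whose endpoints $a'$ and $b'$ are not covered by $\pi\setminus\{ab\}$; if no such edge existed, the $k$ edges of $M_b$ would all meet the $k-1$ points covered by $\pi\setminus\{ab\}$, which is impossible since edges of $M_b$ incident to a common vertex number at most $\ldots$ here one uses that $M_b$ itself may be an arbitrary edge set, so the cleanest route is: the bipartite graph on $A\cup B$ with edge set $M_b$ has $k$ edges, and a matching of size $k-1$ in the ambient graph covers $k-1$ vertices on each side; a $b$-minimal set of size $k$ must contain an edge avoiding all of these. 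Since no edge outside $M_b$ is strictly shorter than any edge in $M_b$, and $ab\notin M_b$ (as $ab\notin M\supseteq M_b$), the new edge $a'b'$ is no longer than $ab$. Hence $\pi'$ is a matching whose multiset of edge lengths is lexicographically no worse than that of $\pi$, so $\pi'$ is also a lex-bottleneck matching, and it has strictly fewer edges outside $M$ (the edge $ab$ left, and every edge of $M_b$ is in $M$). Iterating terminates with a lex-bottleneck matching contained in $M$.

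For part \eqref{lem:minimal2_geom}, observe that for each $b\in B$ the candidate set $E_b(0)$ has at least $k$ edges, and every edge $ab'$ not in $E_b(0)$ with $b'=b$ is strictly longer than every edge of $E_b(0)$; in particular $E_b(0)$ is a $b$-minimal set in the sense of Definition~\ref{def:minimal_geom} once restricted to a $k$-element subset (it contains one). Applying part \eqref{lem:minimal1_geom} with $M=Z=Z(0)=\bigcup_{b\in B}E_b(0)$, which contains a $b$-minimal set for every $b$, yields a lex-bottleneck matching inside $Z$; but since every lex-bottleneck matching has the same lex-bottleneck cost and, by the proposition preceding this subsection, the same matched set, we still need that \emph{every} lex-bottleneck matching lies in $Z$. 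For this I would argue directly: if some lex-bottleneck matching $\pi$ had an edge $ab\notin E_b(0)$, then $ab$ is strictly longer than all $k$ edges of $E_b(0)$; running the same exchange as above but now obtaining a \emph{strict} improvement in the lexicographic order (the replacing edge is strictly shorter) contradicts optimality of $\pi$. Hence $\pi\subseteq Z$.

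The main obstacle is the combinatorial exchange step: making precise that in a (not-necessarily-bipartite-looking, but effectively bipartite) graph, a size-$k$ edge set that is "$b$-minimal" must contain an edge disjoint from any fixed matching of size $k-1$, and that the swap produces a matching whose sorted length vector does not increase lexicographically. I expect to handle this by a clean lemma: if $N$ is a matching with $k$ edges and $M_b$ has $k$ edges, then removing any one edge of $N$ frees up two vertices and the remaining $k-1$ edges of $N$ cover $k-1$ vertices on the $B$-side; since all $k$ edges of $M_b$ have distinct $B$-endpoints only if $M_b$ is itself a matching --- which need not hold --- so instead I count on the $A$-side together with the $B$-side and use that an augmenting-type argument: the symmetric-difference graph of $N\setminus\{ab\}$ and $M_b$ contains a path starting at the $b$-vertex (left uncovered by $N\setminus\{ab\}$), whose first $M_b$-edge can be swapped in. I will also need to double-check the length bookkeeping when ties occur, i.e.\ that "lexicographically no worse" is the right notion and suffices to conclude the swapped matching is again lex-bottleneck.
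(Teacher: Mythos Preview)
Your overall strategy---an exchange argument that swaps out offending edges one at a time---is the same as the paper's. The difficulty you struggle with, however, is self-inflicted: you have misread Definition~\ref{def:minimal_geom}. A $b$-minimal set $M_b$ consists of $k$ edges \emph{all incident to the fixed vertex $b$}; the condition ``no edge $ab\in(A\times B)\setminus M_b$ is strictly shorter than any edge in $M_b$'' is speaking only about edges with second coordinate equal to this particular $b$. (The same goes for the candidate set $E_b(0)$.) Once you use this, the exchange in part~\eqref{lem:minimal1_geom} becomes a one-line pigeonhole: the matching $\pi$ covers exactly $k$ points of $A$, one of which is $a$; since $ab\notin M_b$, the $k$ edges of $M_b$ (all of the form $a'b$) cannot all hit these $k$ points, so some $a'b\in M_b$ has $a'\notin\pi(B)$, and $(\pi\setminus\{ab\})\cup\{a'b\}$ is again a complete matching with one more edge in $M$. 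No augmenting paths, no symmetric differences, no counting on both sides.

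Under your reading (where $M_b$ may contain edges $a'b'$ with $b'\neq b$), your proposed swap actually fails: replacing $ab$ by such an $a'b'$ leaves $b$ unmatched, so the result is not an injection $B\hookrightarrow A$ at all. Your fallback to an alternating-path argument would then need to swap several edges at once, and the lexicographic bookkeeping you flag as an obstacle becomes genuinely delicate. All of this evaporates with the correct interpretation.

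Your direct argument for part~\eqref{lem:minimal2_geom} at the end---if $ab\notin E_b(0)$ then any $a'b\in E_b(0)\setminus\pi$ is strictly shorter, giving a strict lexicographic improvement and a contradiction---is exactly the paper's proof and is correct. The detour through part~\eqref{lem:minimal1_geom} plus uniqueness of matched sets is unnecessary.
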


\begin{proof}
(i): Let $\mu \subset A\times B$ be a lex-bottleneck matching. 
If $\mu \subseteq M$, nothing is left to prove.
Otherwise, let $ab \in A\times B$ be an edge in $\mu \setminus M$, and let $M_b \subseteq M$ be a $b$-minimal set. 
Since $\mu$ matches exactly $k$ points of $A$ and $ab \not \in M \supseteq M_b$, there must be an edge $a'b \in M_b \setminus \mu$. 
The matching $(\mu \setminus \{ab\}) \cup \{a'b\}$ is lexicographically at least as good as $\mu$ and it uses one more edge of~$M$. 
Repeating the process, we end up with a lex-bottleneck matching contained in~$M$.

(ii): Assume that $\mu$ is a lex-bottleneck matching for $A$ and $B$ and that it contains an edge $ab \notin Z$. 
Let $a'b$ be an edge in $E_b(0) \setminus \mu \ne \emptyset$. 
By definition of candidate sets, the matching $(\mu \setminus \{ab\}) \cup \{a'b\}$ is lexicographically better than~$\mu$, contradicting its optimality.
\end{proof}

Now we are well-prepared to prove our main result of this section.

\begin{theorem}\label{thm:bot} \label{arrangementrel}
For any pair of point sets $A,B \subset \R^2$ with $k=|B|\leq|A|=n$, there is a lex-bottleneck diagram of complexity $O(n^2k^6)$.
% If $A,B \subset \R$, there is lex-bottleneck diagram of complexity $O(nk^3)$.
\end{theorem}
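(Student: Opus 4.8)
The plan is to show that, instead of the full arrangement $\h(A,B)$ with its $O(n^2k^2)$ bisectors, a lex-bottleneck diagram can be built from a much smaller subarrangement, of size $O(nk^3)$, and then invoke the $O(m^2)$ bound on the complexity of an arrangement of $m$ lines in the plane to conclude $O((nk^3)^2)=O(n^2k^6)$. The guiding idea is \cref{lem:minimal_geom}: a lex-bottleneck matching only ever uses edges from the candidate sets $E_b(t)$, and such a matching is determined by the \emph{relative} order of the lengths of these relevant edges. So I only need to keep those bisectors $h(a,b,a',b')$ that can separate two relevant edges; bisectors comparing two edges that are never simultaneously relevant are irrelevant to the diagram.

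The key step is a counting argument for the relevant bisectors, which is where \cref{ClarkShor} enters. Fix $b\in B$. Over all translations $t$, the edge $(a,b+t)$ with $a$ ranging over $A$ behaves, after the affine substitution used in \cref{prop:VorArran}, like the distance from a moving point to the points of $A$; the candidate set $E_b(t)$ consists of the $O(k)$ shortest such edges, so the bisectors that ever delimit how $E_b(t)$ changes, or that order edges \emph{within} $E_b(t)$, are exactly (the lifts of) the edges of the order-$j$ Voronoi diagrams of $A$ for $j\le O(k)$ — translated by $-b$. By \cref{ClarkShor} there are only $O(nk)$ such bisectors for this fixed $b$. We also need bisectors that compare an edge in $E_b(t)$ against an edge in $E_{b'}(t)$ for $b\ne b'$: but an edge in $E_{b'}(t)$, viewed from $b$'s translated copy, is still one of the $O(k)$ shortest edges incident to $b'$, hence lies among the $O(nk)$ bisectors associated to the \emph{pair} $\{b,b'\}$, or can be charged to the order-$O(k)$ Voronoi structure of $A$ relative to $b$ and to $b'$. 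Summing over the $O(k)$ choices of $b$ (and $O(k^2)$ ordered pairs $(b,b')$, each contributing $O(nk)$ by the same lemma applied after the translation by $-b$) gives $O(nk^2\cdot k)=O(nk^3)$ relevant bisectors in total.

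Having pinned down this set $\h'(A,B)$ of $O(nk^3)$ bisectors, I would argue as in the proof of \cref{epolyhedral} that its arrangement is still a lex-bottleneck diagram: inside a cell of the arrangement of $\h'(A,B)$, the order among all relevant edges is constant, and by \cref{lem:minimal_geom}\,\eqref{lem:minimal2_geom} every lex-bottleneck matching uses only relevant edges, so a lex-bottleneck matching chosen at an interior point stays lex-bottleneck throughout the cell — crossing one of the discarded bisectors only swaps the order of two edges at least one of which is never in any candidate set, and such a swap cannot turn a lex-bottleneck matching into a non-optimal one because those edges play no role in $g_\sigma$ restricted to lex-bottleneck matchings. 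Then the standard $O(m^2)$ bound for $m=O(nk^3)$ lines in $\R^2$ (cf.\ \cite{Edelsbrunner}) yields complexity $O(n^2k^6)$, and since any lex-bottleneck diagram is a bottleneck diagram, the bound transfers.

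The main obstacle I anticipate is making the reduction to ``relevant bisectors'' airtight, in two respects. First, one must be careful that discarding a bisector does not merge two cells across which the lex-bottleneck \emph{matching} genuinely changes — this requires the observation that if the two compared edges are never both relevant, then no lex-bottleneck matching distinguishes them (using that the longest edge of a lex-bottleneck matching, which is what separates the cells, always lies in a candidate set). Second, the Voronoi-to-bisector bookkeeping for the \emph{cross} comparisons (edges from $E_b(t)$ versus $E_{b'}(t)$) needs the right framing: after translating by $-b$, both candidate sets sit among the $O(k)$-nearest neighbors in $A-b$ and $A-b'$ respectively, but a bisector $h(a,b,a',b')$ between them is not literally an order-$j$ Voronoi edge of a single point set, so one must phrase \cref{ClarkShor} (or re-derive a Clarkson–Shor-type bound) for the relevant \emph{comparisons} rather than for a single Voronoi diagram; handling this cleanly is the crux of the argument.
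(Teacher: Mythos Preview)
Your plan matches the paper's: prune $\h(A,B)$ to $O(nk^3)$ ``relevant'' bisectors via \cref{ClarkShor}, then invoke the $O(m^2)$ bound for planar line arrangements. The same-$b$ case is handled exactly as you describe, through order-$(\le k)$ Voronoi edges of $A-b$.

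The gap is precisely the step you yourself flag: the cross comparisons $h(a,b,a',b')$ with $b\ne b'$. You note that such a bisector ``is not literally an order-$j$ Voronoi edge of a single point set'' and leave the bound to a hoped-for re-derivation. The paper's fix is concrete and short: form the \emph{single} planar set $S(b,b')=(A-b)\cup(A-b')$ of size at most $2n$. Since $\|b+t-a\|$ equals the distance from $t$ to $a-b\in S(b,b')$, the bisector $h(a,b,a',b')$ is literally the perpendicular bisector of $a-b$ and $a'-b'$ in $S(b,b')$. If this bisector is \emph{used}---meaning both $ab$ and $a'b'$ lie in the local candidate set $Z$ along some arrangement edge $F$---then along $F$ the circle through $a-b$ and $a'-b'$ contains at most $(k-1)+(k-1)=2k-2$ points of $S(b,b')$ in its interior (using \cref{prop:VorArran} to ensure only these two points lie on the circle), so $h$ supports an order-$j$ Voronoi edge of $S(b,b')$ with $j\le 2k-1$. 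Now \cref{ClarkShor} applied to $S(b,b')$ gives $O(nk)$ such bisectors per pair, and summing over $\binom{k}{2}$ pairs yields $O(nk^3)$. Without this merged-set trick and the explicit level bound $2k-1$, your counting for the cross case is not justified.

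A smaller correction: in your removal argument you say a discarded bisector swaps two edges ``at least one of which is \emph{never} in any candidate set.'' That quantifier is too strong. The correct (and sufficient) statement is local: at every arrangement edge $F\subset h$, one of $ab,a'b'$ is absent from $Z(C_l)\cup Z(C_r)$ for the two adjacent cells $C_l,C_r$. An edge may well be relevant far from $h$; what matters is irrelevance \emph{where $h$ is being crossed}, which already guarantees that a label chosen inside $Z$ on one side remains a valid lex-bottleneck label on the other.
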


\begin{proof}
Observe that $E_b(t_1)=E_b(t_2)$ for every pair of points $t_1,t_2$ interior to a cell~$C$ of $\lex(A,B)$, for all $b \in B$. 
Therefore, we write $E_b(C)=E_b(t)$ and $Z(C)=Z(t)$ for any point $t$ interior to a cell $C$ of $\lex(A,B)$.
By~\cref{lem:minimal_geom}-(\ref{lem:minimal2_geom}), any lex-bottleneck matching for a point interior to $C$ is contained in $Z(C)$. 
Let now $t_0$ be a point in a lower-dimensional face $L$ of $\lex(A,B)$, and let $C \supset L$ be a cell. 
For continuity reasons, the set $E_b(C) \subseteq E_b(t_0)$ is a $b$-minimal set at~$t_0$ as well. 
Therefore, by ~\cref{lem:minimal_geom}-(\ref{lem:minimal1_geom}), there is a lex-bottleneck matching for~$t_0$ contained in $Z(C)$. 

Now, consider a labeling $\Lambda$ of $\lex(A,B)$ that labels every face of a cell $C$ with a matching contained in $Z(C)$.
We say that an edge $F$ of $\lex(A,B)$ between two cells $C_l$ and $C_r$ \emph{uses} a bisector $h=h(a,b,a',b')$ if $F \subset h$ and $ab,a'b' \in Z(C_l) \cup Z(C_r)$.
If no edge uses a bisector $h$, then~$h$ can be omitted from the arrangement $\lex(A,B)$ while maintaining its property of being a lex-bottleneck diagram. 
Each new face in the resulting hyperplane arrangement is a union of a set of old faces, i.e., before removing $h$, and can be labeled according to the label of any of them, since $h$ does not intersect any face whose label in~$\Lambda$ uses the edges $ab$ or $a'b'$. 
We show now that many bisectors are not used by any edge, distinguishing the two following cases. 

Let $h(a,b,a',b)$ be a bisector used by an edge $F=C_l \cap C_r$, and consider the point set $S(b)=A-b$. 
The edge $F$ must be contained in an order-$j$ Voronoi edge of $S(b)$ for some $j \le k$.
Indeed, in view of~\cref{prop:VorArran}, infinitesimally to the right or to the left of (a point in the relative interior of) $F$ both $a-b$ and $a'-b$ are among the $k+1$ closest points.
In addition, for a point $t_0$ in the relative interior of~$F$, the points $a-b$ and $a'-b$ are the only two points of $S(b)$ that lie at distance $\|b+t_0-a\|=\|b+t_0-a'\|$. 
Hence, there is a circle centered at $t_0$ and through $a-b$ and $a'-b$ that contains $j-1 \le k-1$ points in its interior, which is a characterization for points in the relative interior of edges of the order-$j$ Voronoi diagram.  

Let $h(a,b,a',b')$ with $b \ne b'$ be a bisector used by an edge $F=C_l \cap C_r$, and consider the point set ${S(b,b')=(A-b) \cup (A-b')}$. 
Note that the number of points in $S(b,b')$ is not necessarily $2n$: a point $a_1-b=a_2-b'$ can belong to both $A-b$ and $A-b'$. 
However, this is the case if and only if the edges $a_1b$ and $a_2b'$ are equally long everywhere. 
In particular, the points $a-b$ and $a'-b'$ are distinct since, otherwise, they would not induce any bisector. 
Furthermore, simple algebraic manipulations show that $t \in \R^2$ is closer to $a_1-b$ than to $a_2-b'$ if and only if $b+t$ is closer to~$a_1$ than $b'+t$ is to $a_2$, for any choice of $a_1,a_2\in A$. 
Since the bisector is used by $F$, the point $a-b$ is among the $k$ closest points of $A-b$ infinitesimally to at least one of the sides of~$F$. 
\cref{prop:VorArran} ensures that along the interior of $F$ only the points $a-b$ and $a'-b'$ are at distance $\|b+t_0-a\|=\|b'+t_0-a'\|$ among the points in $S(b,b')$, which implies that in fact $a-b$ is among the $k$ closest points of $A-b$ infinitesimally to both sides of~$F$. 
Similarly, the point $a'-b'$ belongs to the $k$ closest points of $A-b'$ for points infinitesimally away from $F$. 
Hence, for any point $t_0$ in the relative interior of~$F$, there is a disk centered at $t_0$ and passing through $a-b$ and $a'-b'$ that contains $j-1 \le 2k-2$ points of $S(b,b')$ in its interior. 
Equivalently, the bisector $h(a,b,a',b')$ supports an order-$j$ Voronoi edge of $S(b,b')$ for some $j \le 2k-1$.

Applying \cref{ClarkShor} to $S(b)$, for all $b \in B$, and to $S(b,b')$, for every pair $b,b' \in B$, it follows that the number of bisectors that are used by some edge is $O(k^2 \cdot nk)$. 
The complexity of the diagram resulting from removing all unused bisectors from $\lex(A,B)$ is thus $O(n^2k^6)$. 
% 
% The case of the line is proven analogously, since the statement and the proof of~\cref{ClarkShor} and the previous arguments carry over easily to the one-dimensional case.
\end{proof}

Based on a one-dimensional example of Rote~\cite{Rote2010}, one may derive a lower bound on the complexity of any lex-bottleneck diagram.
The proof is a particular case of the result for stable matchings that can be found in~\cite{ESA_journal}.

\begin{proposition}\label{lowerboundStable}
For any $k,n \in \N$ with $n \ge k \ge d$, there exist point sets $A,B \subset \R^d$ with $|B|=k$ and $|A|=n$ such that any lex-bottleneck diagram for $A$ and $B$ has complexity ${\Omega(k^d (n-k)^d)}$.
\end{proposition}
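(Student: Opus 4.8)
The plan is to reduce to the one-dimensional construction of Rote and then lift it to $\R^d$ by a product construction. First I would recall Rote's one-dimensional example: there, one exhibits point sets $A_0, B_0 \subset \R$ with $|B_0| = k$ and $|A_0| = n$ such that the minimization diagram of the corresponding value function — and hence any bottleneck (or lex-bottleneck) diagram — must contain $\Omega(k(n-k))$ distinct full-dimensional cells along the real line. The intuition is that, as the translation parameter $t$ sweeps through $\R$, the subset of $A_0$ that gets matched to $B_0$ changes $\Omega(k(n-k))$ times: roughly, $B_0$ is a tight cluster of $k$ equally spaced points and $A_0$ consists of $n$ points arranged so that shifting $B_0$ forces a cascade of re-matchings, each re-matching event corresponding to a bisector crossing and thus a new cell in the polyhedral complex. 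Since a lex-bottleneck diagram must assign to each cell a matching that is lex-optimal throughout that cell, and consecutive optimal matched sets differ, no two such intervals can be merged into one cell; this yields $\Omega(k(n-k))$ cells in dimension one.

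Next I would promote this to dimension $d$ by taking a Cartesian-product-type construction. Write $k = d \cdot k'$ and $n - k = d \cdot m'$ (handling divisibility by absorbing lower-order slack, which only affects constants), and place $d$ scaled, well-separated copies of Rote's one-dimensional instance along the $d$ coordinate axes: the $i$-th copy lives in the $i$-th coordinate and is translated far out in the remaining coordinates so that the copies do not interact (any cross-copy edge is strictly longer than every within-copy edge, for all translations $t$ in a large box $Q \subset \R^d$ that we care about). Then for $t = (t_1, \dots, t_d) \in Q$ the bottleneck/lex-bottleneck matching decomposes as the union of the optimal matchings of the $d$ independent one-dimensional subproblems, one for each coordinate $t_i$. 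Consequently the optimal matched set changes whenever any coordinate $t_i$ crosses one of the $\Omega(k'(n-k)/d) = \Omega(k'm')$ breakpoints of the $i$-th copy, so the diagram restricted to $Q$ refines a grid-like arrangement with $\Omega(k'm')$ slabs in each of the $d$ directions. Any polyhedral complex realizing a lex-bottleneck diagram must therefore have at least $\Omega\big((k'm')^d\big) = \Omega\big((k/d)^d ((n-k)/d)^d\big) = \Omega\big(k^d (n-k)^d\big)$ cells, since each of the $\prod_i \Omega(k'm')$ grid boxes has a distinct tuple of matched sets and hence cannot be contained in a single cell.

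The two technical points that need care are the following. The genuinely delicate part is establishing the \emph{non-interaction} of the $d$ copies: one must choose the separation offsets so that for every $t$ in the relevant box, and for every candidate edge between distinct copies, that edge is strictly longer than the longest possible within-copy edge — so that it never enters any candidate set $E_b(t)$ and the global lex-bottleneck matching truly factors as a product. This is a finite collection of strict inequalities over a compact parameter box, so it holds for a sufficiently large separation, but it has to be argued rather than merely asserted, and the constants hidden in the $\Omega$ depend on it. The second, more routine point is the divisibility/rounding of $k$ and $n-k$ into $d$ groups; since $n \ge k \ge d$ this is always possible up to losing a constant factor, which is absorbed into the $\Omega$. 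Finally, one cites \cite{ESA_journal} for the fact that this lower bound for stable matchings applies verbatim here, the lex-bottleneck case being a special instance of that argument; I would phrase the proof so that the combinatorial core (counting distinct matched sets forces distinct cells) is stated once and the reduction to Rote's example plus the product lifting does the rest.
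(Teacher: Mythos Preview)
Your proposal is correct and follows exactly the route the paper indicates: the paper does not give its own proof but points to Rote's one-dimensional example \cite{Rote2010} and to \cite{ESA_journal}, where the general stable-matching lower bound is obtained by the same product-of-$d$-copies lifting you describe. Your identification of the two genuine technical points (non-interaction via large separation, and divisibility of $k$ and $n-k$ by $d$ up to constants) is accurate and matches what that argument requires.
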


In particular, this shows that \cref{thm:bot} is optimal with respect to the size of the bigger set~$A$.

% # # # # # # # # # # # # # # # # # # # # # # # # # # # # # # #
\section{Construction of the Bottleneck Diagrams}
\label{sect:construction}
% # # # # # # # # # # # # # # # # # # # # # # # # # # # # # # #

In this section we are concerned with construction algorithms for the bottleneck diagrams introduced before.
The basic idea is to first construct the reduced hyperplane arrangement discussed in the proof of \cref{thm:bot}, and then traverse it while computing a bottleneck labeling for each cell (or face) of the structure.
The latter is based on the well-developed theory of bottleneck assignments in weighted bipartite graphs whose concepts and methods that are relevant for our purposes we introduce first.

\subsection{Notation and Techniques for Matchings in Weighted Bipartite Graphs}
\label{subsect:notbipgraphs}

As usual, let $A,B\subset\R^d$ be finite point sets with $|B|\leq|A|$.
The problem of finding a bottleneck matching for a fixed position of~$B$ can be translated into a matching problem in a weighted bipartite graph on $A$ and $B$, where the weight of an edge from $ab \in A\times B$ is the Euclidean distance between the corresponding points.
Most of the geometric definitions from \cref{sect:botmatchings,sect:diagrams} have graph-theoretic analogs.
Note that we use the same symbols and names for corresponding concepts in the graph setting.

Let $G=(A,B;E)$ be a bipartite graph with edge set $E \subseteq A\times B$ and vertex set partitioned into the components $A$ and $B$.
A \emph{matching} in $G$ is a set $\sigma \subseteq E$ such that every vertex in $A \cup B$ is incident to at most one edge of $\sigma$. 
As in the geometric setting, we identify a matching with the injection from~$B$ into $A$ it induces, and we simplify notation by denoting an edge $(a,b)\in E$ by $ab$.
A~\emph{maximum matching} is a matching of maximum cardinality. 
Vertices that belong to an edge of a matching are called \emph{matched vertices} and otherwise \emph{exposed vertices}.

An \emph{alternating path/cycle} for a matching $\sigma$ is a path/cycle in $G$ with no repeated vertices such that the even edges are in $\sigma$ and the odd ones are in~${E\setminus\sigma}$.
An \emph{augmenting path} for~$\sigma$ is an alternating path starting and ending at exposed vertices.
Note that, if~$\gamma$ is an augmenting path for $\sigma$, the matching ${\sigma \oplus \gamma=(\sigma \setminus \gamma) \cup  (\gamma \setminus \sigma)}$ has one more matched vertex than~$\sigma$.
In general, for two sets of edges $\sigma,\tau \subseteq E$ the connected components of the graph induced by $\sigma \oplus \tau$ are called its \emph{components}.
If $\sigma$ and $\tau$ are maximum matchings, the components of $\sigma \oplus \tau$ are paths or cycles, since every vertex has degree at most two in $\sigma \oplus \tau$. 

\subsubsection*{Bottleneck Assignments in Bipartite Graphs}

Given a bipartite graph $G=(A,B;E)$, we let $w:E \to \R_{{\geq}0}$ be a function giving weights to its edges. 
The bottleneck cost of a matching $\pi$ in $G$ with respect to $w$ is the maximum $w$-value attained by the edges in~$\pi$.
The problem of finding a maximum matching of minimum bottleneck cost, henceforth referred to as a \emph{bottleneck matching}, for the complete and balanced case, i.e., $E=A \times B$ and $k=n$, has been widely studied in the last decades under the name of the \emph{bottleneck assignment problem}.
The most prominent approaches for this problem are the \emph{threshold methods} and the \emph{augmenting path methods}. 
Details and related studies can be found in the book dedicated to assignment problems from Burkard, Dell'Amico \& Martello~\cite{BurkardBook}. 
The threshold algorithms conduct a binary search on the possible values for the edge with maximum weight of a bottleneck matching. 
At each stage, the edges with bigger weight than the threshold are ignored, and a maximum matching computation in the modified graph is performed. 
One of the best-known algorithms to find a maximum matching in a bipartite graph is due to Hopcroft \& Karp~\cite{HK}.
It runs in $O\big(|E| \sqrt{ \nu(G)}\big)$ time, where $\nu(G)$ is the size of the maximum matchings in $G$.
The algorithm by Alt, Blum, Mehlhorn \& Paul~\cite{AltBlum} finds a maximum matching more efficiently if the graph is ``dense''. 
This fact was exploited by Punnen \& Nair~\cite{Punnen} to develop an alternative algorithm for the bottleneck assignment problem. 
The pure threshold method is preferable for dense graphs, the method from Gabow \& Tarjan~\cite{Gabow2} is better for sparse graphs and the approach of Punnen \& Nair covers the range in between. 

Similar to the geometric situation, a variant of the bottleneck assignment problem is the \emph{lexicographic bottleneck assignment problem}, introduced in \cite{Burkard} and revisited in \cite{Sokkalingam}. 
For this problem, the cost of a matching $\sigma:B \hookrightarrow A$ is the result of sorting decreasingly the values $w(\sigma(b)b)$, for all $b \in B$.
A \emph{lexicographic bottleneck matching} is a matching minimizing the cost, when the corresponding cost vectors are compared lexicographically. 
Note that a lexicographic bottleneck matching is necessarily a bottleneck matching.  

In the unbalanced case, as long as the graph is complete, the maximum matchings have size~$|B|$. 
Let us recall the concepts of $b$-minimal and candidate sets from \cref{def:minimal_geom} by reviewing their definition in the graph setting.

\begin{definition}\label{def:minimal}
Let $G=(A,B;E)$ be a bipartite graph with $k=|B| \le |A|$, and $w:E \to \R_{{\geq}0}$ be a function giving weights to the edges. 
Given $b \in B$, a set $M_b\subseteq E$ is called a \emph{$b$-minimal set} if $|M_b|=k$ and no edge $ab \in E \setminus M_b$ has strictly smaller weight than any edge in $M_b$. 
The \emph{candidate set of $b$} is the set $E_b\subseteq E$ with $|E_b| \ge k$ and such that every edge $ab \in E \setminus E_b$ has strictly larger weight than any edge in $E_b$.
\end{definition}

\cref{lem:minimal_geom} holds analogously for bipartite graphs and can be considered as a property similar to Hall's marriage theorem. 
As a consequence thereof, we can select a set of~$k^2$ edges which is guaranteed to contain a lexicographic bottleneck matching. 
This pruning of the graph can be done in $O(n k)$ time using selection algorithms.
Although we do not know whether the graph will be dense or sparse after pruning the non-relevant edges and isolated vertices, we have that both $|B|+|A|$ and $|E|$ are $O(k^2)$. 
Thus, the best worst-case running time for our scenario is provided by the algorithm of Gabow \& Tarjan, which runs in $O(k^{2} \sqrt{k \log k})$ time, according to the analysis in \cite{BurkardBook}. 
The approach in \cite{Sokkalingam}, based on solving a sequence of linear sum assignment problems and bottleneck assignment problems, boils down to an algorithm for the computation of a lexicographic bottleneck matching running in $O(k^4)$ time after the pruning. 

We close this part by stating some well-known results in the matching literature, which are used by the aforementioned algorithms and required below.

\begin{lemma}[Berge's Lemma \cite{Berge}] \label{lem:berge}
A matching $\sigma$ is a maximum matching in a bipartite graph $G$ if and only if there is no augmenting path for $\sigma$ in $G$. 
\end{lemma}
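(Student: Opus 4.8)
The plan is to prove both implications by contraposition, using the symmetric-difference machinery already set up in \cref{subsect:notbipgraphs}.

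For the ``only if'' direction, suppose $\gamma$ is an augmenting path for $\sigma$. As recalled above, $\sigma\oplus\gamma=(\sigma\setminus\gamma)\cup(\gamma\setminus\sigma)$ is again a matching: internal vertices of $\gamma$ keep exactly one incident edge, and the two endpoints, which were exposed by $\sigma$, each gain one. Since $\gamma$ has odd length with one more edge outside $\sigma$ than inside, $|\sigma\oplus\gamma|=|\sigma|+1$, so $\sigma$ is not a maximum matching. This establishes the contrapositive of the first implication.

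For the ``if'' direction, assume $\sigma$ is not maximum and pick a matching $\tau$ with $|\tau|>|\sigma|$. Consider the components of $\sigma\oplus\tau$. Every vertex has degree at most one in $\sigma$ and at most one in $\tau$, hence degree at most two in $\sigma\oplus\tau$, so each component is a path or an even cycle whose edges alternate between $\sigma$ and $\tau$. An even cycle contributes equally many edges to $\sigma$ and to $\tau$, and so does a path of even length; only a path of odd length whose first and last edges both lie in $\tau$ contributes a surplus of $\tau$-edges. Since $|\tau|-|\sigma|>0$, at least one component must be such an odd path $\gamma$. Its two endpoints are incident to a $\tau$-edge but to no $\sigma$-edge in $\gamma$; being endpoints of a component of $\sigma\oplus\tau$, they are incident to no $\sigma$-edge at all, i.e.\ they are exposed by $\sigma$. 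Thus $\gamma$ is an alternating path for $\sigma$ between two exposed vertices, that is, an augmenting path for $\sigma$.

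The argument is elementary and its only slightly delicate point is the counting step: one must be careful to classify the components of $\sigma\oplus\tau$ exhaustively (even cycles, even-length paths, odd-length paths with two $\tau$-end-edges, odd-length paths with two $\sigma$-end-edges) and observe that only the third type can absorb the cardinality surplus $|\tau|-|\sigma|$. Everything else is bookkeeping. Note that bipartiteness of $G$ is not actually needed; the same proof yields Berge's Lemma for arbitrary graphs, but we state it in the bipartite form in which it is applied.
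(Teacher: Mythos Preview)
Your argument is correct and is the standard proof of Berge's Lemma. Note, however, that the paper does not supply its own proof of this statement: it is quoted as a well-known result from~\cite{Berge} and used as a black box in \cref{sect:construction}. There is therefore nothing to compare against; your write-up is a self-contained verification of a lemma the paper takes for granted.

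One small clarification you could add for completeness: when you assert that the endpoints of the odd-length path~$\gamma$ are exposed by~$\sigma$, you implicitly use that an endpoint incident to a $\tau$-edge of $\sigma\oplus\tau$ cannot carry any $\sigma$-edge at all---for if it did, that $\sigma$-edge would either lie in $\sigma\oplus\tau$ (contradicting degree one at the endpoint) or coincide with a $\tau$-edge (giving two $\tau$-edges at that vertex). You gesture at this, but spelling it out removes the only place a reader might pause.
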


\begin{definition}
Let $G=(A,B;E)$ be a bipartite graph and $w:E \to \R_{{\geq}0}$ be a function giving weights to its edges. 
Given $r \in [|E|]$, let
\[ E(r)= \left\{e \in E: w(e) \text{ is among the } r \text{ smallest values of } w(E) \right\}. \]
We denote by $G(r)$ the graph $G=(A,B;E(r))$.
\end{definition}

We say that a matching in a bipartite graph is a \emph{complete matching} if it matches all the points in the smaller set of vertices.

\begin{proposition}\label{lemabot}
Let $G=(A,B;E)$ be a bipartite graph that has a complete matching and let $w:E \to \R_{{\geq}0}$ be a function giving weights to the edges of~$G$.
\begin{enumerate}[(i)]
\item If $G(r)$ has a complete matching, then $G(j)$ has a complete matching for all $j>r$.
\item A complete matching in $G(r)$ is a bottleneck matching if and only if $G(r-1)$ has no complete matching. \label{characbot}
\end{enumerate}
\end{proposition}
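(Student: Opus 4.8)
The plan is to prove both parts by exploiting the monotone nesting of the subgraphs $G(r)$ together with Berge's Lemma. Recall that $E(r)$ consists of the $r$ lightest edges, so $E(1)\subseteq E(2)\subseteq\cdots\subseteq E(|E|)=E$, and hence $G(r)$ is a subgraph of $G(j)$ whenever $r\leq j$. Since $G$ has a complete matching by hypothesis, every maximum matching of $G$ has size $|B|$, and the same will be true of $G(r)$ precisely when $G(r)$ has a complete matching.

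For part (i), suppose $G(r)$ has a complete matching $\sigma$, and let $j>r$. Then $\sigma\subseteq E(r)\subseteq E(j)$, so $\sigma$ is already a complete matching in $G(j)$; there is nothing more to do. (Monotonicity makes this direction immediate, which is why it is stated first.)

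For part (ii), I would argue both implications. Assume first that $\sigma$ is a complete matching in $G(r)$ that is a bottleneck matching for $G$, and suppose for contradiction that $G(r-1)$ also has a complete matching $\tau$. Every edge of $\tau$ lies in $E(r-1)$, hence has weight among the $r-1$ smallest values of $w(E)$, which is strictly smaller than the weight of the heaviest edge that could appear in $\sigma\subseteq E(r)$ — more carefully, since $\sigma$ uses an edge whose weight is at least the $r$-th smallest weight in $w(E)$ (it has $|B|$ edges, and if all of them were in $E(r-1)$ then $\sigma$ would already be a complete matching in $G(r-1)$, giving the same conclusion a different way), the bottleneck cost of $\tau$ is no larger than that of $\sigma$, and in the generic case strictly smaller. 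The clean way to phrase this without tie-breaking subtleties: the bottleneck cost of any complete matching in $G(r-1)$ is at most the $(r-1)$-st smallest weight, whereas any complete matching in $G(r)$ that is \emph{not} contained in $G(r-1)$ must use an edge of weight equal to the $r$-th smallest weight (or higher), so $\tau$ would be a complete matching of bottleneck cost not exceeding that of $\sigma$ and using only edges of $E(r-1)\subsetneq E(r)$; iterating if necessary we may assume $\sigma$ itself is not contained in $G(r-1)$, and then $\tau$ strictly beats it, contradicting that $\sigma$ is a bottleneck matching. Conversely, assume $G(r-1)$ has no complete matching but $G(r)$ does, and let $\sigma$ be a complete matching in $G(r)$; I must show $\sigma$ is a bottleneck matching for $G$. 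Let $\rho$ be any bottleneck matching of $G$ and let $w_\rho$ be its bottleneck cost. Let $s$ be the smallest index with $w_\rho$ among the $s$ smallest weights of $w(E)$; then $\rho\subseteq E(s)$, so $G(s)$ has a complete matching, which by part (i) forces $G(j)$ to have one for all $j\geq s$. Since $G(r-1)$ has \emph{no} complete matching, we get $r-1<s$, i.e. $r\leq s$, so the bottleneck cost of $\sigma$ (which is at most the $r$-th smallest weight) is at most the $s$-th smallest weight $= w_\rho$. Hence $\sigma$ is optimal.

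The main obstacle is the bookkeeping around ties in the weight values: ``the $r$-th smallest value of $w(E)$'' is well-defined as a multiset rank, and one has to be careful that $E(r)$ may contain edges of the same weight as the largest weight in $E(r-1)$, so that $w(E(r))$ and $w(E(r-1))$ need not differ. The reduction ``we may assume $\sigma$ is not contained in $G(r-1)$'' handles this: if $\sigma\subseteq E(r-1)$ then $G(r-1)$ has a complete matching, contradicting the hypothesis of the forward direction, and in the backward direction this case simply cannot occur. Once that is pinned down, everything reduces to the nesting $E(r-1)\subseteq E(r)$ plus the observation that a complete matching uses exactly $|B|$ edges, and Berge's Lemma is only needed implicitly to know that ``having a complete matching'' is the same as ``having a maximum matching of size $|B|$''.
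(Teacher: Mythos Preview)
The paper states this proposition without proof, as a standard fact from the matching literature, so there is no argument in the paper to compare against. Your proof of part~(i) and of the backward implication in~(ii) is correct and is the natural one.

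The forward implication in~(ii), however, has a genuine gap. You correctly isolate the problematic case $\sigma\subseteq E(r-1)$, but your handling of it is wrong: in the last paragraph you claim this case ``contradict[s] the hypothesis of the forward direction'', when in fact it contradicts the \emph{conclusion}. Concretely, take $|B|=1$, $|A|=2$ with distinct weights $1$ and~$2$; then $\sigma=\{e_1\}$ is a complete matching in $G(2)$ and the unique bottleneck matching, yet $G(1)$ contains the complete matching~$\sigma$ itself. Thus the forward direction, read exactly as you read it (``some complete matching of $G(r)$ is bottleneck $\Rightarrow$ $G(r-1)$ has no complete matching''), is simply false, and your ``iterating if necessary'' does not rescue it.

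The intended reading---visible from how the proposition is invoked in the proof of \cref{lem:bot}, where one sets $s=w(l)$ for the longest edge~$l$ of the matching---is that $r$ is the \emph{rank of the heaviest edge of $\sigma$}, i.e., $r$ is minimal with $\sigma\subseteq E(r)$. Under that convention $\sigma\not\subseteq E(r-1)$ by definition (and hence $E(r-1)\subsetneq E(r)$), so any complete matching $\tau\subseteq E(r-1)$ has strictly smaller bottleneck cost than~$\sigma$, giving the desired contradiction. Replace the ``iterating'' sentence by this minimality assumption and your argument becomes correct.
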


\subsection{Main Construction Theorem}

In this section, we discuss some algorithmic techniques in order to construct a labeled bottleneck and a lex-bottleneck diagram for a pair of point sets in the plane. 
To this end, we introduce the following notation.

\begin{definition}\label{def:lexo}
For finite point sets $A,B \subset \R^2$, we denote by $\lexo(A,B)$ the arrangement constructed in the course of the proof of~\cref{arrangementrel}. 
Given a cell $C$ of $\lexo(A,B)$, we denote by $Z(C)$ the set $Z(t)$ for any point $t$ interior to $C$.  
\end{definition}

We first describe how to use established techniques in order to construct an unlabeled lex-bottleneck diagram.

\begin{lemma}\label{lexarrange}
The lex-bottleneck diagram $\lexo(A,B)$ for point sets $A,B \subset \R^2$ with $k=|B|\leq |A|=n$ can be constructed in $O(n^2k^6)$ time. 
\end{lemma}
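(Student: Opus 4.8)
The plan is to realize $\lexo(A,B)$ as an explicit sub-arrangement of the full bisector arrangement $\lex(A,B)$, built from a collection of hyperplanes that we can identify without first constructing all of $\lex(A,B)$. Concretely, the proof of \cref{arrangementrel} shows that the only bisectors one must keep are those supporting an order-$j$ Voronoi edge of one of the point sets $S(b)=A-b$ (for $j\le k$) or $S(b,b')=(A-b)\cup(A-b')$ (for $j\le 2k-1$), over all $b,b'\in B$. So the first step is to compute, for each of the $O(k)$ sets $S(b)$ and each of the $O(k^2)$ sets $S(b,b')$, the union of the order-$j$ Voronoi diagrams up to the relevant order, extract the (at most $O(nk)$, by \cref{ClarkShor}) supporting bisectors, and take the union of all these bisector families. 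This yields a set $\mathcal{H}$ of $O(nk\cdot k^2)=O(nk^3)$ hyperplanes in $\R^2$.

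\textbf{Key steps.} First I would invoke \cref{ClarkShor}: for a fixed $b$, the set $S(b)$ has $n$ points, so $O(nk)$ lines support all order-$j$ Voronoi edges for $j\le k$; for a fixed pair $b,b'$, the set $S(b,b')$ has $O(n)$ points, so again $O(nk)$ lines suffice for $j\le 2k-1$. The edges of order-$j$ Voronoi diagrams for all $j$ up to a bound $K$ can be enumerated within the time to build those diagrams; using the known $O(nK^2 + n\log n)$-type bounds (or simply bounding by the size $O(nK^2)$ of the union and a near-linear-per-order construction), each of the $O(k)$ families for the $S(b)$ costs $O(nk^2)$ and each of the $O(k^2)$ families for the $S(b,b')$ costs $O(nk^2)$, for a total of $O(n k^4)$ to assemble $\mathcal{H}$. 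Since $k\le n$, this is dominated by $O(n^2k^6)$, so it is not the bottleneck. Second, having $\mathcal{H}$ with $m=|\mathcal{H}|=O(nk^3)$ lines in the plane, I would build the arrangement $\mathcal{A}(\mathcal{H})$ by the standard incremental algorithm, which runs in $O(m^2)=O(n^2k^6)$ time and produces a polyhedral complex covering $\R^2$ of the same complexity. Third, I would argue that $\mathcal{A}(\mathcal{H})$ is exactly (a refinement-free copy of) the arrangement $\lexo(A,B)$ from the proof of \cref{arrangementrel}: every bisector kept in that proof lies in $\mathcal{H}$ by construction, and conversely every line of $\mathcal{H}$ is one of the bisectors $h(a,b,a',b')$, so by \cref{epolyhedral} and the argument in the proof of \cref{thm:bot} the arrangement is a lex-bottleneck diagram of complexity $O(n^2k^6)$. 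Putting the two time bounds together gives the claimed $O(n^2k^6)$.

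\textbf{Main obstacle.} The delicate point is the enumeration of the supporting bisectors in the stated time: \cref{ClarkShor} asserts the \emph{existence} of $O(nk)$ such bisectors but I need an algorithm that actually outputs them (or outputs a superset of size $O(nk)$) fast enough. The clean way around this is to not extract a minimal family at all: build the order-$j$ Voronoi diagrams $V_j(S(b))$ for $j=1,\dots,k$ and $V_j(S(b,b'))$ for $j=1,\dots,2k-1$ — whose edges collectively lie on $O(nk)$ lines by \cref{ClarkShor}, but which I can traverse directly — collect the distinct supporting lines encountered, and use those as $\mathcal{H}$. The construction of all order-$\le K$ Voronoi diagrams of an $N$-point set takes $O(NK^2\log N)$ or better, which for $N=O(n)$, $K=O(k)$, summed over $O(k^2)$ index pairs, stays within $O(n^2k^6)$. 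One must also double-check the degenerate cases flagged in the proof of \cref{thm:bot} — collinear triples (handled by \cref{prop:VorArran}) and coincident shifted points $a_1-b=a_2-b'$ (which correspond to globally-equal-length edges and induce no bisector) — so that no spurious or missing hyperplanes enter $\mathcal{H}$; these are the same technicalities already resolved in the proof of \cref{arrangementrel}, so nothing new is needed beyond citing that argument.
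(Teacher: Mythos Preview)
Your proposal is correct and follows essentially the same route as the paper: identify the $O(nk^3)$ bisectors that support low-order Voronoi edges of the sets $S(b)$ and $S(b,b')$, then build their line arrangement in $O(n^2k^6)$ time. The only difference is the tool used to enumerate those bisectors---the paper invokes Chan's $({\le}s)$-level algorithm for the lifted plane arrangement (giving a $O\!\left(k^4 n (\log n / \log k)^{O(1)}\right)$ bound for that phase), whereas you propose to build the individual order-$j$ Voronoi diagrams and read the supporting lines off their edges; both approaches stay safely below the $O(n^2k^6)$ budget, and both actually output the set of bisectors \emph{supporting} the relevant Voronoi edges, which is a superset of the ``used'' bisectors from the proof of \cref{arrangementrel} but still of size $O(nk^3)$ by \cref{ClarkShor}. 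Your remark that $\mathcal{A}(\mathcal{H})$ is ``exactly'' $\lexo(A,B)$ is slightly too strong in principle (it could be a refinement), but the paper's own construction produces the same arrangement, so there is no discrepancy in practice.
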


\begin{proof}
An arrangement of $m$ lines in the plane can be computed in $O(m^2)$ time using an optimal algorithm, such as the incremental algorithm~\cite{ChazelleGuibas} or a topological sweep~\cite{EdelsbrunnerGuibas}. 
However, the proof of~\cref{thm:bot} is not constructive and, hence, it is not obvious how to select the bisectors that are used by some edge (in the sense of~\cref{arrangementrel}) among the $O(n^2k^2)$ candidates. 
Fortunately, the algorithm by Chan~\cite{Chan} constructs the facial structure of the $({\leq} s)$-level of an arrangement of $m$ planes in $O\!\left(m \log m+ms^2\right)$ expected time. 
In addition, this algorithm can be derandomized, leading to a deterministic version running in $O\!\left(ms^2(\log m / \log s)^{O(1)}\right)$ time. 
We can then construct the $O(k^2)$ necessary structures in $O\!\left(k^4 n ( \log n / \log k)^{O(1)}\right)$ and traverse each of them discovering the $O(nk^3)$ used bisectors to finally construct their arrangement.
\end{proof}

% A similar procedure works for the one-dimensional case. 
% \todo{Is the one-dimensional case really relevant?}
% 
% \begin{lemma}\label{arrangementline}
% The lex-bottleneck diagram $\lexo(A,B)$ for a pair of point sets $A,B \subset \R$ with $k=|B|\leq |A|=n$ can be constructed in $O(k^2n( \log n + k \log k))$ time.
% \end{lemma}
% 
% \begin{proof}
% Similarly to the proof of~\cref{lexarrange}, we will select the vertices that belong to shallow levels in the Voronoi arrangements associated to $S(b)$ and $S(b,b')$, as defined in the proof of~\cref{thm:bot}. 
% 
% The work in \cite{Everett} provides us with an algorithm to construct the $({\leq}s)$-level of an arrangement of $m$ lines in time $O(m \log m + ms)$. 
% We compute the appropriate levels for each point set and construct an abscissa-sorted list of the projections of its vertices in $O(nk \log k)$ time by traversing the edges of every level and using a priority queue to merge their vertices.
% Afterwards, we can merge the $O(k^2)$ lists of $O(nk)$ points in $O(nk^3 \log k)$ time using again a priority queue.
% \end{proof}

We now show how to find a bottleneck labeling of $\lexo(A,B)$. 
Before detailing the algorithm, we need a technical lemma that examines how small changes in a graph affect its bottleneck matchings. 
Recall that a matching in a bipartite graph on $A$ and $B$ is a complete matching if it matches all the points of~$B$.

\begin{lemma}\label{lem:bot}
Let $G=(A,B;E,w)$ be a bipartite graph with $w:E \to [|E|]$ giving weights to its edges.
Let $\mu$ be a bottleneck matching for $G$, and let $l \in E$ be the longest edge of $\mu$ in $G$. 
For a fixed $j \in [|E|-1]$, let $G'=(A,B;E,w')$ be the weighted bipartite graph, where $w'$ coincides with $w$ except that $w'(e)=j$ if $w(e)=j+1$, and $w'(e)=j+1$ if $w(e)=j$, for all edges $e \in E$.
\begin{enumerate}[(i)]
\item If $w(l) \not \in \{j,j+1\}$, then $\mu$ is a bottleneck matching for $G'$. 
\item If $w(l) \in \{j,j+1\}$ and $G'(j)$ does not have a complete matching, then $\mu$ is a bottleneck matching for $G'$. 
\item If $w(l) \in \{j,j+1\}$ and $G'(j)$ has a complete matching $\nu$, then $\nu$ is a bottleneck matching for $G'$. 
\end{enumerate}
\end{lemma}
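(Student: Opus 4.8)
\textbf{Proof plan for \cref{lem:bot}.}

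The plan is to analyze the three cases separately, in each case comparing bottleneck matchings in $G$ and $G'$ by exploiting the fact that $w$ and $w'$ differ only by swapping the labels $j$ and $j+1$, so that for every threshold $r$ the graphs $G(r)$ and $G'(r)$ coincide \emph{except possibly when $r=j$}. Indeed, $E(r)=E'(r)$ for all $r\ne j$: the set of edges with the $r$ smallest weights is unchanged unless the swap moves an edge across the boundary between the $j$-th and $(j{+}1)$-st smallest values, which only affects $E(j)$ versus $E'(j)$. This single observation, combined with \cref{lemabot}, drives all three parts.

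For part (i), suppose $w(l)\notin\{j,j+1\}$, say $w(l)=r$. Then $w'(l)=r$ as well, and $\mu$ is a complete matching in $G'(r)=G(r)$. By \cref{lemabot}-(\ref{characbot}), since $\mu$ is a bottleneck matching for $G$, the graph $G(r-1)$ has no complete matching. If $r-1\ne j$ then $G'(r-1)=G(r-1)$ has no complete matching and we are done by \cref{lemabot}-(\ref{characbot}). If $r-1=j$, then $r=j+1$ is excluded by hypothesis, so this subcase does not occur; more carefully, one checks that when $w(l)\notin\{j,j+1\}$ we have $r-1\ne j$ unless $r=j+1$, which is forbidden. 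Hence $\mu$ remains a bottleneck matching for $G'$. For part (ii), assume $w(l)\in\{j,j+1\}$ and $G'(j)$ has no complete matching. First I would argue that the bottleneck value of $G'$ is $j+1$: the full edge set $E$ supports a complete matching (namely $\mu$), and $G'(j)$ has none, so by \cref{lemabot}-(i) the smallest $r$ with a complete matching in $G'(r)$ satisfies $r\ge j+1$; since $G'(j+1)\supseteq G'(j)$ and in fact $E'(j+1)\supseteq E(j)\cup E(j+1)\supseteq$ enough edges — here I would simply note $E'(j+1)=E(j+1)$ because swapping the two labels $j,j+1$ does not change which edges are among the $j+1$ smallest — and $\mu\subseteq E(w(l))\subseteq E(j+1)=E'(j+1)$, so $G'(j+1)$ has the complete matching $\mu$. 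Thus the bottleneck value of $G'$ equals $j+1$ and $\mu$, whose longest edge has $w'$-weight in $\{j,j+1\}$, realizes it: $w'(l)\le j+1$ and $\mu$ is complete, so $\mu$ is a bottleneck matching for $G'$ by \cref{lemabot}-(\ref{characbot}) applied with $r=j+1$ (the hypothesis $G'(j)$ has no complete matching is exactly what is needed). For part (iii), assume $w(l)\in\{j,j+1\}$ and $G'(j)$ has a complete matching $\nu$. Then the bottleneck value of $G'$ is at most $j$. On the other hand I would show it is exactly $j$: since $w(l)\in\{j,j+1\}$ and $\mu$ is a bottleneck matching for $G$, we have $G(w(l)-1)$ has no complete matching, and $w(l)-1\in\{j-1,j\}$; in either case $E(w(l)-1)\subseteq E(j)$, so $G(j-1)$ — wait, more directly: $G'(j-1)=G(j-1)$ (since $j-1\ne j$), and $E(j-1)\subseteq E(w(l)-1)$ when $w(l)\ge j$, which holds, so $G'(j-1)$ has no complete matching. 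Hence by \cref{lemabot}-(\ref{characbot}) with $r=j$, the complete matching $\nu\subseteq E'(j)$ is a bottleneck matching for $G'$.

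The main obstacle I anticipate is the bookkeeping in part (iii) — specifically, verifying cleanly that $G'(j-1)$ has no complete matching. This requires tracking exactly which edges cross the label boundary under the swap and relating $w(l)-1$ to $j-1$ in both subcases $w(l)=j$ and $w(l)=j+1$; the subcase $w(l)=j$ is the delicate one, since then $G(j-1)$ having no complete matching must be deduced from $\mu$ being a bottleneck matching for $G$, and one must be sure the swap does not secretly add a useful edge to the $(j{-}1)$-threshold graph (it cannot, since the swap only exchanges the $j$-th and $(j{+}1)$-st positions and leaves the first $j-1$ positions untouched). Everything else reduces to careful but routine application of \cref{lemabot}; I would organize the write-up around the single lemma "$E(r)=E'(r)$ for all $r\ne j$" to keep the three cases short.
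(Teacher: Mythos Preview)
Your proposal is correct and follows essentially the same route as the paper: both proofs hinge on the single observation that $G(r)=G'(r)$ for every $r\neq j$, and then dispatch each of the three cases via \cref{lemabot}-(\ref{characbot}) exactly as you outline. Your anticipated obstacle in part (iii) is not a real one---the paper handles it in one line by noting $G'(i)=G(i)$ for all $i<j$, and your argument that $G(j-1)\subseteq G(w(l)-1)$ (since $w(l)\ge j$) lacks a complete matching is the same reasoning made explicit.
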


\begin{proof}
Note first that $G(i)=G'(i)$ for all $i \neq j$ and recall the characterization of bottleneck assignments in~\cref{lemabot}-(\ref{characbot}). 

(i): Let $w(l)=s \not \in \{ j,j+1\}$. 
Since $\mu$ is a bottleneck matching for $G$, the graph $G(s)$ has a complete matching and $G(r)$ does not, for any $r < s$. 
Hence, the graph $G'(s)=G(s)$ has a complete matching and $G'(s-1)=G(s-1)$ does not.
In addition, $\mu \subseteq G'(s)$, which ensures that $\mu$ is indeed a bottleneck matching for~$G'$.

(ii): Since we assumed that $G'(j)$ has no complete matching, any complete matching in $G'(j+1)$ is a bottleneck matching for~$G'$. 
The matching $\mu$ is contained in $G'(j+1)=G(j+1)$ because we assumed $w(l) \in \{j,j+1\}$. 

(iii): It is clear that $G'(i)= G(i)$ for all $i<j$ and, hence, it does not have a complete matching. 
Since $\nu \subseteq G'(j)$, it is a bottleneck matching for $G'$.
\end{proof}

As we have seen in \cref{sect:botmatchings}, several edges in $A\times B$ can have the same length wherever the point set $B$ is translated.
However, this happens if and only if all such edges are between points $b\in B$ and $a\in A$ with the same vector~$b-a$.
In order to also handle point sets in this special position, we introduce the following equivalence relation.

\begin{definition}\label{def:classes}
Let $A,B\subset\R^2$ be finite point sets. Two edges $ab,a'b' \in A \times B$ are said to be \emph{equivalent} if $b-a = b'-a'$. 
\end{definition}

Note that any two equivalent edges match distinct elements of $B$ and, hence, the size of every equivalence class is at most $k$. 

\begin{algorithm}\label{alg:construction}
 \SetAlgoNoEnd
%  \SetAlgoNoLine
 \DontPrintSemicolon
 \KwIn{planar points sets $A,B \subset \R^2$ with $k=|B|\leq|A|=n$}
 \KwOut{labeled bottleneck diagram for $A$ and $B$}
 \smallskip
%  \tcp{building equivalence classes of edges}
 sort $\{b-a : b\in B, a\in A\}$ lexicographically; group in equivalence classes\;\nllabel{line_grouping}
%  \tcp{constructing $\lexo(A,B)$ and remembering involved edges for each used bisector}
 \ForEach{$b\in B$}{\nllabel{line_startL}
  compute $(\leq k)$-level of arrangement for $S(b)=A-b$\;\nllabel{line_chanb}
  \ForEach{used bisector $h=h(a,b,a',b)$}{
   $\EP(h)\leftarrow\EP(h)\cup\left\{\left(ab,a'b\right)\right\}$\;\nllabel{line_epb}
  }
  \ForEach{$b'\in B\setminus\{b\}$}{
   compute $(\leq 2k-1)$-level for $S(b,b')=(A-b)\cup(A-b')$\;\nllabel{line_chanbb}
   \ForEach{used bisector $h=h(a,b,a',b')$}{
    $\EP(h)\leftarrow\EP(h)\cup\left\{\left(ab,a'b'\right)\right\}$\;\nllabel{line_epbb}
   }
  }
 }
 $\lexo(A,B)\leftarrow$ arrangement of the used bisectors\;\nllabel{line_endL}
%  \tcp{traversing $\lexo(A,B)$ and maintaining a labeled graph to update the bottleneck matchings}
 $C\leftarrow$ a cell of $\lexo(A,B)$; $t\leftarrow$ an interior point of $C$\;
 $S\leftarrow$ sorted list of $\{\|b+t-a\|:b\in B, a\in A\}$\;
 $G\leftarrow$ graph with edges $Z(C)$ weighted by their order $w$ in $S$\;
 label $C$ with $\mu\leftarrow$ bottleneck matching in $G$ with respect to $w$\;\nllabel{line_firstBM}
 \While{there is an unprocessed neighboring cell $D$ of $C$}{
  $h\leftarrow D\cap C$\;
  \ForEach{$(ab,a'b)\in\EP(h)$}{
   \If{$(ab\in G, a'b\notin G)$ {\bf or} $(ab\notin G, a'b\in G)$}{
    $G\leftarrow G\setminus\{\textrm{contained edge}\}\cup\{\textrm{non-contained edge}\}$\;
    $l\leftarrow$ longest edge of $\mu$\;
    $\mu\leftarrow$ augmented matching $\mu\setminus\{\textrm{contained edge}\}$ in $G(w(l))$\;
   }
   \If{$ab\in G$ {\bf and} $a'b\in G$}{
    swap $w(ab)$ and $w(a'b)$ in $G$\;
   }
  }
  \ForEach{$(ab,a'b')\in\EP(h)$ with $b\neq b'$}{
   swap $w(ab)$ and $w(a'b')$ in $G$\;
   $l\leftarrow$ longest edge of $\mu$; $j\leftarrow\min\{w(ab),w(a'b')\}$\;
   \If{$w(l)\in\{j,j+1\}$ {\bf and} $G(j)$ has a complete matching $\nu$}{
    $\mu\leftarrow\nu$\;
   }
  }
  label $D$ with $\mu$\;
  $C\leftarrow D$\;\nllabel{line_laststep_last}
 }
%  \caption{Constructing a labeled bottleneck diagram.}
 \caption{LabeledBottleneckDiagram($A$,$B$)}
\end{algorithm}

\begin{theorem}
\label{thm:constr}
Let $A,B \subset \R^2$ be with $k=|B|\leq|A|=n$. 
A labeled bottleneck diagram of $A$ and~$B$ can be computed in $O(n^2k^{8})$ time, and a labeled lex-bottleneck diagram in $O(n^2k^{10})$ time. 
\end{theorem}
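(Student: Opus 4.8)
The plan is to analyze \cref{alg:construction} and argue that it correctly outputs a labeled bottleneck diagram within the claimed time bound, and then explain the additional work needed for the lexicographic variant. First I would bound the preprocessing: lines~\ref{line_startL}--\ref{line_endL} invoke Chan's level-construction algorithm (as in \cref{lexarrange}) on $O(k^2)$ instances of total size $O(n)$ each to obtain all the $O(nk^3)$ used bisectors, and then build their arrangement; this costs $O(n^2k^6)$ time by \cref{lexarrange}, and for each used bisector $h$ the associated set $\EP(h)$ of edge pairs has size at most the number of equivalence classes meeting it, which is $O(k)$ by \cref{def:classes} and the remark after it, so $\sum_h|\EP(h)| = O(nk^4)$. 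Computing the first label (line~\ref{line_firstBM}) requires building the graph $G$ on the $O(k^2)$ edges of $Z(C)$ and running a bottleneck-assignment algorithm on a graph with $O(k^2)$ vertices and edges, which by the discussion in \cref{subsect:notbipgraphs} takes $O(k^{2}\sqrt{k\log k})$ time, negligible against the arrangement cost.

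The core of the argument is the correctness and cost of the traversal loop (lines through~\ref{line_laststep_last}). When we cross a bisector $h=D\cap C$, the total order on edge lengths changes only by the transpositions encoded in $\EP(h)$, each of which either swaps two consecutive ranks of edges both present in $G$ (so \cref{lem:bot}-(i) or the trivial case applies and $\mu$ stays bottleneck after the bookkeeping swap) or replaces a contained edge by a non-contained equivalent-rank edge coming into $Z(C)$, or is of the mixed-$b$ type handled by \cref{lem:bot}-(iii). I would verify that in the mixed case the algorithm does exactly what \cref{lem:bot} prescribes: it performs the weight swap, checks whether the longest matched edge has rank in $\{j,j+1\}$, and, if so and $G(j)$ admits a complete matching $\nu$, replaces $\mu$ by $\nu$; otherwise \cref{lem:bot}-(i)/(ii) guarantees $\mu$ is still bottleneck. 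For the edge-insertion case one observes that $G$ gains one edge at the rank of the longest edge $l$ and loses one, so one augmenting-path search in $G(w(l))$ restores a bottleneck matching (using \cref{lemabot}-(\ref{characbot}) and \cref{lem:berge}). Hence after processing all pairs of $\EP(h)$ the matching $\mu$ is a bottleneck matching in the graph valid for cell $D$, and since $Z(D)\supseteq$ a $b$-minimal set for each $b$ (by the continuity argument in the proof of \cref{thm:bot}), the label is valid throughout $D$; because every cell is reached exactly once through a spanning traversal, we obtain a bottleneck labeling of all of $\lexo(A,B)$.

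For the running time of the traversal: the number of cells of $\lexo(A,B)$ is $O(n^2k^6)$ by \cref{thm:bot}, and across the whole traversal each bisector $h$ is crossed a number of times proportional to the number of edges of the arrangement lying on $h$. Each individual update triggered by an element of $\EP(h)$ costs at most one maximum-matching (augmenting-path or complete-matching) computation in a graph of size $O(k^2)$, i.e.\ $O(k^2\sqrt{k\log k})=O(k^3)$ time; being slightly generous and charging $O(k^2)$ per update with a $O(k)$-size $\EP(h)$ gives an overhead of $O(k^2)$ per cell, for a total of $O(n^2k^6)\cdot O(k^2)=O(n^2k^8)$. I would be careful here to charge the Chan-level preprocessing once and to confirm that maintaining the sorted list and the weight array $w$ under single transpositions is $O(1)$ per step, and that extracting the longest edge $l$ of $\mu$ is $O(k)$; none of this dominates. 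For the lex-bottleneck diagram one replaces the per-face update by the $O(k^4)$ lexicographic-bottleneck recomputation of \cite{Sokkalingam} after pruning to $Z(\cdot)$, and also labels the $O(n^2k^6)$ lower-dimensional faces (whose count is of the same order); this raises the overhead from $O(k^2)$ to $O(k^4)$ per face, yielding $O(n^2k^{10})$.

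The main obstacle I anticipate is the traversal correctness: establishing rigorously that the incremental updates in the loop keep $\mu$ a valid bottleneck (resp.\ lex-bottleneck) matching for the graph attached to the \emph{current} cell, as the relevant edge set $Z(C)$ and the length order both change when crossing $h$. The delicate points are (a) that $\EP(h)$ records precisely the transpositions of the length order induced by crossing $h$, restricted to the edges that can ever appear in some $Z(\cdot)$ of an incident cell, so that no relevant reordering is missed; (b) that when the matched set or $Z(C)$ changes at a crossing, a single augmenting-path step suffices rather than a full recomputation, which is exactly the content of \cref{lem:bot} combined with \cref{lemabot}; and (c) handling the equivalence classes of \cref{def:classes} so that edges that are always equally long are treated consistently and the arrangement $\lexo(A,B)$ is well-defined despite coincident edge lengths. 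Once these are in place, the time analysis is a routine accounting over the cells and faces of $\lexo(A,B)$.
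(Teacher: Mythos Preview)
Your overall strategy is the paper's: build $\lexo(A,B)$ via \cref{lexarrange}, record the edge pairs $\EP(h)$, compute one initial bottleneck matching, and then traverse the arrangement maintaining $G$, $w$, and $\mu$ via \cref{lem:bot} and single augmenting-path searches; for the lexicographic version, recompute in every face with the $O(k^4)$ routine of \cite{Sokkalingam}. The correctness outline (your points (a)--(c)) matches the paper's reasoning.

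There is, however, a genuine gap in your running-time accounting. First, the claim that $|\EP(h)|=O(k)$ ``by \cref{def:classes} and the remark after it'' is not justified: that remark only says each equivalence class has at most $k$ edges, which does not bound the number of edge \emph{pairs} that can induce the same bisecting line. In degenerate configurations a single line can be the bisector of many pairs, and the paper never bounds $|\EP(h)|$ individually. Second, even granting your $O(k)$ bound, the arithmetic does not yield ``$O(k^2)$ per cell'': with $O(k)$ pairs per crossing and $O(k^2)$ per augmenting-path search you get $O(k^3)$ per crossing, hence $O(n^2k^9)$ overall, not $O(n^2k^8)$.

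The paper closes this gap differently. It argues that the \emph{total} number of edge pairs over all used bisectors is $O(nk^3)$, because every such pair is already counted by the Clarkson--Shor bound of \cref{ClarkShor} applied to the sets $S(b)$ and $S(b,b')$ (coinciding bisectors correspond to edges of different orders, and ``double'' points of $S(b,b')$ contribute only a constant factor). Then the number of updates during the traversal is bounded by the complexity of the \emph{virtual} arrangement obtained by replacing each bisector $h$ with $|\EP(h)|$ infinitesimally separated parallel lines; since the total number of such lines is still $O(nk^3)$, this virtual arrangement has $O(n^2k^6)$ edges. Each update is a single augmenting-path search in a graph with $O(k^2)$ edges, costing $O(k^2)$, for a total of $O(n^2k^8)$. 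You should replace your per-bisector bound on $|\EP(h)|$ and the per-cell charge by this global virtual-arrangement argument.
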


\begin{proof}
We construct the diagrams by labeling the cells (and faces) of $\lexo(A,B)$ with a (lex-)bottleneck matching. 
A naive algorithm to do this would compute such a matching from scratch in every cell. 
However, we can maintain a bottleneck matching during a traversal of $\lexo(A,B)$, improving the time complexity of the algorithm.
Unfortunately, this is not the case for the lex-bottleneck diagram, for which the best algorithm we know recomputes (most of) the matching in a number of faces of $\lexo(A,B)$ that we are not able to bound away from its total complexity.

We detail first the algorithm to construct a bottleneck labeling.
A pseudocode description is provided in \cref{alg:construction}.
We start by grouping the edges $b-a$, for $b\in B$ and $a\in A$, into equivalence classes, i.e., groups of edges that have the same length for any fixed translation, as defined in \cref{def:classes}.
The involved sorting can be done in $O(nk \log n )$ time. 

Then, we construct the line arrangement $\lexo(A,B)$ as described in \cref{lexarrange}, and we remember the involved edges for every used bisector. 
Note however that a bisecting line might be selected several times during this process.
That is why we record a list of all the pairs of edges $(ab,a'b')$ that induce such a bisector (see Lines~\ref{line_epb} and~\ref{line_epbb}).
As a consequence of~\cref{prop:VorArran}, every such pair of edges inducing a fixed bisector of $S(b)$ is counted by~\cref{ClarkShor}. 
That is, if $h(a_1,b,a_2,b)=h(a_3,b,a_4,b)$, the first pair is counted as an order-$j_1$ edge and the second pair as an order-$j_2$ edge of $S(b)$ with $j_1 \ne j_2$. 
More precisely, since we could infinitesimally perturb the points in $S(b)$ such that $h(a_1,b,a_2,b) \ne h(a_3,b,a_4,b)$ for any choice of different points $a_1,a_2,a_3,a_4 \in A$ without altering the level of the edges inducing them, the bound in~\cref{ClarkShor} counts already all the pairs inducing the same bisector. 
Every point in $S(b,b')$, for $b,b' \in B$ with $b \ne b'$, can correspond to two equivalent edges. 
Hence, for each ``double'' point inducing a bisector~$h$, we add the corresponding additional pair of edges to the list $\EP(h)$ (see Line~\ref{line_epbb}). 
Therefore, the total number of edge pairs associated to bisectors is only a constant factor bigger than the bound on the total number of bisectors defining $\lexo(A,B)$. 
This shows that there is no overhead in the running time to record the edge pairs and so the Lines~\ref{line_startL} to~\ref{line_endL} take $O(n^2k^6)$ time.

The final step of the algorithm is to traverse the just constructed arrangement $\lexo(A,B)$ and to find a bottleneck matching for each of its cells.
We first initialize the traversal in an arbitrarily chosen cell $C$ of $\lexo(A,B)$.
To this end, we pick a point $t$ interior to $C$, e.g., the centroid of its vertices, and we sort the values $\|b+t-a\|$ choosing one representative edge $ab \in A \times B$ from every equivalence class. 
We initialize also a graph~$G$ with the~$k^2$ edges of $Z(C)=\bigcup_{b\in B}E_b(t)$, since we know by the graph theoretic version of~\cref{lem:minimal_geom}-(\ref{lem:minimal2_geom}) that it contains a, and in fact every, bottleneck label for $C$. 
Moreover, we construct the weight function $w:Z(C) \to [k^2] $ for $G$ representing the order of the lengths of the edges of $Z(C)$ in the relative interior of $C$. 
We then find in Line~\ref{line_firstBM} a bottleneck matching $\mu$ in $G$ in $O(k^{2} \sqrt{k \log k})$ time using the Gabow-Tarjan algorithm introduced in \cref{subsect:notbipgraphs}. 

After this initialization, we now traverse $\lexo(A,B)$ while maintaining the graph~$G$ such that for the current cell $D$ it has edges $Z(D)$ and weight function $w:Z(D) \to [k^2]$ encoding the relative lengths of these edges. 
This is done in Lines~\ref{line_firstBM} to~\ref{line_laststep_last}.
Let $h$ be the current bisector to be crossed. 
We first consider the edge pairs of the type $(ab,a'b)$ for~$h$.
If none of the edges belong to $G$, there is nothing to do. 
If both edges belong to $G$,~\cref{prop:VorArran} ensures that they have consecutive weights which need to be swapped during the crossing of the bisector. 
The current bottleneck matching~$\mu$ is not affected.
If exactly one of the edges, say $a'b$, does not belong to the graph $G$, we include it and remove~$ab$, because it is not part of the candidate set anymore. 
By~\cref{prop:VorArran}, the weight of the new edge is the same as the old one. 
In addition, if the removed edge $ab$ belongs to the current bottleneck matching~$\mu$ with longest edge $l$, then a new matching having longest edge of weight $w(l)$ can be found in the updated graph.
In other words, a pair involving edges incident to the same $b \in B$ may change the candidate set replacing an edge by another edge with the same weight, but the weight of the longest edge of the bottleneck matchings remains invariant. 
In order to find the new bottleneck matching, it is enough to augment $\mu \setminus \{ab\}$ in the graph $G(w(l))$. 
Such an augmenting path is guaranteed to exist by~\cref{lem:berge} because $G(w(l))$ has a complete matching. 

Consider now all the edge pairs of the type $(ab,a'b')$, with $b \ne b'$, in the list of~$h$.  
If at least one of the edges is not contained in $G$, there is nothing to do. 
Otherwise, the weights of the edges need to be swapped as in the previous case.
The bottleneck matching is updated according to the rules described in~\cref{lem:bot}, where we might need to test if $G'(j)$, in the notation of~\cref{lem:bot}, has a complete matching. 
If $m$ edges had weight $j$ in~$G$, the matching $\tau=\mu \cap G'(j)$ can have up to~$m$ exposed vertices. 
We search in $O(k^2)$ time for an augmenting path for $\tau$ in $G'(j)$. 
If there is one, we augment the matching and search again. 
If there is no augmenting path, the matching is complete by~\cref{lem:berge}.
Therefore, we can decide whether $G'(j)$ has a complete matching and, if not, find one performing at most as many augmentations as there are pairs in the list of~$h$.
After handling all the edge pairs of $h$, the resulting graph contains the candidate set of edges $Z(D)$ for the new cell $D$ and $w$ expresses the relative order of their lengths. 
Thus, the bottleneck matching we obtained for the last weighted graph is guaranteed to be a bottleneck matching for any point interior to $D$. 

The number of graph and matching updates performed during the traversal is bounded by the number of edges that $\lexo(A,B)$ would have if we replace each bisector associated to $s$ edge pairs by~$s$ infinitesimally-separated lines parallel to it. 
As argued before, the number of lines supporting edges of this arrangement would still be $O(nk^3)$ and, thus, the complexity of this virtual arrangement is $O(n^2k^6)$. 
Together with the required time $O(k^2)$ for updating the bottleneck matching in each cell, this implies the claimed running time $O(n^2k^8)$ for the whole algorithm.

In order to construct a lex-bottleneck labeling, we maintain the weighted graph as in the bottleneck case.
We apply the algorithm described in~\cite{Sokkalingam} in every face of $\lexo(A,B)$, after updating the weight function $w$ to indicate the ties that are active in the current face.
This computes a lexicographic bottleneck matching in $O(k^4)$ time. 
\end{proof}

% \begin{theorem}
% \label{thm:lineBot}
% Let $A,B \subset \R$ be point sets with $k=|B|\leq |A|=n$.
% A labeled bottleneck diagram of $A$ and $B$ can be computed in $O(nk^2( \log n+ k^{3}))$ time.
% A labeled lex-bottleneck diagram can be computed in $O(nk^2(\log n+ k^{5}))$ time. 
% \end{theorem}
% 
% \begin{proof}
% We construct the polyhedral complex $\lexo(A,B)$ in $O(k^2n( \log n + k \log k))$ time as described in the proof of~\cref{arrangementline}, while keeping track of the pairs of edges that are swapped in every bisector, as in the proof of~\cref{thm:constr}. 
% 
% We then traverse the arrangement updating the bottleneck or lex-bottleneck matching as in the planar case.
% The number of cells is $O(nk^3)$, as shown in \cref{thm:bot}, which combined with the updating times studied in the proof of~\cref{thm:constr} gives the claimed bounds. 
% \end{proof}
% 
% Note that the dependency in $n$ of the bounds of~\cref{thm:lineBot} is optimal, since the bottleneck diagram for $k=1$ is the classic Voronoi diagram, which is known to have time complexity $\Omega(n \log n)$ because it is powerful enough to sort a set of numbers. 

In higher dimensions, we do not have a good bound on the complexity of the reduced arrangement $\lexo(A,B)$.
Nevertheless, the previous proof can be adapted using $\lex(A,B)$ instead, leading via \cref{cor:firstbound} to the following result.

\begin{theorem}\label{thm:constrHD}
Let $A,B \subset \R^d$ be point sets with $k=|B|\leq|A|=n$. 
A labeled bottleneck diagram of $A$ and $B$ can be computed in $O(n^{2d}k^{2d +2})$ time, and a labeled lex-bottleneck diagram can be computed in $O(n^{2d}k^{2d +4})$ time. 
\end{theorem}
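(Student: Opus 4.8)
The plan is to rerun the construction underlying \cref{thm:constr} essentially verbatim, the only change being that the refined arrangement $\lexo(A,B)$ — which in dimension $d\ge 3$ we cannot bound for lack of an analogue of \cref{ClarkShor} — is replaced by the full bisector arrangement $\lex(A,B)$, which is already a lex-bottleneck diagram by \cref{epolyhedral}. By \cref{cor:firstbound} this arrangement consists of $\binom{n}{2}\binom{k}{2}=O(n^2k^2)$ hyperplanes and has combinatorial complexity $O(n^{2d}k^{2d})$; using a standard incremental construction of an arrangement of $m$ hyperplanes in $\R^d$ in $O(m^d)$ time (cf.~\cite{Edelsbrunner}), I would build it together with its full facial structure and, for every hyperplane $h$, the list $\EP(h)$ of edge pairs $(ab,a'b')$ inducing $h$, all in $O(n^{2d}k^{2d})$ time. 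Edges coming from equivalent pairs in the sense of \cref{def:classes} induce no hyperplane and are handled exactly as in \cref{alg:construction}; since there are only $O(n^2k^2)$ edge pairs in total, maintaining the lists $\EP(h)$ adds nothing to the running time.

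For the bottleneck labeling I would traverse $\lex(A,B)$ cell by cell as in the proof of \cref{thm:constr}, keeping a weighted bipartite graph $G$ on the candidate set $Z(C)$ of the current cell $C$ — which has $O(k^2)$ edges since \cref{lem:minimal_geom} holds verbatim in $\R^d$ — a weight function $w$ encoding the relative edge lengths in the interior of $C$, and a bottleneck matching $\mu$ of $G$. A facet shared by two adjacent cells lies on exactly one hyperplane $h$, and crossing it only permutes (or, in the single-$b$ case, exchanges) the edge pairs listed in $\EP(h)$; each such update leaves $\mu$ valid, or replaces an edge of $\mu$ by an equal-weight edge followed by one augmentation in $G(w(l))$ using \cref{lem:berge}, or triggers the update prescribed by the purely graph-theoretic \cref{lem:bot} — in every case $O(k^2)$ work via a constant number of augmenting-path searches in the $O(k^2)$-edge graph. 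The initial bottleneck matching is computed from scratch in $O(k^2\sqrt{k\log k})$ time.

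The remaining point is to bound the total number of updates, and I expect this to be the only genuine subtlety: a hyperplane carrying many coincident edge pairs could a priori make one crossing expensive. As in the planar case this is resolved by passing to the virtual arrangement obtained by replacing each hyperplane $h$ by $|\EP(h)|$ infinitesimally-separated parallel copies. Since $\sum_h|\EP(h)|$ is the total number of relevant edge pairs, namely $O(n^2k^2)$, this virtual arrangement still has $O(n^2k^2)$ hyperplanes and hence complexity $O(n^{2d}k^{2d})$, and each of its single-hyperplane crossings costs $O(k^2)$; multiplying yields the claimed $O(n^{2d}k^{2d+2})$ bound, which absorbs the cost of building $\lex(A,B)$ and of the initial matching. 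Everything else is a dimension-oblivious copy of the argument for \cref{thm:constr}.

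For the lex-bottleneck labeling, incremental maintenance is not available, so — exactly as in the planar proof — I would visit every face of $\lex(A,B)$ of every dimension, update $w$ at each face to a preorder recording the ties active there, take $Z(C)$ for a cell $C$ containing that face (which still contains a lex-bottleneck matching by \cref{lem:minimal_geom}), and recompute a lexicographic bottleneck matching from scratch in $O(k^4)$ time with the algorithm of Sokkalingam et al.~\cite{Sokkalingam}. Since $\lex(A,B)$ has $O(n^{2d}k^{2d})$ faces in total, this gives the stated $O(n^{2d}k^{2d+4})$ time bound.
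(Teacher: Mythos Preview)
Your proposal is correct and follows precisely the approach the paper intends: the paper's own proof of \cref{thm:constrHD} is a one-sentence remark that the argument of \cref{thm:constr} carries over once $\lexo(A,B)$ is replaced by the full arrangement $\lex(A,B)$, and you have spelled out exactly this adaptation, including the virtual-arrangement bookkeeping and the $O(k^2)$ and $O(k^4)$ per-cell costs for the two labelings.
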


% # # # # # # # # # # # # # # # # # # # # # # # # # # # # # # #
\section{Applications}
\label{sect:applications}
% # # # # # # # # # # # # # # # # # # # # # # # # # # # # # # #

In this section, we explore some of the applications of the bottleneck diagrams studied before. 
An obvious application is to solve the bottleneck partial-matching problem under translation. 
Furthermore, bottleneck partial-matching Voronoi diagrams serve as a data structure for dynamically querying for locally optimal bottleneck matchings.
Two such situations are described below.

\subsection{Solving the Bottleneck Partial-Matching Problem}
\label{subsect:botPMP}

Let $A, B \subset \R^2$ be two point sets with $k=|B| \leq |A|=n$.
We are interested here in finding a matching $\sigma:B\hookrightarrow A$ such that
\[ f_\sigma(t^*) = \underset{t \in \R^2} \min \; \E(t), \textrm{ for some } t^* \in \R^2, \]
using the notation from~\cref{def:bot}. 
Such a matching is called an \emph{optimal bottleneck matching under translations}. 
The basic idea to find such a matching is to traverse a labeled bottleneck diagram and compute the optimal value of the cost function in every convex cell of the diagram separately. 
Clearly, this procedure also admits to report a corresponding translation $t^*$ for which the minimum is attained. 

\begin{theorem} \label{thm:optmatch}
Let $A,B \subset \R^2$ be sets of $k$ and $n$ points. 
An optimal bottleneck matching for $A$ and $B$ under translations can be found in $O(n^2k^{8})$ time. 
\end{theorem}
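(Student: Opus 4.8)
The plan is to combine the construction of a labeled bottleneck diagram from \cref{thm:constr} with a straightforward per-cell optimization of the quadratic cost function. First I would invoke \cref{thm:constr} to compute a labeled bottleneck diagram $\TT$ of $A$ and $B$ together with a bottleneck labeling in $O(n^2k^8)$ time; by \cref{thm:bot} this diagram has $O(n^2k^6)$ cells. By definition of a bottleneck diagram and its labeling, for each cell $C$ the matching $\pi_C$ satisfies $f_{\pi_C}(t)=\E(t)$ for all $t\in C$, so that
\[
\min_{t\in\R^2}\E(t)=\min_{C\text{ cell of }\TT}\ \min_{t\in C} f_{\pi_C}(t).
\]
Thus it suffices to minimize $f_{\pi_C}$ over the closed convex polygon $C$ for each cell, and take the overall minimum, recording an optimal translation $t^*$ and the corresponding label.

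The inner problem is to minimize $f_{\pi_C}(t)=\max_{b\in B}\|b+t-\pi_C(b)\|^2$ over $t\in C$. This is the minimum enclosing ball / smallest enclosing disk problem restricted to a convex polygon: we seek the translation $t$ minimizing the largest squared distance from $t$ to the $k$ points $\{\pi_C(b)-b : b\in B\}$, subject to $t\in C$. Writing $p_b=\pi_C(b)-b$, the objective $\max_b\|t-p_b\|^2$ is a convex piecewise-quadratic function of $t$, and $C$ is a convex region of complexity at most $O(n^2k^6)$ in total over all cells (so amortized small per cell); the constrained minimization of the radius of the smallest enclosing disk whose center is confined to a convex polygon can be solved by a linear-programming-type (LP-type) algorithm in time linear in the number of points and constraints, i.e.\ $O(k + (\text{edges of }C))$ per cell. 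Since the total number of cell–edge incidences in $\TT$ is $O(n^2k^6)$ and there is an $O(k)$ term per cell contributing $O(n^2k^7)$, the whole traversal costs $O(n^2k^7)=O(n^2k^8)$, dominated by the construction step. Alternatively, and more simply, one can avoid the LP-type machinery: within each cell the optimum of $f_{\pi_C}$ is attained either at the unconstrained minimum (the center of the smallest enclosing disk of $\{p_b\}$, computable in $O(k)$ time) if it lies in $C$, or on the boundary $\partial C$, where on each edge one again minimizes a univariate convex piecewise-quadratic function in time linear in its breakpoints.

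The main obstacle — and the only place where care is genuinely needed — is the bookkeeping that ensures the per-cell optimization work sums to $O(n^2k^8)$ rather than, say, $O(n^2k^6\cdot k^2)$ or worse; in particular one must charge the $O(k)$ smallest-enclosing-disk computation to each of the $O(n^2k^6)$ cells (giving $O(n^2k^7)$, which is fine) and charge the edge-dependent work to cell–edge incidences, of which there are only $O(n^2k^6)$ by the diagram's complexity. A secondary point is correctness of the reduction: one should note explicitly that a bottleneck labeling gives a matching valid on the whole \emph{closed} cell (this is recorded in \cref{sect:diagrams} right after the definition of a lex-bottleneck diagram), so minimizing each $f_{\pi_C}$ over the closed cell $C$, and intersecting over a cover of $\R^2$, indeed yields $\min_{t}\E(t)$ and a global optimal translation and matching. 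No new geometric ideas beyond smallest-enclosing-disk-in-a-convex-region are required, so the proof is short once \cref{thm:constr,thm:bot} are in hand.
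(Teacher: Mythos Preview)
Your proposal is correct and achieves the stated bound, but it differs from the paper's argument in one essential point. After invoking \cref{thm:constr}, the paper does not minimize the full function $f_{\pi_C}(t)=\max_{b\in B}\|b+t-\pi_C(b)\|^2$ over each cell. Instead it exploits a structural feature of the specific diagram $\lexo(A,B)$ produced in \cref{thm:constr}: within every cell the relative order of the lengths of the candidate edges is fixed, so in particular the longest edge $ab$ of the label $\pi_C$ is the same throughout~$C$, and hence $\E|_C$ equals the single quadratic $t\mapsto\|b+t-a\|^2$. Minimizing that over the closed convex cell is just projecting the point $a-b$ onto~$C$, which costs $O(1)$ per vertex and per edge of the diagram; the optimization step therefore runs in time proportional to the complexity $O(n^2k^6)$ of $\lexo(A,B)$.

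Your route replaces this structural observation by a constrained smallest-enclosing-disk (LP-type) computation on the $k$ centers $\pi_C(b)-b$, which is perfectly valid and does not rely on the longest edge being constant inside a cell; it would work for any labeled bottleneck diagram, not just $\lexo(A,B)$. The price is the extra $O(k)$ per cell, giving $O(n^2k^7)$ for the optimization pass, still dominated by the $O(n^2k^8)$ construction. So both arguments are sound; the paper's is more elementary and a factor~$k$ faster in the traversal, while yours is more generic and avoids having to justify that the longest edge is invariant across each cell.
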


\begin{proof}
We construct $\lexo(A,B)$ and a bottleneck labeling for it in time $O(n^2k^{8})$ as described in the proof of~\cref{thm:constr}.
We traverse this arrangement and optimize in every (convex) cell $C$ the value $f(t)=\|b+t-a\|^2$ over all translations $t\in C$, maintaining the minimum throughout the diagram.
Here, $ab$ is the longest edge of the bottleneck matching given by the label of the current cell~$C$.
More precisely, let $t_0=a-b$ be the translation attaining the global minimum of the function $f(t)$. 
If $t_0 \in C$, obviously $f(t_0)=0$ is the minimum of $f$ in $C$ and in fact, a global minimum as well.
Otherwise, the minimum is attained in the point of $C$ closest to $t_0$. 
Such a point must be either a vertex of $C$ or the orthogonal projection of $t_0$ onto an edge of $C$.
In addition, if $t_0$ is the minimum of $f(t)$ and $t_1$ is the minimum of the corresponding function for a neighboring cell $D$ sharing the edge $e$ with $C$, then the projection of $t_0$ onto $e$ coincides with the projection of $t_1$ onto $e$. 
We can thus calculate the minimum examining once every vertex of the diagram and at most one candidate point for every edge. 
Thus, the total time needed to perform the mentioned optimization in every cell is proportional to the complexity of the diagram. 
\end{proof}

\subsection{Computing a Bottleneck Path}
\label{subsect:botpath}

We consider now the problem of finding a motion for~$B$ from an initial position to a final position such that the maximum bottleneck value (as defined in~\cref{def:bot}) attained during the motion is minimized. 

\begin{definition}
The \emph{bottleneck value} of a curve \mbox{$\gamma: [0,1] \to \R^2$} with respect to point sets $A,B\subset\R^2$ with $k=|B| \le |A|=n$ is 
\[ F(\gamma) = \max_{s \in [0,1]} \E(\gamma(s))= \max_{s \in [0,1]} \min_{\sigma: B \hookrightarrow A} \max_{b \in B} \Vert b+\gamma(s)-\sigma(b) \Vert^2.\]
The curve $\gamma$ is a called a \emph{bottleneck path} if $F(\gamma) \leq F(\varphi)$ for every other curve $\varphi: [0,1] \to \R^2$ with $\varphi(0)=\gamma(0)$ and $\varphi(1)=\gamma(1)$. 
\end{definition}

A bottleneck path between two positions can be useful in motion planning where the points of $A$ represent fixed anchor points and the points of $B$ represent the position of articulations of a moving robot.
The dual graph of the arrangement $\lexo(A,B)$ contains the necessary information to compute a bottleneck path from any initial position to any final position. 

\begin{definition}
The \emph{bottleneck graph} of two finite point sets $A,B \subset \R^2$ is the weighted graph $\lexo(A,B)^*$ dual to $\lexo(A,B)$, where an edge $e^*$ of the graph dual to an edge $e$ of $\lexo(A,B)$ has weight $\min_{t \in e}\E(t)$.
\end{definition}

Via the bottleneck graph we can now characterize the existence of a path of given bottleneck value.

\begin{lemma}\label{lem:safpath}
Let $t_0,t_1 \in \R^2$ and $\delta \in \R$.
Let $C_0$ and $C_1$ be cells of $\lexo(A,B)$ such that $t_0\in C_0$ and $t_1 \in C_1$. 
There is a path with bottleneck value at most $\delta$ from~$t_0$ to $t_1$ if and only if $\E(t_0),\E(t_1) \leq \delta$ and there is a path from $C_0^*$ to $C_1^*$ in $\lexo(A,B)^*$ whose longest edge has weight at most $\delta$. 
\end{lemma}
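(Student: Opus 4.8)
The plan is to prove both implications of the biconditional separately, using the fact that $\E$ is continuous (being the minimum of the continuous functions $f_\sigma$) and that its minimization diagram is refined by $\lexo(A,B)$, so that on each cell $\E$ is a single quadratic function.

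\textbf{The ``only if'' direction.} Suppose $\gamma:[0,1]\to\R^2$ is a path from $t_0$ to $t_1$ with $F(\gamma)\leq\delta$. Since $\E(t_0)=\E(\gamma(0))\leq F(\gamma)\leq\delta$ and similarly $\E(t_1)\leq\delta$, the endpoint conditions hold. For the graph-path condition, I would first argue that $\gamma$ may be perturbed into a path $\gamma'$ that still has bottleneck value at most $\delta'$ for some $\delta'$ arbitrarily close to $\delta$, but which meets only cells and edges of $\lexo(A,B)$ (avoiding lower-dimensional faces), crossing each edge transversally; this is a standard general-position argument using that the set of ``bad'' translations has measure zero and that $\E$ is continuous, so a small enough perturbation changes $F$ by at most an arbitrarily small amount. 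The sequence of cells visited by $\gamma'$ then yields a walk $C_0^*=D_0^*,D_1^*,\dots,D_m^*=C_1^*$ in $\lexo(A,B)^*$. For each consecutive pair, the corresponding dual edge $e^*$ is dual to the edge $e=D_{i}\cap D_{i+1}$ of $\lexo(A,B)$ that $\gamma'$ crosses, at some point $t\in e$ with $\E(t)\leq F(\gamma')$; hence the weight $\min_{t'\in e}\E(t')\leq\E(t)\leq F(\gamma')$. Thus every edge on this walk has weight at most $F(\gamma')$, and taking $F(\gamma')$ close enough to $\delta$ (and using that there are only finitely many edges, so their weights take finitely many values) gives a walk — hence a path — whose longest edge has weight at most $\delta$.

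\textbf{The ``if'' direction.} Conversely, suppose $\E(t_0),\E(t_1)\leq\delta$ and there is a path $C_0^*=D_0^*,\dots,D_m^*=C_1^*$ in $\lexo(A,B)^*$ with every edge of weight at most $\delta$. I would build a curve $\gamma$ explicitly: inside each cell $D_i$, $\E$ restricted to $D_i$ is a single quadratic function $q_i(t)=\|b_i+t-a_i\|^2$ for the label's longest edge $a_ib_i$, which is convex; pick the point $p_i\in D_i$ minimizing $q_i$ over the closed convex cell $D_i$ (it exists and is unique by convexity, coinciding with the projection of $a_i-b_i$ onto $D_i$). For each consecutive dual edge, let $m_i\in e_i=D_i\cap D_{i+1}$ be a point attaining $\min_{t\in e_i}\E(t)$, which by hypothesis is at most $\delta$. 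Now concatenate: a segment from $t_0$ to $m_0$ inside $C_0=D_0$, then, for each $i$, a segment from $m_{i-1}$ to $m_i$ inside $D_i$, and finally a segment from $m_{m-1}$ to $t_1$ inside $C_1=D_m$. Convexity of $D_i$ guarantees these segments stay inside the respective cells, and convexity of $q_i$ on $D_i$ guarantees that on the segment from $m_{i-1}$ to $m_i$ the value of $\E=q_i$ is at most $\max\{\E(m_{i-1}),\E(m_i)\}\leq\delta$; the first segment gives $\E\leq\max\{\E(t_0),\E(m_0)\}\leq\delta$ and the last similarly. Hence $F(\gamma)\leq\delta$.

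\textbf{Main obstacle.} The delicate point is the perturbation argument in the ``only if'' direction: one must ensure that an arbitrary continuous path can be replaced by one in ``cellular general position'' with respect to the arrangement without increasing the bottleneck value past $\delta$. Because $\E$ is merely continuous (not strictly anything) and the path may linger on lower-dimensional faces or re-enter cells, I would want to be careful to (a) compactify the argument by noting $\gamma([0,1])$ is compact so $\E$ attains its max, (b) use that $\lexo(A,B)$ has finitely many faces so the arrangement's $(d-2)$-skeleton is closed and nowhere dense, and (c) invoke uniform continuity of $\E$ on a bounded neighborhood of $\gamma([0,1])$ to control the change in $F$ under a sufficiently small perturbation. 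A clean alternative avoiding explicit perturbation is to work directly with the connected components of the superlevel-complement: let $K_\delta=\{t:\E(t)\leq\delta\}$; then a path with bottleneck value $\leq\delta$ exists iff $t_0,t_1$ lie in the same connected component of $K_\delta$, and one shows this component is path-connected and its intersection pattern with the cells of $\lexo(A,B)$ corresponds exactly to a connected subgraph of the dual graph using only edges of weight $\leq\delta$ (here the edge weight $\min_{t\in e}\E(t)\leq\delta$ is precisely the condition that $K_\delta$ meets the relative interior of $e$, allowing passage between the two incident cells). I would likely present this component-based formulation as the core of the proof, since it sidesteps the technicalities of perturbing an arbitrary curve.
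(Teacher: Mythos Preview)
Your proof is correct, and the ``if'' direction matches the paper's construction exactly: concatenate straight segments through the edge-minimizers $m_i$ and use convexity of $\E$ on each cell to bound the value along each segment by the maximum of its endpoint values.

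For the ``only if'' direction the paper takes a more direct route that sidesteps the perturbation you correctly flag as the main obstacle. Instead of moving $\gamma$ into general position and then appealing to a finiteness-of-weights limiting argument, the paper straightens $\gamma$ in place: for each arc of $\gamma$ traversing a cell $C$ from an entry point $t_{\text{in}}$ to an exit point $t_{\text{out}}$, replace the arc by the segment $[t_{\text{in}},t_{\text{out}}]$ (this does not raise the bottleneck value, by convexity of $\E$ on $C$), and then slide each crossing point along its edge of $\lexo(A,B)$ to the minimizer of $\E$ on that edge (which can only lower the value there and, again by convexity on the two incident cells, does not raise the maxima on the adjacent segments). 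This produces directly a polygonal path of the required form with bottleneck value at most $F(\gamma)\le\delta$, so the corresponding walk in $\lexo(A,B)^*$ already has every edge of weight at most $\delta$---no $\varepsilon$-perturbation or compactness step is needed. Your sublevel-set reformulation would also work and is closer in spirit to this; the paper's straightening is essentially that idea made explicit.

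One small imprecision: you assert that on each cell $\E$ is a \emph{single} quadratic $\|b_i+t-a_i\|^2$. What is actually available (and all that you use) is that $\E$ agrees on each cell with $f_{\pi_C}$ for the cell's label $\pi_C$, which is a maximum of finitely many paraboloids and hence convex. Since your argument only invokes convexity, this does not affect correctness.
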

\begin{proof}
Observe that in every cell of $\lexo(A,B)$ there is a bottleneck matching whose cost coincides with $\E$ in the cell. 
By definition, $\E$ is a convex function in every such (convex) cell. 
Hence, assuming that $C_0=C_1$, the line segment joining $t_0$ and $t_1$ has bottleneck value $ \max \{\E(t_0),\E(t_1)\}$ and no path can attain a smaller value. 
We assume now that $C_0 \neq C_1$ and let $\gamma$ be any path from~$t_0$ to~$t_1$.
We can replace each of the connected arcs of $\gamma$ entering a cell~$C$ of $\lexo(A,B)$ in a point $t_{\text{in}}$ and leaving it in a point~$t_{\text{out}}$ by the line segment joining these two points without increasing the bottleneck value of the path. 
Again, we do not increase the bottleneck value of the path when we substitute this line segment by the one joining the points~$t_{\text{in}}^*$ and~$t_{\text{out}}^*$, where~$t_{\text{in}}^*$ is the point with minimum bottleneck value on the edge of~$C$ that contains~$t_{\text{in}}$, and~$t_{\text{out}}^*$ is the one attaining the minimum value on the edge containing~$t_{\text{out}}$. 
Similarly, the parts of the path in~$C_0$ and~$C_1$, starting at~$t_0$ and ending at~$t_1$, respectively, can be replaced with the line segment from~$t_0$ (or~$t_1$) to the point attaining the minimum of $\E$ on whichever edge of~$C_0$ (or~$C_1$) the path crosses first (or last). 

The previous observations imply that a bottleneck path is among the poly\-gonal paths whose vertices (except for~$t_0$ and~$t_1$) lie on the minima of~$\E$ along edges of $\lexo(A,B)$. 
The bottleneck value of such a path is the maximum of the weights of the edges in the corresponding path in $\lexo(A,B)^*$ and the values~$\E(t_0)$ and~$\E(t_1)$.
\end{proof}

\begin{theorem}\label{thm:safpath}
Given $t_0,t_1 \in \R^2$, a bottleneck path from $t_0$ to~$t_1$ with respect to $A, B  \subset \R^2$ with $k=|B| \le |A|=n$ can be computed in time $O(n^2k^{6}(k^{2} +\log n))$.
\end{theorem}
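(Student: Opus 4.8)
The plan is to combine the construction of a labeled bottleneck diagram from \cref{thm:constr} with a shortest-path computation on the bottleneck graph, using \cref{lem:safpath} to certify correctness. First I would invoke \cref{thm:constr} to build $\lexo(A,B)$ together with a bottleneck labeling in $O(n^2k^8)$ time; note that this dominates the term $O(n^2k^6\log n)$ in the claimed bound, so the running time of this step is already within budget, but it turns out we can be slightly more economical since we only need the labels to evaluate $\E$ on edges, not a full lex-labeling. The combinatorial size of the diagram, and hence of its dual graph $\lexo(A,B)^*$, is $O(n^2k^6)$ by \cref{thm:bot}.

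Next I would compute the edge weights of $\lexo(A,B)^*$. For each edge $e$ of $\lexo(A,B)$, the weight is $\min_{t\in e}\E(t)$; since $\E$ restricted to the closure of an incident cell agrees with the convex quadratic $f_{\pi_C}$, and $e$ is a line segment, this minimum is attained either at an endpoint of $e$ or at the orthogonal projection onto $e$ of the apex $a-b$ of the relevant longest edge of the bottleneck label. As in the proof of \cref{thm:optmatch}, this is an $O(1)$-time computation per edge once the label is known, so all $O(n^2k^6)$ edge weights are obtained in $O(n^2k^6)$ total time. A subtlety is that an edge $e$ borders two cells, whose labels may have different longest edges on $e$; but the two convex functions agree on $e$ (both equal $\E$ there), so either evaluation yields the same value, and we may simply use one of them.

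Then I would run a bottleneck-shortest-path algorithm (minimize the maximum edge weight along the path) on $\lexo(A,B)^*$ from $C_0^*$ to $C_1^*$, where $C_0,C_1$ are the cells containing $t_0,t_1$; locating these cells costs at most $O(n^2k^6)$ by scanning, or faster with point location, which is not the bottleneck. A minimum-bottleneck path in a graph with $m$ edges can be found in $O(m)$ time (e.g.\ via a linear-time selection-based recursion, or simply a modified Dijkstra in $O(m\log m)$, which here is $O(n^2k^6\log(nk))=O(n^2k^6\log n)$ since $k\le n$), so this contributes the $O(n^2k^6\log n)$ summand. By \cref{lem:safpath}, the bottleneck value of the resulting polygonal path equals $\max\{\E(t_0),\E(t_1),W\}$ where $W$ is the computed minimum over paths of the maximum edge weight, and this is optimal; reconstructing the polygonal path itself (with vertices at the per-edge minima of $\E$) is immediate from the path in the dual graph.

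I expect the main obstacle to be the bookkeeping that produces the $k^2$ factor in the stated bound, namely the $O(n^2k^8)$ cost hidden in ``compute the labeled bottleneck diagram'': one must argue that \cref{thm:constr} really does hand us, for each cell, a bottleneck matching with an identifiable longest edge, so that the edge-weight computation of $\lexo(A,B)^*$ is genuinely $O(1)$ per edge and does not require re-solving an assignment problem on each of the $O(n^2k^6)$ edges (which would cost an extra $k^2$ factor beyond what we want, although it would still match the $k^8$ term). Writing $O(n^2k^6(k^2+\log n))=O(n^2k^8)+O(n^2k^6\log n)$, the first summand is exactly the diagram-construction cost and the second is the path computation, so the two stages align cleanly with the two terms; the remaining care is just to confirm that point location, cell scanning, and path reconstruction all fit inside these bounds, which they do since each is $O(n^2k^6)$ or $O(n^2k^6\log n)$.
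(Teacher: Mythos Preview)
Your proposal is correct and follows essentially the same approach as the paper: construct the labeled diagram $\lexo(A,B)$ via \cref{thm:constr} in $O(n^2k^8)$ time, build the weighted dual graph $\lexo(A,B)^*$ of size $O(n^2k^6)$, run a min-bottleneck path computation (the paper simply cites Dijkstra with heaps) in $O(n^2k^6\log n)$ time, and invoke \cref{lem:safpath} for correctness. Your additional remarks on computing the edge weights in $O(1)$ per edge and on the agreement of $\E$ with either incident label along a boundary edge are sound elaborations of details the paper leaves implicit.
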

\begin{proof}
We first compute the arrangement $\lexo(A,B)$ and the associated bottleneck graph $\lexo(A,B)^*$ in time $O(n^2k^8)$ by~\cref{thm:constr}. 
The number of edges and vertices of $\lexo(A,B)^*$ is $O(n^2k^6)$ due to \cref{thm:bot} and the weights of its edges are all nonnegative. 
Therefore, the path with minimum bottleneck value in the graph can be found in $O(n^2k^6 \log n)$ time via the implementation of Dijkstra's algorithm using heaps. 
By~\cref{lem:safpath}, the associated polygonal path is guaranteed to be a bottleneck path from~$t_0$ to~$t_1$.
\end{proof}

\subsection{Finding the Cover Radius of a Convex Polygon}
\label{subsect:polcovrad}

As a third application of our bottleneck diagrams, we investigate a covering problem. 
Given a pair of finite point sets $A,B \subset \R^2$, and a convex polygon $Q \subset \R^2$, we want to determine the minimal $\delta \in \R$ such that for any position of the point set $B$ in $Q$ there is a matching whose bottleneck value is at most~$\delta$.
We can think of the points in $A$ as antennas equipped with disks of radius $\delta$ modeling the region on which they provide signal. 
The point set $B$ can be thought of as a robot that moves in $Q$ and needs to connect each of its points to a different antenna (for instance, to learn its position). 
The target is to minimize the power consumed by the antennas while ensuring that the robot can move in~$Q$ having a one-to-one connection for its receivers. 
% Another scenario would be that the robot has arms and the points in $A$ represent anchor points (as in the previous application) and we want the robot to be able to reach any position in $Q$. 

\begin{definition}
The \emph{cover radius} of a convex polygon $Q$ with respect to finite point sets $A,B \subset \R^2$ is the maximum bottleneck value among all $t\in \R^2$ such that $B+t \subset Q$. 
\end{definition}

\begin{theorem}\label{thm:coverrad}
Let $Q$ be a convex polygon with $m$ vertices. The cover radius of~$Q$ with respect to $A,B \subset \R^2$ with $k=|B|\leq |A|=n$ can be computed in time $O\!\left(n^2k^{8}+(n^2k^6+m)\log(n+m)\right)$.
\end{theorem}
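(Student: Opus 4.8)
The plan is to combine the bottleneck diagram with the geometry of the polygon $Q$ in two independent steps, then optimize over the resulting subdivision. First I would construct the arrangement $\lexo(A,B)$ together with a bottleneck labeling, which takes $O(n^2k^8)$ time by \cref{thm:constr}. The key observation is that the cover radius is the maximum of $\E$ over the (closed) region $R=\{t\in\R^2 : B+t\subset Q\}$, and this region is itself a convex polygon: it is the intersection $\bigcap_{b\in B}(Q-b)$ of $k$ translates of $Q$, hence the intersection of $km$ halfplanes, and can be computed in $O(km\log(km))$ time (or even $O(km)$ after sorting the edge directions). If $R=\emptyset$ there is nothing to compute; otherwise $R$ has $O(km)$ vertices.

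Next I would overlay $R$ with $\lexo(A,B)$. Since $\lexo(A,B)$ has complexity $O(n^2k^6)$ and $R$ has $O(km)$ edges, the overlay has complexity $O(n^2k^6+km)$ and can be computed in time $O\!\left((n^2k^6+km)\log(n+m)\right)$ by a standard line-sweep for overlaying two planar subdivisions. Each cell of the overlay lies in a single cell $C$ of $\lexo(A,B)$, so on it $\E$ agrees with $f_{\pi_C}(t)=\max_{b\in B}\|b+t-\sigma(b)\|^2$, a convex piecewise-quadratic function; restricted to the overlay cell it is convex, so its maximum is attained at a vertex of that cell. Hence it suffices to evaluate $\E$ at every vertex of the overlay and take the maximum; this costs $O(1)$ per vertex once the labeling is known (evaluating the single longest edge of the label suffices, exactly as in the proof of \cref{thm:optmatch}), for a total of $O(n^2k^6+km)$ additional time. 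Summing the three contributions gives the claimed bound $O\!\left(n^2k^8+(n^2k^6+m)\log(n+m)\right)$, absorbing the $km\log(km)$ term for computing $R$ into the second summand.

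The main obstacle I anticipate is bookkeeping at the boundary: a vertex of the overlay may be a vertex of $R$ lying strictly inside a cell of $\lexo(A,B)$, a crossing of an edge of $R$ with a bisector, or a vertex of $\lexo(A,B)$ inside $R$, and in each case one must know \emph{which} cell label of $\lexo(A,B)$ to use for the evaluation. This is handled by carrying the labels through the sweep: when the sweep line enters a cell of $\lexo(A,B)$ we know its label, and every overlay vertex we create along the way inherits it; on an edge of $\lexo(A,B)$ we use the label of either incident cell, which is legitimate because $\E$ is continuous and the label of a cell is valid on its closure (\emph{this uses that a bottleneck label, unlike a lex-bottleneck label, extends to the boundary of its cell}). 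A second, minor point is verifying that the maximum of $\E$ over $R$ is indeed attained at an overlay vertex rather than only approached; since $R$ is compact and $\E$ is continuous the maximum is attained, and convexity on each overlay cell pushes it to a vertex, so no issue arises. With these points in place the running time bound follows by adding up the costs of the three phases.
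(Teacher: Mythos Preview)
Your approach is essentially the paper's: build $\lexo(A,B)$ with a bottleneck labeling, compute the feasible translation region, overlay the two, and evaluate $\E$ at the overlay vertices using convexity. The bookkeeping you describe for carrying labels through the sweep, and the continuity argument at boundaries, are all fine.

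There is, however, a genuine counting gap that breaks your final arithmetic. You bound the number of edges of $R=\bigcap_{b\in B}(Q-b)$ by $O(km)$, but in fact $R$ has at most $m$ edges: the $k$ polygons $Q-b$ are translates of one another, so they share the same $m$ outward edge normals, and for each normal direction only the single most restrictive translate (namely the one determined by the extreme point of $B$ in that direction) contributes a facet to the intersection. The paper uses exactly this observation and computes $\widehat{Q}$ from the resulting $m$ halfplanes in $O(k\log k + m\log m)$ time.

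This matters for the stated bound. With your $O(km)$ estimate the overlay has $O(n^{2}k^{6}+km)$ vertices and takes $O\!\left((n^{2}k^{6}+km)\log(n+m)\right)$ time, and the extra $km\log(n+m)$ term does \emph{not} absorb into $O\!\left(n^{2}k^{8}+(n^{2}k^{6}+m)\log(n+m)\right)$: take, for instance, $m$ arbitrarily large compared with $n^{2}k^{5}$ so that $km\log(n+m)$ dominates both $n^{2}k^{8}$ and $m\log(n+m)$. Once you replace $km$ by $m$ using the edge-direction argument above, the overlay complexity becomes $O(n^{2}k^{6}+m)$ (each edge of $\lexo(A,B)$ meets the convex boundary $\partial R$ in at most two points), and your summation goes through exactly as claimed.
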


\begin{proof}
We start by computing $\lexo(A,B)$ and a bottleneck labeling as indicated in \cref{thm:constr}. 
Note that the set of translations $t\in \R^2$ for which $B+t \subset Q$ is a convex polygon $\widehat{Q}$, which is the intersection of $k$ translated copies of $Q$. 
This polygon is indeed given by the $m$ linear inequalities obtained by imposing that the extreme point of $B$ in the direction orthogonal to an edge of $Q$ is on the right hand side of the corresponding edge. 
This polytope can be computed easily in $O(k \log k+m \log m)$ time and has at most $m$ edges. 
Then, we compute the overlay of the boundary of $\widehat{Q}$ and $\lexo(A,B)$. 
Note that every edge of $\lexo(A,B)$ can intersect the boundary of $\widehat{Q}$ in at most two points. 
Thus, the number of vertices of the overlay is $O(n^2k^6+m)$ and, hence, it can be computed in  $O\!\left((n^2k^6+m)\log(n+m)\right)$ time using the techniques described in \cite{Berg}.
The next step involves traversing the overlay and maintaining the maximum of~$\E$ in every cell. 
Since the function is convex, the maximum can be calculated as the maximum of the values attained at the vertices of the overlay.
\end{proof}

% # # # # # # # # # # # # # # # # # # # # # # # # # # # # # # #
\section{Conclusion and Open Problems}
% # # # # # # # # # # # # # # # # # # # # # # # # # # # # # # #

In this work, we introduced and investigated Voronoi-type diagrams suited for the study of the bottleneck partial-matching problem under translations.
As our main results, we obtained low complexity bounds on these diagrams and devised efficient algorithms for their construction that allowed to solve the matching problem both in arbitrary dimension and in its lexicographic variant.

We have seen that the complexity bound of $O(n^2k^6)$ in \cref{thm:bot} is sharp in terms of the parameter~$n$, i.e., the size of the bigger point set, but we do not know whether it might be improved with respect to~$k$.
Since any improvement of this kind that comes with a speed-up in the construction of the (lex-)bottleneck diagram translates into a better bound on the running time of our algorithms for the partial-matching problems, we consider this as an interesting open problem for future research.
In fact, we believe that the lower bound in \cref{lowerboundStable} is optimal.

\begin{conjecture}
For any pair of point sets $A,B \subset \R^2$ with $k=|B|\leq|A|=n$, there is a lex-bottleneck diagram of complexity $O(n^2k^2)$.
\end{conjecture}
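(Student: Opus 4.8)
The plan is to bound the combinatorial complexity of $\lexo(A,B)$ (or of a further coarsening of it) directly, by a Clarkson--Shor type argument, rather than through the ``sum over all pairs $b\ne b'$'' estimate used in the proof of \cref{thm:bot}. The starting point is the observation already exploited there: a lex-bottleneck matching for a translation~$t$ depends only on the relative order of the edges in the candidate set $Z(t)=\bigcup_{b\in B}E_b(t)$, which has size $O(k^2)$, and where $E_b(t)$ is essentially the set of $k$ points of $A-b$ nearest to~$t$ (see \cref{lem:minimal_geom}). Lifting each edge $ab\in A\times B$ to the plane $z_{ab}=\{(t,\|b-a\|^2+2\scalprod{t}{b-a}):t\in\R^2\}\subset\R^3$ and grouping the $nk$ planes into the $k$ classes indexed by $b\in B$, the statement ``$ab\in E_b(t)$'' becomes ``$z_{ab}$ is among the $k$ lowest planes of its class over~$t$'', and the bisector $h(a,b,a',b')$ is the vertical projection of $z_{ab}\cap z_{a'b'}$. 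Thus $\lexo(A,B)$ is a coarsening of the arrangement of those bisectors $h(a,b,a',b')$ for which $ab$ and $a'b'$ simultaneously reach $k$-low position in their respective classes on at least one side of the bisector, and a vertex of the diagram is a crossing of two such bisectors near which all four involved planes have within-class depth smaller than~$k$.

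First I would set up a \emph{colored} Clarkson--Shor sampling argument on the $nk$ grouped planes: a candidate diagram vertex is determined by the at most four planes that define the two crossing bisectors, and its conflict set is the set of planes lying below one of those four planes within its own class; the vertex is retained in the diagram only if each of these within-class counts is smaller than~$k$, so the conflict size is $O(k)$. The Clarkson--Shor machinery would then bound the number of retained vertices by a polynomial in~$k$ times the number of conflict-free candidate vertices on a random sample of the planes, and a conflict-free candidate is one whose defining planes are each the lowest in their class, hence a lower-envelope (Voronoi) vertex of the sample, of which there are only quadratically many in the sample size. Balancing the sample size against the conflict bound~$k$ is what should produce the target $O(n^2k^2)$; the single-class special case of exactly this estimate is \cref{ClarkShor}.

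I expect the main obstacle to be precisely the colored nature of the argument, which is also the source of the slack in \cref{thm:bot}: shallowness is measured \emph{within each class}, so a sample that is sparse globally may still be dense inside one class, and one should instead sample the same number of planes from every class. With this, the conflict-free candidates become vertices of the overlay of $k$ lower-envelope-type structures, and the crux is to show that the \emph{shallow} part of this overlay has near-linear complexity in the total size rather than the per-pair-quadratic complexity that the naive bound pays for. Should that step resist a direct attack, a natural fallback is to charge the within-class depth of only one plane per bisector; this still yields a new bound, weaker than the conjecture but closing part of the gap to the lower bound of \cref{lowerboundStable}. In any case, because the algorithms of \cref{sect:construction} traverse $\lexo(A,B)$ directly, any improvement of this complexity bound---together with a matching speed-up of the construction---translates immediately into faster algorithms in \cref{thm:optmatch,thm:safpath,thm:coverrad}.
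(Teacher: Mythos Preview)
This statement is presented in the paper as a \emph{conjecture}; the authors do not prove it and explicitly leave it open, offering only the upper bound $O(n^2k^6)$ of \cref{thm:bot} and the lower bound of \cref{lowerboundStable}. There is thus no proof in the paper to compare your proposal against.

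What you have written is a research outline, not a proof, and you say so yourself. The decisive gap is in the sentence ``a conflict-free candidate is one whose defining planes are each the lowest in their class, hence a lower-envelope (Voronoi) vertex of the sample, of which there are only quadratically many in the sample size.'' This inference is incorrect. A conflict-free candidate in your scheme is a crossing of two bisectors whose four defining planes are each \emph{class}-lowest at that point; such a point need not be a vertex of the \emph{global} lower envelope of the sampled planes, which would require three planes to be globally lowest and to meet there. The appropriate ambient structure is the overlay of the $k$ per-class lower envelopes, and inside every face of that overlay there are $\binom{k}{2}$ conflict-free bisector lines (the bisector of the two class-nearest sites for each pair of classes), whose pairwise crossings are all conflict-free candidates. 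Nothing in the proposal bounds this count by $O(r^2)$ in the sample size~$r$; the naive estimate carries an extra polynomial factor in~$k$, and feeding that into the Clarkson--Shor inequality with $d=4$ defining objects does not beat the per-pair argument behind \cref{thm:bot}. The single-color statement of \cref{ClarkShor} works precisely because ``conflict-free'' there coincides with ``on the one lower envelope''; with colors it does not, and this is exactly the missing lemma.

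Your fallback of charging only one side of each bisector is not developed enough to assess: once the second plane is unconstrained it ranges over all of $A\times B$, and you would need a separate argument for why the shallowness of the first plane alone limits the total number of used bisectors. As it stands, the proposal correctly isolates where the slack in \cref{thm:bot} comes from and sets up a reasonable framework, but the crucial combinatorial step---a sub-pairwise bound on the number of conflict-free configurations in the colored setting---is neither formulated precisely nor proved, and the specific quadratic claim you make about that number is unjustified.
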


\bibliographystyle{abbrv}
\bibliography{BT3}

\begin{thebibliography}{10}

\bibitem{AltBlum}
H.~Alt, N.~Blum, K.~Mehlhorn, and M.~Paul.
\newblock {Computing a maximum cardinality matching in a bipartite graph in
  time $O(n^{1.5} \sqrt{m / \log n}) $}.
\newblock {\em Information Processing Letters}, 37(4):237--240, Feb. 1991.

\bibitem{Alt}
H.~Alt, K.~Mehlhorn, H.~Wagener, and E.~Welzl.
\newblock {Congruence, similarity, and symmetries of geometric objects}.
\newblock {\em Discrete \& Computational Geometry}, 3(1):237--256, 1988.

\bibitem{ESA_journal}
R.~Ben-Avraham, M.~Henze, O.~E. Raz, R.~Jaume, M.~Sharir, B.~Keszegh, and
  I.~Tubis.
\newblock Minimum {P}artial-{M}atching and {H}ausdorff {RMS}-{D}istance under
  {T}ranslation: {C}ombinatorics and {A}lgorithms.
\newblock {\em submitted full version}.
\newblock \url{http://arxiv.org/abs/1411.7273}.

\bibitem{ESA}
R.~Ben-Avraham, M.~Henze, O.~E. Raz, R.~Jaume, M.~Sharir, B.~Keszegh, and
  I.~Tubis.
\newblock {Minimum Partial-Matching and Hausdorff {RMS}-Distance under
  Translation: Combinatorics and Algorithms}.
\newblock In A.~S. Schulz and D.~Wagner, editors, {\em Proc. 22nd Annual
  European Symposium on Algorithms}, volume 8737 of {\em LNCS}, pages 100--111,
  2014.

\bibitem{Berge}
C.~Berge.
\newblock {Two theorems in graph theory}.
\newblock In {\em {Proc.~of the National Academy of Sciences of the United
  States of America}}, volume~43, pages 842--844, 1957.

\bibitem{BurkardBook}
R.~E. Burkard, M.~Dell'Amico, and S.~Martello.
\newblock {\em {Assignment Problems}}.
\newblock Society for Industrial and Applied Mathematics, 2009.

\bibitem{Burkard}
R.~E. Burkard and F.~Rendl.
\newblock {Lexicographic bottleneck problems}.
\newblock {\em Operations Research Letters}, 10(5):303--308, 1991.

\bibitem{Chan}
T.~M. Chan.
\newblock {Random sampling, halfspace range reporting, and construction of
  $({\leq}k)$-levels in three dimensions}.
\newblock {\em {SIAM Journal on Computing}}, pages 586--595, 1999.

\bibitem{ChazelleGuibas}
B.~Chazelle, L.~Guibas, and D.~Lee.
\newblock {The power of geometric duality}.
\newblock {\em BIT Numerical Mathematics}, 25(1):76--90, 1985.

\bibitem{Berg}
M.~de~Berg, O.~Cheong, M.~van Kreveld, and M.~Overmars.
\newblock {\em {Computational Geometry: Algorithms and Applications}}.
\newblock Springer-Verlag TELOS, Santa Clara, CA, USA, 3rd ed. edition, 2008.

\bibitem{Croce}
F.~Della~Croce, V.~T. Paschos, and A.~Tsoukias.
\newblock {An improved general procedure for lexicographic bottleneck
  problems}.
\newblock {\em Operations Research Letters}, 24(4):187--194, 1999.

\bibitem{Edelsbrunner}
H.~Edelsbrunner.
\newblock {\em {Algorithms in Combinatorial Geometry}}, volume~10 of {\em
  {EATCS Monographs on Theoretical Computer Science}}.
\newblock Springer-Verlag, Berlin, Heidelberg, 1987.

\bibitem{EdelsbrunnerGuibas}
H.~Edelsbrunner and L.~J. Guibas.
\newblock {Topologically sweeping an arrangement}.
\newblock {\em Journal of Computer and System Sciences}, 38(1):165--194, 1989.

\bibitem{Efrat}
A.~Efrat, A.~Itai, and M.~J. Katz.
\newblock {Geometry helps in bottleneck matching and related problems}.
\newblock {\em Algorithmica}, 31(1):1--28, 2001.

\bibitem{Gabow2}
H.~N. Gabow and R.~E. Tarjan.
\newblock {Algorithms for two bottleneck optimization problems}.
\newblock {\em Journal of Algorithms}, 9(3):411--417, Sept. 1988.

\bibitem{ISAAC}
M.~Henze and R.~Jaume.
\newblock {Bottleneck Partial-Matching Voronoi Diagrams and Applications}.
\newblock In {\em Proc. 25th International Symposium on Algorithms and
  Computation}, LNCS, 2014.

\bibitem{Henze}
M.~Henze, R.~Jaume, and B.~Keszegh.
\newblock {On the complexity of the partial least-squares matching Voronoi
  diagram}.
\newblock In {\em {Proc.~29th European Workshop on Computational Geometry}},
  pages 193--196, 2013.

\bibitem{HK}
J.~E. Hopcroft and R.~M. Karp.
\newblock {An $n^{5/2}$ algorithm for maximum matchings in bipartite graphs}.
\newblock {\em SIAM Journal on Computing}, 2(4):225--231, 1973.

\bibitem{Punnen}
A.~P. Punnen and K.~Nair.
\newblock {Improved complexity bound for the maximum cardinality bottleneck
  bipartite matching problem}.
\newblock {\em Discrete Applied Mathematics}, 55(1):91--93, 1994.

\bibitem{Rote2010}
G.~Rote.
\newblock {Partial least-squares point matching under translations}.
\newblock In {\em {Proc.~26th European Workshop on Computational Geometry}},
  pages 249--251, 2010.

\bibitem{Sokkalingam}
P.~T. Sokkalingam and Y.~P. Aneja.
\newblock {Lexicographic bottleneck combinatorial problems}.
\newblock {\em Operations Research Letters}, 23(1--2):27--33, 1998.

\end{thebibliography}

\appendix

% # # # # # # # # # # # # # # # # # # # # # # # # # # # # # # #
\section{\texorpdfstring{$k$}{k}-Sensitive Analysis of Previous Algorithms}
\label{appendix}
% # # # # # # # # # # # # # # # # # # # # # # # # # # # # # # #

The following $k$-sensitive analysis of previous algorithms for the bottleneck partial-matching problem under translation is not intended to be a complete account of all involved details.
Rather we focus on the crucial ideas and adjustments needed in order to make these algorithms work in the general situation.

\subsection{The Algorithm of Alt, Mehlhorn, Wagener \& Welzl~\texorpdfstring{\cite{Alt}}{}}

The first algorithm was introduced in~\cite{Alt}, where the authors prove a running time of $O(n^6 \log n)$ for the balanced situation.
Their approach consists of two steps and can be adjusted toward the unbalanced case as follows.
First, for every choice of $a_1,a_2,a_3 \in A$ and $b_1,b_2,b_3 \in B$ they define $\varepsilon(b_1,a_1,b_2,a_2,b_3,a_3)$ to be the minimum $\varepsilon \in \R$ such that there is a translation placing~$b_i$ into an $\varepsilon$-neighborhood of $a_i$, for all $i=1,2,3$. 
They claim that the bottleneck distance under translations is attained by one of these $O(n^6)$ values, and they compute every such value in constant time. 
In the unbalanced case, the number of values and the time for its computation is then $O(n^3k^3)$. 
They sort these values into an array $\Ep$ and perform a binary search, testing for every $\varepsilon \in \Ep$ whether there is a bottleneck matching under translations having cost~$\varepsilon$.
In order to do that, they assume that $\|b+t-a \|^2=\varepsilon^2$, for a fixed pair $ab \in A \times B$, which restricts the set of candidate translations to a circle. 
They parametrize the circle by polar coordinates and compute the set of angles $\alpha(a',b')$ for which $b'$ lies in an $\varepsilon$-neighborhood of $a'$, for all $a'b' \in A \times B$. 
The computation of such intervals on the circle is not trivial and requires some careful observations that carry over to the unbalanced case without modification.
The arrangement in this circle induced by the circular intervals~$\alpha$ can then be computed by sorting their endpoints.
In every interval, the authors construct the bipartite graph whose edges are shorter than~$\varepsilon$ and they look for a maximum matching in it. 
This is done by computing the graph for an arbitrary initial point and traversing the circular arrangement, adding or deleting at each interval the corresponding edge or edges.
If edges are only added, nothing is left to be done.
If some edges are deleted leaving some points of $B$ unmatched, suitable augmenting-path computations need to be performed in order to decide whether there is a maximum matching in the next interval. 
The construction of each of the arrangements in the $O(nk)$ circles for a fixed value $\varepsilon \in \Ep$ can be done in $O(nk \log n)$.
The traversal requires $O(nk)$ time to construct the initial graph, $O(nk+k^{5/2})$ to prune non-relevant edges and obtain a maximum matching for the initial graph using the Hopcroft-Karp algorithm, and $O(nk)$ time per cell to compute the augmenting path to update the maximum matching.
The updates of the bipartite graph require constant time for each edge. 
Thus, the total time required for each $\varepsilon \in \Ep$ is $O(nk \cdot (nk \log n + nk+k^{5/2} +nk \cdot nk))=O(n^3k^3)$. 
Since sorting the values in~$\Ep$ is done in $O(n^3k^3 \log n)$ time, the whole algorithm runs in time $O(n^3 k^3 \log n)$. 

\subsection{The Algorithm of Efrat, Itai \& Katz~\texorpdfstring{\cite{Efrat}}{}}

An improved algorithm with a running time of $O(n^5 \log^2 n)$ in the balanced case has been described in~\cite{Efrat}.
This improvement requires the use of a non-trivial data structure and parametric search techniques combined with sorting networks. 
The data structure is used to create an oracle that, given $\varepsilon \in \Ep$ in the above notation, answers whether the bottleneck distance of~$A$ and~$B$ under translations is at most~$\varepsilon$. 
The corresponding oracle in~\cite{Alt} runs in $O(n^6)$ time and, according to the preceding analysis, can be adapted to run in $O(n^3k^3)$ in the unbalanced case. 
A gain in the computation of augmenting paths enables the authors in~\cite{Efrat} to improve this oracle to run in $O(n^5 \log n)$ time. 
This is done by utilizing a data structure based on constructing a layered graph from the matching after each possible edge deletion.
Adapting the analysis for the unbalanced case leads to an augmenting paths computation in time $O(k \log n)$ instead of the $O(nk)$ by standard techniques.
Therefore, the oracle runs in $O(n^2 k^3 \log n )$ time. 
However, the time to sort the $O(n^3 k^3)$ values into~$\Ep$ dominates the running time and thus prevents the new algorithm to improve upon the simpler one in~\cite{Alt}. 
This problem is solved by using, instead of the binary search, an adaptation of the parametric search technique due to Cole in order to avoid the construction of~$\Ep$, and hence reducing the number of oracle calls to $O(\log n)$. 
Therefore, the final running time for the general unbalanced situation is $O(n^2 k^3 \log^2 n)$.

\end{document}